
\documentclass[10pt,journal,compsoc]{IEEEtran}
\usepackage{amsmath,amssymb,amsfonts}
\usepackage{graphicx}
\usepackage{textcomp}
\usepackage{float}
\usepackage{amsmath,amssymb,amsfonts}
\usepackage{graphicx}
\usepackage{textcomp}
\usepackage{booktabs} 
\usepackage{array}
\usepackage{url}
\usepackage{subfigure}
\usepackage{algorithmic}
\usepackage[linesnumbered,ruled,vlined]{algorithm2e}
\usepackage{changepage}
\usepackage{amsthm}
\usepackage{balance}
\usepackage{makecell}
\usepackage{mathptmx}
\usepackage{mathrsfs}
\usepackage{caption}
\usepackage{amsmath}

\allowdisplaybreaks[4]

%


%

%
\ifCLASSOPTIONcompsoc
  \usepackage[nocompress]{cite}
\else
  \usepackage{cite}
\fi
%

%
\ifCLASSINFOpdf
\else
\fi
\SetKwProg{Upon}{upon}{ do}{end}
\SetKwProg{Procedure}{procedure}{do}{}
\newtheorem{lemma}{\rm\textbf{Lemma}}

\newtheorem{definition}{\rm\textbf{Definition}}
\newtheorem{claim}{\rm\textbf{Claim}}
\hyphenation{op-tical net-works semi-conduc-tor}

\begin{document}
%
\title{LinSBFT: Linear-Communication One-Step BFT Protocol for Public Blockchains}

\author{
    \IEEEauthorblockN{Xiaodong Qi$^1$,Yin Yang$^2$, Zhao Zhang$^1$, Cheqing Jin$^1$, Aoying Zhou$^1$} \\
    \IEEEauthorblockA{ $^1$East China Normal University,$^2$Hamad Bin Khalifa University} 
    xdqi@stu.ecnu.edu.cn,yyang@hbku.edu.qa,\{zhzhang,cqjin,ayzhou\}@dase.ecnu.edu.cn
}

\IEEEtitleabstractindextext{%
\begin{abstract}
This paper presents LinSBFT, a Byzantine Fault Tolerance (BFT) protocol with the capacity of processing over 2000 smart contract transactions per second in production. LinSBFT applies to a permissionless, public blockchain system, in which there is no public-key infrastructure, based on the classic PBFT with 4 improvements: (\romannumeral1) LinSBFT achieves $O(n)$ worst-case communication volume, in contract to PBFT's $O(n^4)$; (\romannumeral2) LinSBFT rotates the leader of protocol randomly to reduce the risk of denial-of-service attacks on  leader; and (\romannumeral3) each run of LinSBFT finalizes one block, which is robust against participants that are honest in one run of the protocol, and dishonest in another, and the set of participants is dynamic, which is update periodically. (\romannumeral4) LinSBFT helps the delayed nodes to catch up via a synchronization mechanism to promise the liveness.
Further, in the ordinary case, LinSBFT involves only a single round of voting instead of two in PBFT, which reduces both communication overhead and confirmation time, and employs the \emph{proof-of-stake} scheme to reward all participants.
Extensive experiments using data obtained from the Ethereum demonstrate that LinSBFT consistently and significantly outperforms existing in-production BFT protocols for blockchains. 
\end{abstract}

\begin{IEEEkeywords}
Byzantine Fault tolerance, linear communication, random leader rotation, dynamic participant set
\end{IEEEkeywords}}

\maketitle

\IEEEdisplaynontitleabstractindextext

%
\IEEEpeerreviewmaketitle

\section{Introduction}
\IEEEPARstart{T}he blockchain technology, pioneered by Bitcoin \cite{GOOGLE:nakamoto2008bitcoin}, promises to revolutionize finance with a secure, decentralized and trustless protocol for processing transactions, which include money transfers and smart contracts. Many blockchain systems nowadays, however, suffer from poor scalability and slow confirmations, and consume vast amounts of energy \cite{GOOGLE:MaloneBitcoin}. At the heart of the problem is the widely-used proof-of-work (PoW) census mechanism, in which special power nodes, called miners, compete to solve cryptographic puzzles in order to gain the privilege of confirming transactions. Aside from scalability, latency and sustainability issues, PoW inevitably \emph{forks} 
\cite{DBLP:conf/sosp/Alogrand}; consequently, a confirmed transaction can still be reversed, though with diminishing probability as more confirmations arrive. In practice, applications usually wait for multiple confirmations, exacerbating the latency problem. 

Byzantine Fault Tolerance (BFT) protocols promise to solve the problems of PoW, since BFT involves negligible computations and guarantees no fork. However, classic BFT protocols scale poorly with the number of nodes in the network, due to their enormous communication cost. For example, PBFT, a textbook protocol, incurs worst-case $O(n^4)$ total transmissions for $n$ nodes \cite{DBLP:journals/corr/abs-1803-05069}. Consequently, most deployed BFT-based blockchains  support very few participants of the protocol (e.g., 21 in \cite{GOOGLE:EOS}), which can be elected delegates \cite{GOOGLE:EOS}, PoW winners \cite{DBLP:conf/sp/Kokoris-KogiasJ18}, or a random sample set obtained through cryptographic sortition \cite{GOOGLE:Dfinity, DBLP:conf/sosp/Alogrand}. As discussed in \cite{GOOGLE:Dfinity, DBLP:conf/sosp/Alogrand}, having fixed delegates defeats decentralization, and PoW introduces uncertainty in the participant set due to forks. Sortition-based BFT, on the other hand, only provides probabilistic guarantees on safety (i.e., no fork); further, its probability of failure is only small enough with a large sample (hundreds at least \cite{GOOGLE:Dfinity}), which might be already beyond the capacity of in-production BFT chains (e.g., 21 nodes in EOS \cite{GOOGLE:EOS}, and up to 16 in an earlier Hyperledger Fabric according to \cite{DBLP:conf/sigmod/BlockBench}). 

The state-of-the-art BFT protocol with deterministic safety guarantee is HotStuff (HS) \cite{DBLP:journals/corr/abs-1803-05069}, whose communication cost is $O(n)$ for each ``level", which roughly corresponds to a block. HS reduces one phase of voting in each level by pipeline of voting phases through multiple levels. In particular, an execution of the HS protocol covers multiple blocks, which amortizes costs. Each time a ``beacon'' proposes a block to nodes, who vote to finalize the block. HS improves the  worst  case communication  complexity  to $O(n^2)$,  using a combination of linear view change and threshold signatures.  While HS is attractive in theory, it is difficult to apply it to a public, permissionless blockchain, for four reasons. \textbf{\texttt{First}}, the beacon has too much power: it decides which transactions to include in the next block proposal and aggregates votes sent by others into a commit certificate (CC). In \cite{DBLP:journals/corr/abs-1803-05069}, the authors suggest using PoW to implement the beacon, which leads to forks and contention, complicating system design. Besides,  a malicious beacon may send CC to part of participants selectively resulting in partition of all participants. Consequently different nodes may run at different block height, which influences the liveness of protocol. \textbf{\texttt{Second}}, there can be a cascading sequence of $f$ fault beacons, leading to $f+1=O(n)$ levels in HS. If HS changes the beacon in round robin manner instead of PoW which avoids forks, adversary can attack HS with  consecutive $f$ beacons easily. This will increase the communication cost for a block and latency of transactions sent by users. \textbf{\texttt{Third}}, HS depends on the existence of a centralized public-key infrastructure (PKI) to generate keys for threshold signature, which is not practical in a permissionless, public blockchain setting. Besides, HS lacks a sufficient incentive mechanism in open setting. If the rewards are owned by the leader, others may attack the leader  deliberately to gain more benefits.

\textbf{\texttt{Last}}, the protocol spans over multiple blocks, which is a problem, because to guarantee safety, BFT requires that at least $2/3n$ nodes be honest throughout the protocol. In reality, a node could be honest for one block, and dishonest for another; further, in a public chain the set of nodes can also change. For example, there are four participants $A$, $B$, $C$ and $D$, among which $A$, $B$ and $C$ are honest and $D$ is malicious at the beginning. The malicious participant $D$ can help $A$ and $B$ to construct a quorum, which can involve two phases of voting for a block $X$. The proposal for block $X$ is not received by participant $C$.  Assume that just $A$ commits block $X$, and $B$ does not receive the CC from the beacon due to the asynchrony of network.  At this time, $B$ becomes malicious and $D$ becomes honest, where there is only a malicious participant in HS as well. Then $B$ can vote for any block $Y$ conflicting with $X$, and participant $B$, $C$ and $D$ construct another quorum which can commit block $Y$. However, honest participant $A$ and $C$ commit conflict blocks, which compromises the safety of protocol. 

This paper proposes LinSBFT, which achieves amortized $O(n)$ total transmissions under deterministic safety, involves no PoW module, in a public setting. Meanwhile, LinSBFT does not require a public-key infrastructure (PKI), and is compatible to the \emph{Proof-of-Stake} (\emph{PoS}) scheme commonly used in public blockchains, which defends against \emph{Sybil} attacks. Specifically, LinSBFT is based on PBFT with four key improvements as follows.  

\begin{itemize}
    \item { LinSBFT reduces communication costs of consensus for each block height with three key techniques, each by $O(n)$: linear view change \cite{DBLP:journals/corr/abs-1803-05069}, threshold signatures \cite{GOOGLE:Dfinity} and leader selection via verifiable random functions (VRFs) \cite{DBLP:conf/sosp/Alogrand}.
    }
    \item {The proposed protocol reduces the risk of denial-of-service (DoS) attacks on the leader in public setting by changing leader randomly(i.e., block proposer) for every block. With random leader rotation,
    the adversary is infeasible to predict next leader in advance.
    }
    \item{LinSBFT is against nodes with changing honesty as well as a dynamic node set. LinSBFT guarantees the safety of protocol by an novel locking mechanism even though the honesty of a participant is allowed to change from block to block. Instead of static participant set, LinSBFT allows nodes to join and leave the protocol periodically, where the time is divided into epochs and participant set is update at the beginning of each epoch.
    }
    \item{To deal with situation that malicious leader partitions all nodes, LinSBFT designs a synchronization mechanism to help delayed nodes catch up without increasing the communication complexity, which promises the liveness of protocol. 
    }
\end{itemize}

In addition, similar to HS,  in the ordinary case with a non-faulty leader and synchronous network, LinSBFT reduces communication costs and block confirmation time by piggybacking the ``Commit'' vote for the previous block onto the ``Prepare'' vote for the current block, which we elaborate in Section \ref{sec:LinSBFT}. The major contributions are summarized below:

\begin{enumerate}
    \item {LinSBFT achieves $O(n)$ worst-case communication volume, with three key techniques: linear view change, threshold signatures and leader selection via VRFs, which avoids $f$ leader failure and DoS attacks of leader.}
	\item{Each run of LinSBFT finalizes one block with strict safety (no fork), which is robust against changeable honesty and dynamic node set. Besides, a synchronization mechanism is applied to help delayed participant to catch up  for the liveness.}
	\item{We give a formal proof of the correctness of safety, liveness and linear complexity for LinSBFT.}
	\item {An implementation of LinSBFT  and extensive experiments, based on real data from the Ethereum, demonstrate that LinSBFT consistently and significantly outperforms existing in-production blockchain BFT protocols.}
\end{enumerate}

The rest of this paper is structured as follows. Section \ref{sec:problem} provides necessary background and explains the problem setting. Section \ref{sec:LinSBFT} presents the major components of LinSBFT, and Section \ref{sec:protocol} details the complete protocol. Section \ref{subsec:correctness} proves the security and performance guarantees of LinSBFT. Section \ref{sec:evaulation} contains a thorough set of experiments. Section \ref{sec:rw} reviews related work.  Finally, Section \ref{sec:conclusion} concludes the paper with directions for future work.

\section{PRELIMINARIES and PROBLEM SETTING}\label{sec:problem}
\subsection{Preliminaries} \label{subsec:preliminaries}
\textbf{Threshold signature}. An $(n, t)$-$ts$ \emph{threshold signature} on a message $m$ is a single, constant-sized aggregate signature that passes verification if and only if at least $t$ out of the $n$ participants sign $m$. Note that the verifier  does not need to know the identities of the $t$ signers.
Without a threshold signature scheme, the verifier has to either receive and verify $t$ individual signatures, which requires $O(n)$ transmissions when $t=O(n)$. Threshold signature brings down this cost to $O(1)$. LinSBFT employs a popular implementation of threshold signature based on the BLS signature scheme \cite{GOOGLE:SurveySignAggr}. However, \emph{threshold signature} is not free lunch, a fact that is sometimes ignored in the literature. In particular, a threshold signature scheme requires special, correlated public/private key pairs. Generating such key pairs in a decentralized setting requires a distributed key generation (DKG) protocol, which is communication-heavy. For example, the Joint-Feldman algorithm \cite{DBLP:conf/eurocrypt/GennaroJKR99}, e.g., used in Dfinity \cite{GOOGLE:Dfinity}, incurs $O(n^3)$ network transmissions to broadcast the coefficients of an order-$t$ polynomial when $t=2f+1=O(n)$. 
LinSBFT employs the DKG solution in \cite{DBLP:conf/eurocrypt/CannyS04}, which requires $O(n\, polylog\, n)$ communications, and provides a probabilistic guarantee on the correctness of the generated key pairs(i.e., threshold signatures can be successfully created with these keys), where the probability of failure can be made arbitrarily small, e.g., below  $10^{-18}$.

\textbf{Verifiable hash function $(VRF)$}. A $VRF$ is a pseudo-random generator whose output is verifiable (i.e., on whether a given number is indeed the output of the VRF), random, uniformly distributed, and unpredictable beforehand. 
A simple $VRF$ (e.g., used in Algorand \cite{DBLP:conf/sosp/Alogrand}) under the \emph{random oracle} model (i.e., there exists an \emph{ideal hash function} $H$ whose outputs are random and uniform), is $H(s)$, where $H$ is the ideal hash function uniqueness\footnote{In practice, $H$ can be approximated with a cryptographical one, such as SHA-3 \cite{GOOGEL:SHA3}.} and $s$ is a signature that satisfies, i.e., there is a unique signature for a given message and a private key. The BLS signature scheme \cite{GOOGLE:SurveySignAggr}, for instance, satisfies this property. The output of $H(s)$ is clearly random and uniform, due to the random oracle assumption. Meanwhile, given the source message $m$ (from which $s$ is obtained) and $s$'s corresponding public key, one can verify that a given value is indeed the result of $H(s)$. Further, the function's output is unpredictable beforehand, since it is infeasible to obtain $s$ without knowledge of its secret key. 



\textbf{PBFT in blockchain}. A classic BFT protocol is PBFT \cite{DBLP:conf/osdi/PBFT}. Assume that there are $n \ge 3f+1$ nodes in total, among which $f=\lfloor\frac{n-1}{3}\rfloor$ are malicious. PBFT involves three steps to reach consensus, \emph{Pre-Prepare}, \emph{Prepare} and \emph{Commit} in the ordinary case that the leader is not faulty, and the network is synchronous. PBFT can be used to obtain consensus on a block in a blockchain setting (e.g., implemented in \cite{buchman2016tendermint}), as follows.  The protocol involves one or more rounds, each of which has a leader, which can be chosen, e.g., in a round-robin manner. The \emph{leader} in the $v$-th round at height $l$ (denoted by $L_{l,v}$) proposes a block $B_{l,v}$, and broadcasts its hash value $H_{l,v}=H(B_{l,v})$ in a \emph{Pre-Prepare} message. Upon receiving such a \emph{Pre-Prepare}, a node responds by broadcasting \emph{Prepare} messages on $H_{l,v}$. Once a node receives $n-f$ \emph{Prepare}'s on $H_{l,v}$, it assembles them into a \emph{prepared certificate} $(PC)$, and broadcasts \emph{Commit} messages about the $PC$. A node who receives $n-f$ \emph{Commit} on the $PC$ of $H_{l,v}$ assembles them into a \emph{committed certificate} $CC$ and is ready to finalize block $B_{l,v}$, after verifying all transactions in the block. 

When consensus cannot be reached in a round $v$ at height $l$ within a given timespan, the protocol enters a new round $v+1$ with a different leader $L_{l,v+1}$, which is called a \emph{view change}. Possible causes for a view change include a malicious leader, or message losses due to network failures. The original PBFT protocol involves $O(n^3)$ transmissions per view change. In the worst case, there can be $f=O(n)$ faulty leaders, leading to a total transmission cost of $O(n^4)$. To ensure the safety of protocol i.e., honest nodes never finalize conflicting blocks at same height, PBFT requires all nodes stay static where a honest node cannot become malicious. This assumption is unrealistic in an open setting since each node always chases the maximization of benefits by various means. Besides,  PBFT changes the leader (i.e., block proposer) in a round robin manner, which suffers from denial-of-service attacks on the leader. This attack may compromise the liveness of protocol.

\textbf{Linear View Change}. A recent algorithm called \emph{linear view change} $(LVC)$ reduces the cost to $O(n^2)$ \cite{DBLP:journals/corr/abs-1803-05069}. At the beginning of a new round $v+1$, each node sends a \emph{NewView} message to $L_{l,v+1}$, along with the prepared certificate $PC$. $L_{l,v+1}$ then broadcasts the $PC$ with the highest round number among all collected $PC$s, along with the hash $H_{l, v+1}$ of its proposed block $B_{l,v+1}$. Since a $PC$ contains $n-f$ \emph{Prepare} messages, it has size $O(n)$; thus, collecting and broadcasting a $PC$ cost $O(n^2)$ transmissions. Note that each node does not need to send  $PC$s for previous blocks to $L_{l,v+1}$ because all blocks are chained by hash values in block headers and the $PC$ for the latest block can represent the entire blockchain.

Via threshold signature, the size of certificates $PC$ and $CC$ can be reduced to constant. 
PBFT with linear communication applies a standard trick (e.g., used in \cite{DBLP:journals/corr/SBFT}\cite{DBLP:journals/tocs/Zyzzyva}) with a \emph{collector} that collects and aggregates \emph{Prepare} and \emph{Commit} votes from all validators. A natural choice for the collector is the leader $L_{l,v}$. In the ordinary case, leader $L_{1,v}$ collects \emph{Prepare} votes  from all validators on $H_{1,v}$. Then, $L_{1,v}$ derives an $(n, t =n-f)$-threshold signature $ts(H_{l,v})$ on $H_{l,v}$ from these \emph{Prepare} messages, and creates a prepared certificate $PC_{l,v}$ containing $ts(H_{l,v})$. After that, $L_{l,v}$ broadcasts $PC_{l,v}$ to all validators, each of which responds with a \emph{Commit} message with its signature on $PC_{l,v}$. Again, the  $L_{l,v}$ plays the roles to collect \emph{Commit} messages and derives an $(n, n-f)$-$ts$ threshold signature $ts(PC_{l,v})$, and makes a committed certificate $CC_{l,v}$ including $ts(PC_{l,v})$. Then $L_{l,v}$ broadcasts $CC_{l,v}$ to all validators. Once a validator receives $CC_{l,v}$, it finalizes block $B_{l,v}$. This completes the protocol.

\subsection{Problem Settings}
\textbf{Adversary model}. Different from PBFT,  out of the $n$ participants of LinSBFT numbered from 0 to $n-1$, at most $f$ (such as $n\ge 4f+1$) can be malicious for each execution of the protocol, which can misbehave in arbitrary ways. The remaining participants are honest, who strictly follow the protocol.  Since the malicious nodes can collude, one can view them as corrupted and controlled by a single mastermind, referred to as \emph{the adversary}. 
Instead, LinSBFT assumes that the adversary is static and rushing (i.e., it has to choose the $f$ nodes to corrupt before a protocol run), rather than adaptive (which can instantly compromise any node at any time). Further, the adversary is assumed to take some time (a constant) to compromise nodes, which is detailed further in Section \ref{subsec:viewchange}.

Since the blockchain setting contains an infinite number of transactions split into blocks, it is important to clarify the timespan over which $f$ is defined. In LinSBFT, the honesty of a participant is allowed to change from block to block, e.g., an honest participant may become malicious when it encounters a transaction that strongly motivates it to cheat. Accordingly, over multiple blocks, it is possible that every participant is malicious at some point, and yet the system remains secure as long as $n\geq4f+1$ holds for each block. Meanwhile, the changing honesty assumption makes it necessary to place the condition that the adversary cannot carry over knowledge of private keys of malicious nodes from one execution of the protocol to another. 
Compared to PBFT, the adversary model fo LinSBFT is more realistic for public blockchains.
A BFT protocol must satisfy safety (i.e., no fork) and liveness (the protocol eventually terminates) when the number of malicious nodes is less than a super-majority of all participants. Specifically, in a partially synchronous network, LinSBFT satisfies both safety and liveness deterministically, with zero chance of failure. Further, LinSBFT also satisfies deterministic safety even when the network is asynchronous, i.e., it never forks.

\textbf{Communication model}. Following common practice in the literature, LinSBFT assumes that the network is partially synchronous \cite{DBLP:journals/jacm/DworkLS88}, i.e., after an unknown future \emph{Global Stabilization Time} (GST), any message between two honest nodes is delivered within $\varDelta$ time. This model also captures the more common situation that periods of synchrony and asynchrony interleave, and there are sufficiently long periods of synchrony that allow the protocol to finish \cite{DBLP:journals/corr/abs-1803-05069}. It is worth pointing out that the parameter $\varDelta$ above does not take into account the network topology, or the total amount of network traffic. For instance, an all-to-all broadcast, which clearly involves $\varOmega(n^2)$ messages, can be said to take $O(n)$ time, since every node sends/receives $n-1$ messages, each of which takes constant time, i.e., up to $\varDelta$. This analysis is not valid when the network has a saturated critical link, e.g., when half of the nodes reside in America and the other half in China. In this case, since $\varOmega(n^2)$ traffic pass through this critical link with limited bandwidth, the total time is no less than $\varOmega(n^2)$. For this reason, LinSBFT focuses on minimizing communication volume rather than time.


\section{LinSBFT} \label{sec:LinSBFT}

\subsection{Overview} \label{subsec:overview}
A blockchain consists of blocks linked through the \emph{PreHash} attribute at each block, which is the hash value of the previous block, called its \emph{parent}. For a given block, the set of all previous blocks are its \emph{ancestors}. This linked structure indicates that when a node votes for a particular block, it implicitly votes for all its ancestors \cite{DBLP:journals/corr/abs-1803-05069, DBLP:journals/Casper} as well. In the following, we use the terms ``node'' and ``validator'' interchangeably. At the beginning, we give an overview of LinSBFT for better understanding.

\textbf{Ordinary case.} In the ordinary case, LinSBFT finishes in a single round at each block height, in two steps: \emph{Propose} and \emph{Vote}. Validators only vote once in the \emph{Vote} step, and there is no explicit \emph{Commit} steps as in PBFT (Section \ref{subsec:preliminaries}). LinSBFT finishes consensus height by height, and in each block height, it may involve multiple rounds to reach consensus among validators. Specifically, let ${R}_{l,v}$ represent the $v$-th round at height $l$. In the \emph{Propose} step, the leader for round ${R}_{l,0}$, denoted as $L_{l, 0}$,
chooses a batch of unconfirmed transactions from its local transaction pool to compose a block $B_{l,0}$, and broadcasts a signed proposal  $P_{l,0}$. Upon receiving $P_{l,0}$, each validator enters round ${R}_{l,0}$. Then, in the \emph{Vote} step, each validator sends vote $V_{l,0}$  for the proposed block to a collector $C_{l,0}$, which is also the leader for the next block height, i.e., $C_{l,0} = L_{l+1,0}$. The collector node, upon receiving votes from at least $n-f$ validators, derives an $(n, n-f)$-$ts$ threshold signature $ts(H_{l,0})$, and creates a vote certificate $Cert_{l,0}$ as follow:
$$Cert_{l,v}=\left<l,v,H_{l,v},ts(H_{l,v})\right>$$
where $v$ is the round number, which is 0 in our description so far.

\begin{figure}[t]
	\setlength{\abovecaptionskip}{0.1cm}
	\center{\includegraphics[width=8.5cm]  {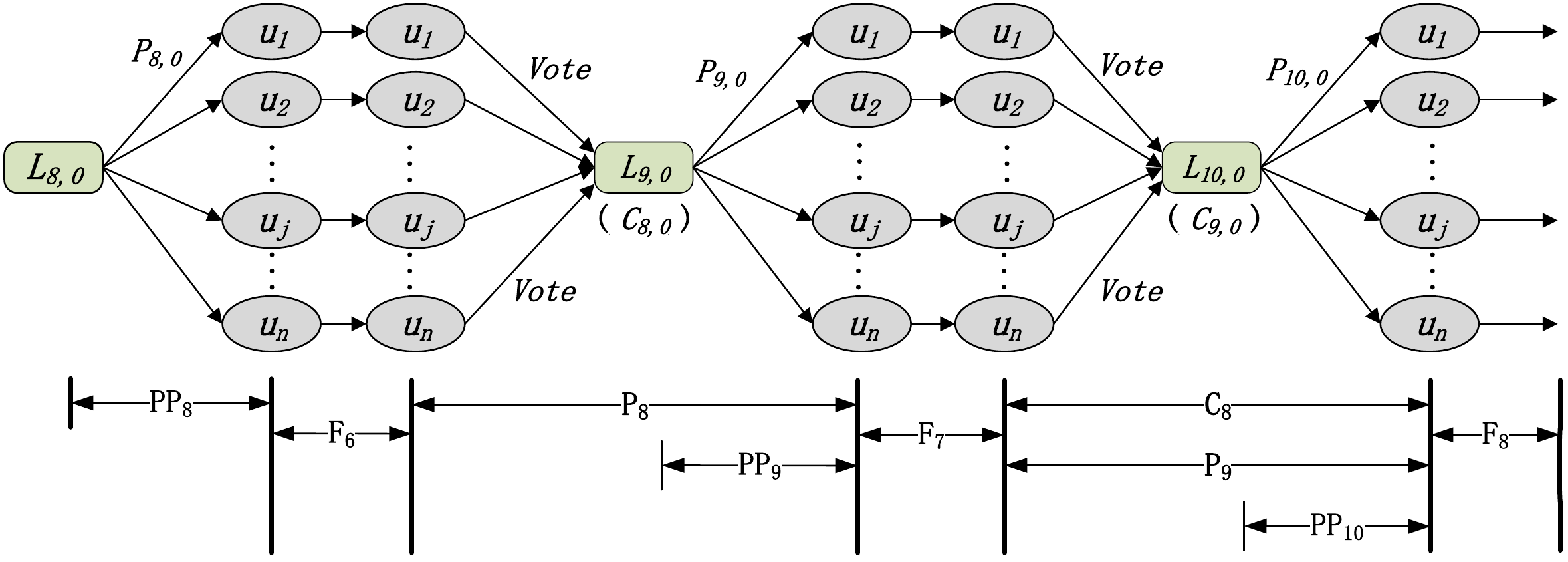}}        
	\caption{\label{fig:LinSBFT} Example of the ordinary case of LinSBFT. For each block height $l$, its first selected leader $L_{l, 0}$ broadcasts \emph{Propose} messages, which are voted by the validators. These votes are collected by the collector $C_{l, 0} = L_{l+1, 0}$. The equivalent \emph{Pre-prepare}, \emph{Prepare} and \emph{Commit} phases in PBFT are annotated as $PP_l$, $P_l$ and $C_l$, respectively. $F_l$ (and the corresponding arrows between nodes to themselves) indicates that validators \emph{Finalize} the block at height $l$.}
\end{figure}

The size of the above vote certificate is constant, since a threshold signature is of constant size (refer to Section \ref{subsec:preliminaries}). After that, $C_{l,0}$ (i.e., $L_{l+1,0}$) enters the \emph{Propose} step for height $l+1$, and broadcasts $P_{l+1,0}$ containing $Cert_{l,0}$. Each validator verifies the correctness of $P_{l+1,0}$ after receiving it from $C_{l,0}$. Similarly, each validator enters round ${R}_{l+1,0}$ and sends vote $V_{l+1,0}$ for $P_{l+1,0}$ to collector $C_{l+1, 0}$ in \emph{Vote} step. Collector $C_{l+1, 0}$ deals with these votes similarly as $C_{l,0}$ at block height $l$, and broadcasts the proposal $P_{l+2, 0}$ for the next block height $l+2$. Upon receiving $P_{l+2, 0}$ from $C_{l+1,0}$, each validator finalizes $B_{l,0}$ after verifying $Cert_{l+1,0}$. 

We emphasize that the process to finalize a block, e.g., $B_{l, 0}$, still requires 2 phases of voting. Compared to PBFT, LinSBFT does not simply eliminates a round of voting; instead, it \emph{pipelines} the voting phases for adjacent blocks. For instance, in Figure \ref{fig:LinSBFT}, for block $B_{8,0}$, the \emph{Vote} step at height $8$ plus the \emph{Propose} step at height $9$ is equivalent to the \emph{Prepare} step in PBFT, and the \emph{Vote} step at height $9$ plus the \emph{Propose} step at height $10$ is equivalent to \emph{Commit} phase in PBFT. A validator finalizes $B_{8,0}$ when it receives vote certificate $Cert_{9,0}$ along with $P_{10,0}$. Essentially, a vote in LinSBFT signifies both the \emph{prepare-vote} for current height and the \emph{commit-vote} for the last height. Besides, the collector of each round varies, which is determined by VRF detailed in next subsection.

\textbf{View change.} For each round ${R}_{l,v}$ of the protocol, there is unique collector $C_{l, v}$, which is determined by a VRF, explained later. If validators fail to reach consensus for a proposal, the view change subprotocol is triggered, and validators move to a new round. A major challenge in LinSBFT is that validators may run at different block heights, e.g., due to network partitioning or malicious collectors/leaders. For example, in Figure \ref{fig:partition}, there are two consecutive faulty leaders $L_{9,0}=u_2$ and $L_{10, 0}=u_3$. The former (i.e., $u_2$) sends out proposal $P_{9,0}$ only to nodes $u_1$-$u_{3f+1}$, but not the rest. Similarly, $u_3$ selectively sends out proposal $P_{10, 0}$ to nodes $u_{f+1}$-$u_{3f+1}$. Consequently, different nodes now run at different heights, e.g., node $u_n$ is still at height 8 as it has not received $P_{9, 0}$. 

To tackle this problem, LinSBFT follows a solution similar to BFT-SMART \cite{DBLP:conf/edcc/BFT-smart}, in which each validator participates not only in the protocol for the current height $l$, but also the previous height $l-1$. In particular, all honest validators running at height $l$ keep functioning (i.e., voting and proposing) for rounds at height $l-1$. This is needed since the protocol can enter height $l$ with $n-2f$ votes; consequently, there may be up to $n-3f-1$ honest validators left behind at height $l-1$. Further, when a validator runs at height $l$, it only proposes and votes for blocks at heights $l$ and $l-1$, which is important to guarantee safety, as we show in Section \ref{subsec:lock}. Whenever a validator receive messages for a higher height (e.g., $u_n$ in Figure \ref{fig:partition}), it realizes that it has fallen behind others, and switches to \emph{synchronization mode}, in which it actively requests new blocks from its peers until it catches up with the current block height, which is detailed in Section \ref{subsec:future_vote}.

\begin{figure}[t]
	\setlength{\abovecaptionskip}{0.2cm}
	\center{\includegraphics[width=8.5cm]  {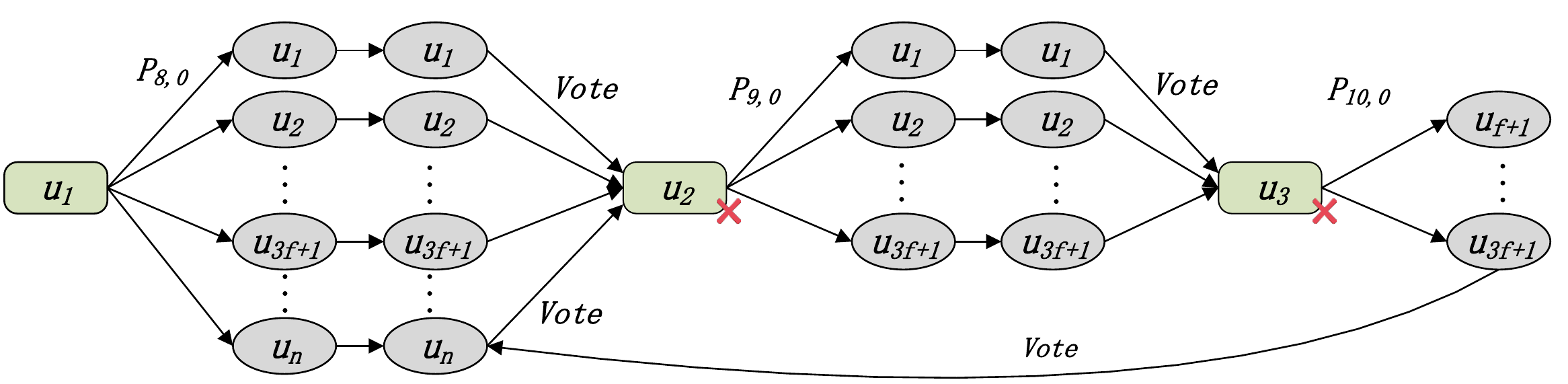}} 
	\caption{\label{fig:partition}Example of different validators at different heights, due to two consecutive faulty leaders $L_{9, 0}=u_2$ and $L_{10, 0}=u_3$, who selectively send block proposals to some but not all validators. Assume that after $u_3$, the next collector is $C_{10, 0}=u_n$. Then, $u_n$ is at height 8, but receives votes for a block at height 10.}
\end{figure}

\textbf{Messages.} A message sent by a validator is defined in form: $${message} \, m=\left<Type, Value\right>_{s} $$ in which $s$ is the sender of messages, $Type$ is the type of message which can be \emph{PROPOSE} or \emph{VOTE}, and $Value$ is actual value to be sent as shown in Equation (\ref{equation:Value}). 
\begin{subequations} \label{equation:Value}
\small
	\begin{align} 
	&P_{l, v}=\left<l, v, B_{l,v}, Cert,\sigma\left(Cert\right)\right> \label{subequation:proposal}\\
	&V_{l,v}=\left<l, v, \sigma\left(H_{l, v}\right), Cert\right> \label{subequation:vote}
	\end{align}
\end{subequations}
where $P$ represents \emph{Proposal} and $V$ denotes \emph{Vote}. Each message must be signed by the corresponding sender in order to establish authenticity. For brevity,  message signatures are omitted in the above equations.

In Equation (\ref{subequation:proposal}), $Cert$ is the highest vote certificate at the collector and  $\sigma\left(Cert\right)$ is the signature of collector $C_{l-1,v'}$ (i.e., $L_{l, v}$), which forces the collector to derive a valid threshold signature; otherwise, $\sigma$ $\left(Cert\right)$ would be the evidence of collector's misbehavior, and the collector would be slashed. A $Cert$ is the highest for a validator if and only if it has the highest height, breaking ties by round number. In Equation \ref{subequation:vote}, $\sigma\left(H_{l, v}\right)$ is the signature for proposed block hash and $Cert$ is the highest vote certificate owned by the sender as well.

\subsection{View Change} \label{subsec:viewchange}

Similar to PBFT, the view change subprotocol of LinSBFT is triggered when the nodes cannot reach consensus in a single round. This can be due to an asynchronous network (e.g., when more than $n/4$ nodes are offline), or the presence of malicious collectors/leaders. Specifically, a faulty leader may:  (\romannumeral1) propose multiple blocks or an invalid block; (\romannumeral2) remain silent indefinitely; (\romannumeral3) send valid proposal to some of the validators, but not to the rest.

For the first case, the faulty leader caught cheating is slashed. For the remaining two cases, the validators cannot distinguish whether they do not receive a proposal due to a faulty leader or network asynchrony. To ensure liveness, each validator sets a timer for every round. When the timer expires, a \emph{view change} is triggered, and the protocol enters a new round, say, round ${R}_{l,v+1}$. If a collector for round $R_{l,v}$ cannot collect $n-f$ votes within a timespan, it proposes a new block for $R_{l,v+1}$.  Similar to HotStuff, LinSBFT handles a view change with the Linear View Change (LVC) algorithm \cite{DBLP:journals/corr/abs-1803-05069}. The essence of LVC is that the leader of the next round sends its highest vote certificate instead of all vote certificates, which reduces transmission volume during a view change by a factor of $O(n)$.

\textbf{Random collector selection}. In all previous protocols based on PBFT, there can be a cascading sequence of $f$ fault leaders, leading to $f+1=O(n)$ rounds. LinSBFT avoids this situation by selecting collectors (leaders) randomly, using a VRF (refer to Section \ref{subsec:preliminaries}). With random collectors, the probability of having a sequence of malicious collectors diminishes exponentially with the length of the sequence. Specifically, with $f<n/4$ malicious validators, having a sequence of $I$ malicious collectors has probability smaller than $(1/4)^I$. Therefore, the probability that the next collector is malicious becomes negligible (i.e., smaller than a given $\rho$) after a constant number $(I >-\log_4 \rho)$ of collector changes. In practice, a common choice of $\rho$ is $10^{-18}$, whose inverse is larger than the total number of seconds since the beginning of the universe \cite{DBLP:conf/sosp/Alogrand}.  


Random collector selection requires a common source of randomness among the validators. In LinSBFT, this is provided by a VRF on the vote certificate for the previous block at height $l-2$. In particular, we have:
\begin{equation} \label{equation:collector}
\small
C_{l,v} = H\left(Cert_{l-2,v'}|v\right)\rm {mod\;} \emph{n}, 
\end{equation}
where ``$|$'' denotes concatenation, and $Cert_{l-2,v'}$ is the vote certificate at the previous block height $l-2$ held by validator $u$ since each validator only has a vote certificate for previous block height $l-2$ which is discussed in Section \ref{subsec:lock}, and $Cert_{l-2,v'}$ is the vote certificate indicating that $n-f$ participants have voted for previous block at height $l-2$. In LinSBFT, $Cert_{l-2,v'}$  is first generated by the collector of the last round at height $l-2$. Based on the assumption that the adversary takes time to corrupt validators, which equals the duration from the creation of $Cert_{l-2,v'}$ at the last collector at height $l-2$ and the beginning of the protocol at height $l$, the output of the VRF above is unpredictable to the adversary beforehand. Meanwhile, the VRF is clearly known to all validators that have finalized at height $l$, which is a necessary condition for entering the protocol for height $l$. This manner also avoids denial-of-service attacks on the collector(leader), which threatens the liveness of protocol, since the collector is unpredictable. An example is shown in Figure \ref{fig:view_change}.
\begin{figure}
	\center{\includegraphics[width=8.7cm]  {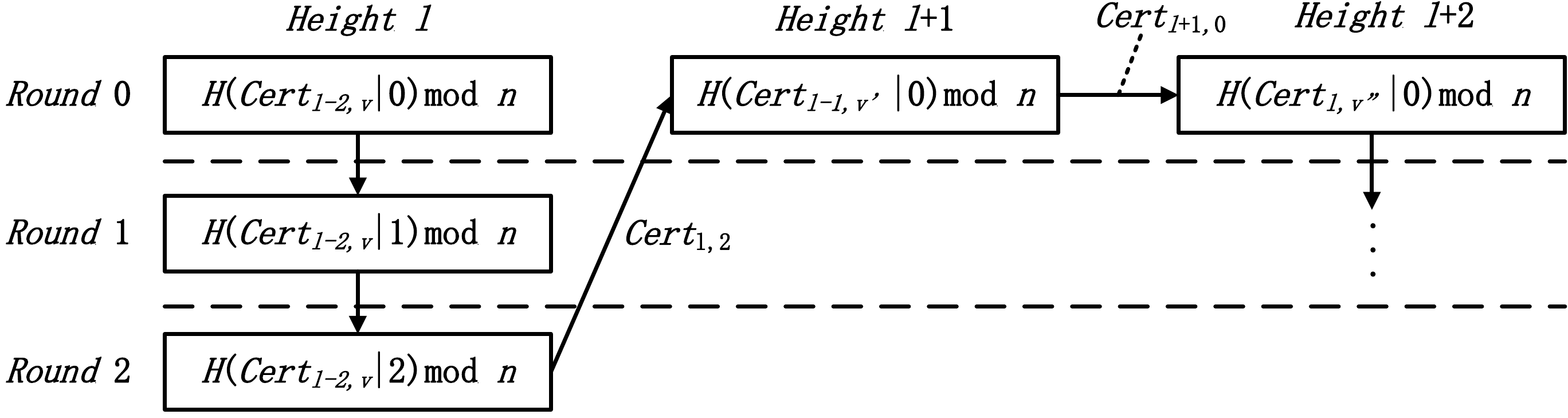}} 
	\caption{\label{fig:view_change} Example of random collector (leader) selection for three consecutive blocks.} 
\end{figure}
With random collectors, the number of view changes in a synchronized network becomes $O(1)$, unless with negligible
probability. Since each view change takes $O(n)$ transmissions, the total communication cost for all view changes is still $O(n)$, unless with negligible probability.


\subsection{Block Locking Mechanism} \label{subsec:lock}
First we explain the necessity of a block locking mechanism with an example. Consider a setting with five validators, $u_1$-$u_5$, who all voted on a block proposal $P_{l, v}$ at height $l$ and round ${R}_{l,v}$. The collector for this round then assembles a vote certificate $Cert_{l, v}$ and broadcasts it. Now, suppose that $u_1$ fails to receive $Cert_{l, v}$, and the remaining validators $u_2$-$u_5$ receive it and continue to vote on another proposal $P_{l+1, 0}$ whose block is linked to the one in $P_{l, v}$. Suppose that $u_2$-$u_5$ all vote on $P_{l+1, 0}$, leading to vote certificate $Cert_{l+1, 0}$. This time, $u_5$ is the only one who receives $Cert_{l+1, 0}$, whereas $u_2$, $u_3$ and $v_4$ fail to receive it. Then, according to the protocol, $u_5$ finalizes the block contained in $P_{l, v}$. The other three users, i.e., $u_1$-$u_4$, eventually time out for round ${R}_{l, v}$, and enters a new round ${R}_{l,v+1}$. Since $u_1$-$u_4$ form a super-majority of the 5 validators, they could reach consensus on a different block than the one in $P_{l, v}$, leading to a fork as $u_5$ has already finalized $P_{l, v}$. In LinSBFT, this problem is addressed using a locking mechanism. Note that locking in LinSBFT is necessarily more complicated than other PBFT-based protocols, due to the use of VRF-based leader selection, which chooses different leaders at different heights that could be deadlocked in an incorrect protocol. 

Specifically,  we introduce two basic concepts, \emph{conflict} and \emph{comparability}. Two blocks $B$ and $B'$ are \emph{conflicting} if one is not an ancestor of the other according to the \emph{PreHash} links, donated by $B\vdash\!B'$. We say a vote certificate $Cert$ for block $B'$ conflicts with other block $B$ when $B'\vdash B$, denoted by $Cert\vdash\!B$. If blocks $B$ and $B'$ have identical \emph{PreHash} and $B$ is proposed in a round with larger round number than $B'$, we say $B$ is \emph{larger} than $B'$, denoted by $B>B'$. Otherwise, if $B$ and $B'$ do not share the same parent block, they are called \emph{incomparable}. 

For each validator, there are two types of locks, \emph{Propose-lock} and \emph{Vote-lock}. A propose-lock $PLock(B,l)$ (or vote-lock $VLock(B,l)$) contains a block $B$, which means its corresponding validator is currently locked on block $B$ at height $l$. In general, when a validator has a propose-lock $PLock(B,l)$, it can only propose the locked block $B$ when it becomes the proposer. Similarly, a validator only can vote for the locked block $B$ at height $l$ when it has a vote-lock $VLock(B,l)$.

\textbf{Propose-Lock.} If a validator runs in round ${R}_{l-1,v}$, it is locked when it receives a proposal $P_{l,0}$ containing block $B$ with valid vote certificate $Cert_{l-1,v}$. Then it move to height $l$ and has a propose-lock $PLock(B,l)$. A validator only propose locked block if it is the proposer for a  round at height $l$, but it can vote for any valid block.

\textbf{Propose-Unlock.} A validator has $PLock(B,l)$. It may release a lock after seeing a $Cert$ for block $B'$ such that $B'>B$ . Then validator is relocked on $B'$. If validator has no \emph{Propose-Lock} at height $l$, it is locked on $B'$ as well.

\textbf{Vote-Lock.} A validator is locked when it receives a valid $Cert$ for block $B$ at height $l$ and hasn't voted for any block $B'$ such that $B'>B$ . Then it has a vote-lock $VLock(B,l)$. Validator only votes for locked block proposed at height $l$.  This prevents validators from voting for block, and then contributing to another vote certificate for a conflicted block in next rounds, thereby compromising safety.

\textbf{Vote-Unlock.} A validator having $VLock(B,l)$ may only release a lock after seeing a $Cert$ for a block $B'$ such that $B'>B$ if it hasn't voted for $B'$. This allows validators to unlock if they vote something the rest of the network doesn't want to finalize, thereby protecting liveness, but does it in a way that does not compromise safety, by only allowing unlocking if there has been a vote certificate in a round after that in which the validator became locked.

The \emph{Propose-Lock} increases the probability of finalization of the block that there are validators locked on. For example, without \emph{Propose-Lock}, fault collector only sends proposal with a vote certificate for block $B$ to one honest validator $u$ when $n=4f+1$. Then $u$ is locked on $B$ and $f$ faulty validators all keep silent. Honest validators change views and propose new block, but never get a valid vote certificate due to lack of the locked validator's vote. Once the locked validator becomes the collector, it proposes $B$ again and a valid certificate may be derived. However, it will take $O(f)$ view changes to move a round whose collector is the locked validator theoretically. Through \emph{Propose-Lock}, honest collector proposes the propose-locked block which avoids $O(n)$ round changes. Furthermore, since an honest node at a round may become malicious at another round, thereby a locked node may violate the locking mechanism after the change. However, if a validator is honest and accepts a block $B$ at a round, it must accept all $B$'s ancestors and locked on them correctly. Here, we give some properties of \emph{Propose-Lock} and \emph{Vote-Lock}.

\begin{lemma}\label{lemma:lock}
	If an honest validator has $PLock(B,l)$ and $VLock(B', l)$ at the same time, then $B$ must be identical to $B'$.
\end{lemma}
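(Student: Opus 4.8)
The plan is to argue by induction on the timeline of events that can alter either lock of a fixed honest validator at height $l$. Scanning the four rules, only two kinds of events touch the locks: the \emph{Propose-Lock} created when the validator advances from height $l-1$ to $l$ by accepting a proposal $P_{l,0}$, and the arrival of a valid certificate $Cert$ for a height-$l$ block, which may reset the \emph{Propose-Lock} and the \emph{Vote-Lock} via the two unlock rules. Throughout I maintain the invariant that whenever $PLock(B,l)$ and $VLock(B',l)$ coexist, $B=B'$; the lemma is this invariant read at an arbitrary instant. The guiding picture is that both locks are designed to track the $>$-largest height-$l$ block the validator has seen certified, so the task is to show that no rule ever lets the two drift apart.

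The base case is immediate: the initial \emph{Propose-Lock} is installed exactly when the validator enters height $l$, before it can have seen any height-$l$ certificate, whereas a \emph{Vote-Lock} at height $l$ is born only from such a certificate; hence no \emph{Vote-Lock} coexists with it and the invariant holds vacuously. For the inductive step, let the validator process a valid $Cert$ for a block $B''$ comparable to its current lock(s). If $B''$ is not larger than the locked block, no unlock rule fires and nothing changes. Otherwise \emph{Propose-Unlock} moves the propose side onto $B''$, and I must check that the vote side follows. When both locks exist, the induction hypothesis places the vote side on the same block, so the ordering premise of \emph{Vote-Unlock} matches and the only extra requirement is that the validator has not voted for a block exceeding $B''$; this holds because a vote-locked validator votes solely for its locked block, which never surpasses the largest certificate it has seen, so it cannot have voted past the just-certified $B''$. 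Hence the vote side also relocks onto $B''$. When instead the vote side is being created, it is either installed on $B''$---matching the propose side---or its extra requirement fails and it is not installed at all, leaving only the propose side; in both outcomes the invariant is preserved.

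The delicate point, and the one I expect to be hardest, is the single event that can break the symmetry between the two locks: the creation of a \emph{Vote-Lock} on a block $B'$ while a \emph{Propose-Lock} already sits on a block $B$ that is \emph{incomparable} to $B'$, i.e. descends from a conflicting parent at height $l-1$. I must rule this configuration out. A validator holding $PLock(B,l)$ advanced to height $l$ via a certificate for $B$'s parent, so the proposal-validity checks let it admit, at height $l$, only certificates whose parent is consistent with that history; furthermore, two valid height-$l$ certificates for incomparable blocks would force their $(n{-}f)$-quorums to intersect in at least $n-2f$ validators, of which at least $n-3f\ge f+1$ are honest, so some honest validator would have voted under two conflicting parents one level below---exactly what the vote-lock discipline at height $l-1$ forbids. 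Pinning down this parent-consistency claim rigorously is the crux, because it is where the purely local lock bookkeeping must borrow the global quorum-intersection reasoning that drives safety; granting it, the inductive step above completes the proof.
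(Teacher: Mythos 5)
Your core argument is the paper's argument. The paper's proof of Lemma \ref{lemma:lock} is exactly your inductive step collapsed to a single case analysis: granting that $B$ and $B'$ are comparable, if $B>B'$ then the propose-lock on $B$ can only have come from seeing a certificate for $B$ (a round-$0$ block can never be the strictly larger one, which is your base-case observation), and by \emph{Vote-Unlock} that certificate relocks the vote side onto $B$; symmetrically, if $B'>B$ the certificate behind $VLock(B',l)$ propose-relocks onto $B'$; hence $B=B'$. Your event-by-event induction is, if anything, more careful than the paper about the ``has not voted for a larger block'' side conditions, which the paper's proof silently ignores (though you do gloss the subcase where a certificate for a block \emph{smaller} than the propose-locked block arrives while no vote-lock exists yet; it is handled by the same observation that a certificate for the propose-locked block must already have been seen). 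The genuine divergence is where you locate the crux. You treat comparability as requiring global quorum-intersection reasoning ($(n-f)$-quorums meeting in at least $n-3f\ge f+1$ honest validators) and explicitly leave it granted. The paper instead treats the lemma as a purely local bookkeeping fact and simply asserts same-parenthood; the implicit justification is the guard in Algorithm 1 ($cert \nvdash VL_{h-1}^{(u)}$): an honest validator refuses to process any height-$l$ proposal or certificate whose certificate conflicts with its own height-$(l-1)$ vote-lock, so every block it can ever lock on at height $l$ extends the same parent. This matters for the paper's logical architecture: quorum intersection under changing honesty is precisely what Lemmas \ref{lemma:unlock} and \ref{lemma:parallel} establish, and Lemma \ref{lemma:unlock} cites Lemma \ref{lemma:lock}, so grounding Lemma \ref{lemma:lock} on that machinery would make the development circular (or force a restructuring as a joint induction over heights); it would also have to confront the possibility that $VL_{l-1}^{(u)}$ itself gets relocked, which the local guard sidesteps. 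So your proof is essentially right and buys some extra rigor in the lock-update bookkeeping, but the step you flag as the hard one should be discharged by the protocol's local conflict filter, not by quorum intersection.
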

\begin{proof}
	Block $B$ and $B'$ have same parent block, hence they are comparable. By assumption, if $B>B'$, there must be a vote certificate $cert$ for $B$.  If validator has $VLock(B', l)$ yet,  it vote-unlocks when receives $cert$ according to the rules of lock and has $VLock(B,l)$. Otherwise,  If $B<B'$, there must be a vote certificate $cert'$ for $B'$. Then validator  propose-unlocks and has $PLock(B',l)$. Hence, $B$ must be identical to  $B'$.
\end{proof}

\begin{lemma} \label{lemma:unlock}
	If a validator $u$ runs at height $l$, the lock $PLock(B,l-2)$ and $VLock(B',l-2)$ will never be unlocked any more if  it keeps honest even validators are rushing.
\end{lemma}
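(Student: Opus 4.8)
The plan is to reduce the claim to a statement about which vote certificates can ever exist at height $l-2$. By Lemma \ref{lemma:lock}, since $u$ holds both $PLock(B,l-2)$ and $VLock(B',l-2)$, we have $B=B'$, so $u$ is effectively locked on a single block $B$ at height $l-2$. Inspecting the two unlock rules, a \emph{Propose-Unlock} of $PLock(B,l-2)$ and a \emph{Vote-Unlock} of $VLock(B,l-2)$ both require $u$ to observe a valid certificate $Cert_{l-2}$ for some block $B''$ with $B''>B$ (same parent at height $l-3$, larger round number). Moreover, because $u$ runs at height $l$ it only votes at heights $l$ and $l-1$, so $u$ never casts a vote at height $l-2$, and the extra ``has not voted for $B''$'' precondition of \emph{Vote-Unlock} is automatically satisfied. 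Hence it suffices to prove that no certificate $Cert_{l-2}$ for a block $B''>B$ can ever be formed.

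First I would use the hypothesis that $u$ runs at height $l$ to extract a witness certificate: to reach height $l$ the validator must have observed a valid $Cert_{l-1}$ for some block $W$, and since $u$ still holds its lock on $B$, the chain it follows passes through $B$, i.e.\ the height-$(l-2)$ ancestor of $W$ is $B$. Thus there already exists a quorum $Q_1$ of $n-f$ validators that voted for $W$ at height $l-1$, and each such vote implicitly endorses $B$ at height $l-2$.

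Finally I would argue by contradiction: suppose a certificate $Cert_{l-2}(B'')$ with $B''>B$ is ever formed, witnessed by a quorum $Q_2$ of $n-f$ validators voting for $B''$; since $B''$ and $B$ share a parent but differ they conflict, and no honest validator can endorse both at height $l-2$. The core computation is the quorum intersection, which under $n\ge 4f+1$ yields $|Q_1\cap Q_2|\ge 2(n-f)-n=n-2f\ge 2f+1$. The main obstacle, and the reason $n\ge 4f+1$ rather than $3f+1$ is required, is precisely the \emph{changing-honesty}, rushing adversary: a validator honest while voting in $Q_1$ may later be corrupted and vote in $Q_2$, so a single honest node in the intersection does not suffice; I need a node that is honest at \emph{both} voting events. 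The enlarged intersection delivers exactly this, since at most $f$ members of $Q_1\cap Q_2$ can be Byzantine when $Q_1$ votes and at most $f$ when $Q_2$ votes, leaving at least $|Q_1\cap Q_2|-2f\ge 1$ validator honest throughout both events. Such a validator would have honestly endorsed $B$ (through $W$) and honestly voted for the conflicting $B''$ at height $l-2$, a contradiction. Hence $Cert_{l-2}(B'')$ never exists at any time, $u$ can never observe the certificate its unlock rules demand, and its height-$(l-2)$ locks remain in place for as long as $u$ stays honest.
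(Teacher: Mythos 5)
Your proof is correct and follows essentially the same route as the paper's: both extract the height-$(l-1)$ certificate as a witness that $n-f$ validators are vote-locked on the height-$(l-2)$ block, and both use $n\ge 4f+1$ to absorb up to $f$ honesty changes, concluding that no certificate for a larger conflicting block at height $l-2$ can ever be assembled. The differences are only presentational: the paper directly bounds the number of validators that could ever vote for a conflicting block (at most $3f<n-f$), whereas you run the equivalent two-quorum intersection argument by contradiction (at least $n-4f\ge 1$ validators honest at both voting events), and you invoke Lemma \ref{lemma:lock} at the start to merge the two locks while the paper invokes it at the end to carry the conclusion from the vote-lock to the propose-lock.
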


\begin{proof}
	Due to the condition $n\ge 4f+1$, the number of all validators $n$ can be represented in form $n=4f+k$ ($1\le k \le 4$). By assumption, because validator runs at height $l$, it must has received a vote certificate $cert$ for block $blk\nvdash B'$ at height $l-1$. It means that more than $n-f=3f+k$ validators has $VLock(B',l-2)$, and there are at least $n-2f=2f+k$ honest validators among them, called \emph{honest validator set} ($HVS$). But nodes in $HVS$ may become malicious at another round. For any block $blk'>B'$, at most $f$ nodes in $HVS$ vote for $blk$ since at most $f$ nodes change,  and at least $f+k$ cannot vote for it. Hence, it's impossible to create a valid vote certificate  for $blk'$ which can unlock $u$, because at most $n-(f+k)=3f<n-f$ validators may vote for $blk'$. According Lemma \ref{lemma:lock}, $PLock(B,l-2)$ will never be unlocked any more as well.
\end{proof}

Each validator has to participant in the consensus for previous height, hence it needs to keep a propose-lock and vote-lock for previous height.  According to Lemma \ref{lemma:lock}, it is enough for each validator to keep a vote-lock for last height and a propose-lock for current height. For validator $u$ who runs in round  ${R}_{l,v}$,  the block that $u$ is vote-locked on at height $l'$ ($l'\le l$) is denoted by $VL_{l'}^{(u)}$. According to Lemma \ref{lemma:unlock}, block $VL_{l'}^{(u)}$ ($l'\le l-2$) will not be changed any more.

\begin{lemma}\label{lemma:parallel}
	If honest validator $u$ runs at round ${R}_{l,v}$($l>2$) and another honest validator $u'$ runs at round ${R}_{l',v'}$ such that $l'\ge l$,  $VL_{l-2}^{(u)}$ is the block that $u$ locked on, then $u'$ must be locked on $VL_{l-2}^{(u)}$ at height $l-2$. 
\end{lemma}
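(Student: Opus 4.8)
The plan is to reduce the claim to proving $VL_{l-2}^{(u')}=VL_{l-2}^{(u)}$ and then to settle this single equality with a quorum-intersection argument anchored on Lemmas \ref{lemma:lock} and \ref{lemma:unlock}. Write $n=4f+k$ with $1\le k\le 4$ and set $B=VL_{l-2}^{(u)}$. Since $u$ runs at height $l$ and $u'$ at height $l'\ge l$, both have advanced past height $l-1$; by the lock-creation rule each therefore received a valid $Cert$ for some height-$(l-2)$ block and holds a vote-lock there, so $B'=VL_{l-2}^{(u')}$ is well defined, and both $B$ and $B'$ carry a certificate at height $l-2$. Because $u$ and $u'$ stay honest at heights at least $l=(l-2)+2$, Lemma \ref{lemma:unlock}, applied at the appropriate height, makes both locks permanent and, crucially, guarantees that no valid vote certificate for any block strictly larger than $B$ can ever be formed. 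By Lemma \ref{lemma:lock} the propose-lock coincides with the vote-lock, so it suffices to show $B=B'$.

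First I would reconstruct, exactly as in the proof of Lemma \ref{lemma:unlock}, the honest-validator set $HVS$ attached to $u$: the certificate that carried $u$ to height $l-1$ was issued for a height-$(l-1)$ block whose height-$(l-2)$ ancestor is $B$, so its $n-f=3f+k$ signers are all vote-locked on $B$, of which at least $n-2f=2f+k$ are honest. Symmetrically, the certificate that carried $u'$ past height $l-1$ has $n-f=3f+k$ signers that all extend $B'$ at height $l-2$. Intersecting these two signer sets in a universe of size $n=4f+k$ leaves at least $(2f+k)+(3f+k)-(4f+k)=f+k\ge 1$ common validators; discarding the at most $f$ of them that are malicious at height $l-2$ leaves at least $k\ge 1$ honest validators that are simultaneously vote-locked on $B$ and have voted along $u'$'s branch, i.e.\ to extend $B'$.

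Such an honest overlap validator is the lever. Being vote-locked on $B$ at height $l-2$, it will only ever extend $B$ or unlock to a strictly larger comparable block; it can neither extend a block incomparable to $B$ nor extend a smaller one. Hence its having voted to extend $B'$ forces $B'$ to share $B$'s parent, ruling out the incomparable case, and to satisfy $B'\ge B$. If $B'>B$, then unlocking to $B'$ would have required a $Cert$ for $B'$; but $B'>B$, and Lemma \ref{lemma:unlock} (applied to $u$) forbids any certificate for a block larger than $B$, a contradiction. Therefore $B'=B$, and by Lemma \ref{lemma:lock} the validator $u'$ is propose- and vote-locked on $VL_{l-2}^{(u)}$ at height $l-2$, as claimed.

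The step I expect to be the main obstacle is the interplay between the quorum intersection and the changing-honesty, rushing adversary model. I must ensure the overlap genuinely delivers a validator that is honest at the precise rounds where it both becomes locked on $B$ and contributes to $B'$'s branch, and that the vote-lock discipline of Lemmas \ref{lemma:lock} and \ref{lemma:unlock} is legitimately invoked across the several rounds that may coexist at heights $l-2$ and $l-1$. Pinning down ``honest at height $l-2$'' rather than merely honest at the instant of the height-$(l-1)$ vote, and confirming that the bound of at most $f$ malicious nodes holds for the relevant snapshot, is where the argument must be made airtight; the incomparable case, where $B$ and $B'$ hang off different parents, is the delicate sub-case, since it is excluded only through the vote-lock behaviour of that single honest overlap validator and not by certificate counting alone.
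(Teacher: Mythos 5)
Your proof is correct in substance but takes a genuinely different route from the paper's. The paper argues by contradiction and descends to the first divergence height: if $u'$ were locked on a block conflicting with $VL_{l-2}^{(u)}$, the two chains first diverge at some height $h\le l-2$ where the respective ancestors share a parent and are therefore comparable; the counting inside the proof of Lemma \ref{lemma:unlock}, applied at height $h$, shows that a vote certificate for the larger of the two ancestors can never be derived, yet $u'$ being at height $l'\ge l>h+1$ requires exactly such a certificate in its chain. So the paper needs no explicit quorum intersection and no cross-branch locking argument: descending to the divergence point reduces the incomparable case to the comparable one, and Lemma \ref{lemma:unlock} does all the counting. You instead stay at height $l-2$, intersect the two height-$(l-1)$ certificate quorums to extract an honest overlap validator, use its lock discipline to rule out incomparability, and invoke Lemma \ref{lemma:unlock} to exclude $B'\neq B$ in the comparable case. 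This works, and it makes the quorum overlap explicit in standard BFT style, but it carries obligations the paper avoids: the cross-branch claim (an honest validator locked on $B$ never votes to extend a conflicting $B'$) must be justified from the protocol's $\nvdash$ checks with attention to which of the two votes came first --- if the overlap validator voted on $u'$'s branch before locking on $B$, your inequality $B'\ge B$ flips to $B\ge B'$, a case your write-up omits but which closes symmetrically (a certificate for $B$ exists by construction, contradicting Lemma \ref{lemma:unlock} applied to $u'$); and the changing-honesty bookkeeping (your $f+k$ minus $f$ leaving $k\ge 1$) must be pinned to the right rounds, exactly the concern you flag yourself. Worth noting: in the comparable case your overlap validator is not even needed, since certificates for both $B$ and $B'$ exist and Lemma \ref{lemma:unlock} alone forces $B=B'$; the intersection machinery only earns its keep against incomparability, which the paper dispatches more economically with its divergence-height trick.
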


\begin{proof}
	By the way of contradiction, assume validator $u'$ is locked on another block $VL_{l-2}^{(u')}$ such that $VL_{l-2}^{(u)} \vdash VL_{l-2}^{(u')}$. There must be a height number $h$($h\le l-2$) satisfies that the ancestors of block $VL_{l-2}^{(u)}$ and $VL_{l-2}^{(u')}$ have identical \emph{PreHash} at height $h$. Therefore, block $VL_{h}^{(u)}$ and $VL_{h}^{(u')}$ are comparable. It may assume that $VL_{h}^{(u)} < VL_{h}^{(u')}$, according to the proof Lemma \ref{lemma:unlock}, the vote certificate for $VL_{h}^{(u')}$ cannot be derived by any validator. Consequently, $u'$ cannot move to height $l'$. Hence, the origin lemma is held.
\end{proof}

Lemma \ref{lemma:parallel} indicates that any two honest validators must vote-lock on same block which is finalized after two block heights. This property is vital to the correctness of consensus protocol. 


\subsection{Handling Proposals and Votes from Future Rounds} \label{subsec:future_vote}
If a validator restarts from crash, it may receive proposals or votes with valid vote certificate from future round. When validator $u$ receives a vote certificate $Cert$ for block $B$ from a validator at height $l'$ such that $l'\ge l-1$,  $Cert$ is valid only if $B\nvdash VL_{l-2}^{(u)}$ according to Lemma \ref{lemma:parallel}.  At the beginning, we define the relationship between validator and vote certificate in Definition \ref{def:succeed}.
\begin{definition} \label{def:succeed}
	We say a vote certificate $Cert$ for $B$ at round ${R}_{l',v'}$ succeeds  validator $u$ running at ${R}_{l,v}$ represented as $Cert \succ u$, if one of the following conditions hold (\romannumeral1)  $Cert \nvdash VL_{l-1}^{(u)}$, $l'>l$; (\romannumeral2) $Cert \nvdash VL_{l-1}^{(u)}$, $l'=l$ and $u$ hasn't vote for any block $B'$ that $B'>B$; (\romannumeral3) $Cert \vdash VL_{l-1}^{(u)}$, the ancestor of $B$ at height $l-1$ is $blk$, then $blk>VL_{l-1}^{(u)}$.
\end{definition} 

 If a validator $u$ receives a proposal or vote with vote certificate succeeds itself, it means that $u$ has fallen behind others. As shown in Figure \ref{fig:partition}, validator $u_{4}$ is the collector for round ${R}_{10, 0}$, but it is running at height 8.  Validator $u_4$ receives a vote from $u_2$ with a $Cert$ succeeding itself. If $u$ receives a proposal containing $Cert_{l',v'}$ succeeding itself, then it enter round ${R}_{l'+1,0}$ after synchronization from others. Similarly, validator synchronizes data from others upon receiving vote $V_{l',v'}$ with $Cert$ succeeding itself. If validator is the collector for future round ${R}_{l',v'}$, it broadcasts the certificate to help others catch up only once. However, fault validators can create votes for future round easily and let honest validators broadcast votes. For example, $f$ honest validators run at round ${R}_{l,v}$, but the others run at round ${R}_{l+2, 0}$. Then faulty validators can send votes to $C_{l+2, 1}, C_{l+2, 2}, \cdots$ at the same time, and if they run at height $l$, they broadcast vote certificate after synchronization. In this situation, the complexity of communication is $O(nf)=O(\frac{1}{3}n^2)$. We propose a mechanism to avoid $O(n^2)$ transmissions. 
\begin{subequations} \label{equation:timeout}
	\setlength{\abovedisplayskip}{3pt}
	\setlength{\belowdisplayskip}{3pt}
	\begin{align} 
	&TO_{c}(i)=2^{i}\Delta \\
	&TO_{p}(i)=TO_{c}(i)+2^{i}\Delta=2^{i}\cdot 2\Delta
	\end{align}
\end{subequations}

It is known that each collector is elected by \emph{VRF} according to Equation (\ref{equation:collector}). For round ${R}_{l,v}$, there is a  unique collector $C_{l,v}$. If adversary wants to let honest validator $u$ broadcast vote certificate, it must calculate a  correct \emph{VRF} for a round whose collector is $u$, otherwise $u$ cannot make sure it's the collector for that round. To avoid starting a view change too soon, the \emph{Timeout} for the next round doubles if timer for current round expires in LinSBFT. We define the \emph{Timeout} $TO_{c}(i)$ and $TO_{p}(i)$ for the $i-$th round at each height in Equation \ref{equation:timeout}. The collector waits $TO_{c}(i)$ time and enter the $(i+1)$-th round without collecting $n-f$ votes. Validators enter $(i+1)$-th after $TO_{p}(i)$ time without receiving proposal from collector. For round ${R}_{l',v'}$, adversary needs to wait at least  $\sum_{i=0}^{v'-1}{TO_{p}(i)}$ before sending vote $V_{l',v'}$ to $C_{l',v'}$. When a validator receives vote with $Cert'$ from future round ${R}_{l',v'}$,  it verifies whether it is the collector for that round at first. According to Lemma \ref{lemma:unlock}, $Cert'$ cannot conflict with block $VL_{l-2}^{(u)}$, otherwise it is invalid. Then it calculates the minimized transmit time $DUR(l-2,l',v')$ from the creation of block $VL_{l-2}^{(u)}$ to round ${R}_{l',v'}$ as Equation (\ref{equation:duration}). 
\begin{equation} \label{equation:duration}
\small
\begin{aligned} 
DUR(h, h',r')&=(h'-h)\Delta t_{f}+\sum\limits_{i=0}^{r'-1}{TO_{p}(i)}\\
&=(h'-h)\Delta t_{f}+\sum\limits_{i=0}^{r'-1}{2^{i}\cdot 2\Delta}\\
&=(h'-h)\Delta t_{f}+\left(2^{r'}-1\right)\cdot 2\Delta
\end{aligned}
\end{equation}

In Equation (\ref{equation:duration}), $\Delta t_{f}$ is the minimized time to finish a round to achieve consensus. The duration is at least $(h'-h)t_{f}$ from round ${R}_{h,0}$ to round ${R}_{h',0}$ and the minimized transmit time from round ${R}_{h',0}$ to round ${R}_{h',r'}$ is  $\sum_{i=0}^{v'-1}{TO_{p}(i)}$. For validator $u$, the duration from the time of creation of $VL_{l-2}^{(u)}$  to now is $\Delta T$. If $\Delta T<DUR(l-2,l',v')$, validator just rejects the vote and do nothing since nobody can move to round ${R}_{l',v'}$ within $\Delta T$. As presented in Lemma \ref{lemma:futurevote}, the complexity of transmissions caused by future votes is $O(n)$. The proof of Lemma \ref{lemma:futurevote} is shown in Appendix \ref{append:future}.

\begin{lemma} \label{lemma:futurevote}
	With the negligible probability, the number of validators who broadcast vote certificate is constant.
\end{lemma}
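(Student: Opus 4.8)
The plan is to characterize exactly when an honest validator $u$ running at height $l$ can be induced to broadcast a vote certificate, and then to bound the number of distinct rounds for which this can happen, observing that each round has a unique collector and that, by the rule stated above, every validator broadcasts at most once. A broadcast is triggered only when $u$ receives a future vote $V_{l',v'}$ such that (i) $u$ is the designated collector $C_{l',v'}$ of that round, (ii) the certificate carried by the vote is valid and, by Definition \ref{def:succeed}, does not conflict with $VL_{l-2}^{(u)}$, and (iii) the local duration test $\Delta T \ge DUR(l-2,l',v')$ passes. I would show that the number of rounds clearing all three tests within the relevant real-time window is $O(1)$ except with negligible probability, which immediately bounds the number of broadcasters.

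First I would argue that the adversary cannot forge certificates. Since $n \ge 4f+1$ and at most $f$ validators are corrupted, the adversary can never assemble an $(n,n-f)$ threshold signature, so any certificate that an honest validator acts upon must have been genuinely produced during the execution; by Lemma \ref{lemma:parallel} such a certificate cannot conflict with $VL_{l-2}^{(u)}$. Hence the rounds that can back an admissible future vote are confined to those the honest execution actually reached along $u$'s chain, rather than arbitrary rounds chosen by the adversary.

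Second I would exploit the duration gate of Equation \ref{equation:duration}. Because the per-round timeouts of Equation \ref{equation:timeout} double, $\sum_{i=0}^{v'-1} TO_p(i) = (2^{v'}-1)\cdot 2\Delta$, so $DUR(l-2,l',v')$ grows exponentially in the round index $v'$ and linearly in $l'$. Consequently $u$ discards any vote whose claimed round is not yet reachable in real time: even if a malicious validator fabricates and dispatches a vote for a high round immediately, no honest collector broadcasts for round $R_{l',v'}$ until real time $DUR(l-2,l',v')$ has elapsed since $VL_{l-2}^{(u)}$ was created. This defeats the attack sketched before the lemma, in which the adversary floods $C_{l',1}, C_{l',2}, \dots$ simultaneously, since each targeted collector's gate stays closed until exponentially more time has passed.

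Finally I would combine these with random collector selection to obtain a constant bound. By the analysis in Section \ref{subsec:viewchange}, except with probability below $(1/4)^I$ the honest execution passes through at most a constant number of rounds before a non-faulty collector finalizes each height; thus only $O(1)$ genuine certificates exist to back admissible future votes within the real-time window delimited by the duration gate. Since each such round $R_{l',v'}$ has a unique collector $C_{l',v'}$ that broadcasts at most once, the number of validators that broadcast is $O(1)$, and summing the negligible failure probabilities (the VRF producing a long run of malicious collectors, and the DKG) yields the stated ``except with negligible probability'' guarantee. The hardest part will be the front-running argument in the second step: making rigorous that the receiver-side duration test, together with certificate unforgeability and the assumption that corrupting a node takes nonzero time, truly prevents the adversary from clearing the gate at more than a constant number of distinct honest collectors within any bounded window.
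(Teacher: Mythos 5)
Your proposal is correct and follows essentially the same route as the paper's proof: both arguments rest on the duration gate of Equations (3)--(4) blocking votes for rounds not yet reachable in real time, and on VRF-based random collector selection making more than a constant number of rounds (hence of genuine certificates and of triggered broadcasters, each round having a unique collector who broadcasts at most once) a negligible-probability event. The only differences are cosmetic: you make threshold-signature unforgeability explicit where the paper leaves it implicit, and you use the tail bound $(1/4)^I$ directly where the paper also computes the expectation $E(I)<4/3$.
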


\section{Complete Protocol} \label{sec:protocol}

\subsection{Consensus Algorithm} \label{subsec:protocol}
\begin{algorithm}[t] 
	\footnotesize
	\caption{LinSBFT protocol framework}
	\textbf{local variables} \\
	${R}_{h,r}$: the round $u$ runs at \\
	$r'$: the last round at height $h-1$\\
	${T_{c}, T_{p}}$: timer for collection and proposal \\
	$B$: received block for current height of $u$\\

	\textbf{/*Event on validator*/}\\
	\Upon{{\rm reception of} $m=\left<{{PROPOSE}},P_{l,v}\right>_{s}$ } {
		$cert\leftarrow$ vote certificate in $P_{l,v}$ \\
		\If{$cert\succ u$}{
			$handleFuturePropsal(P_{l,v})$ $/* Algorithm\,2*/$\\
		} \ElseIf{$cert \nvdash  VL_{h-1}^{(u)}$} {
				$handleProposal(P_{l,v})$ $/* Algorithm\,2*/$\\
		}
	}
	\vspace{3pt}
	\Upon{{\rm timer} $T_{p}$ {\rm for} ${R}_{l,v}$ {\rm expires}} {
		\If{$l=h-1$}{
			$r'\leftarrow r'+1$\\
			$enterPrepare\left(h,r',VL_{h-1}^{(u)}\right)$ $/* Algorithm\,3*/$\\
		} \ElseIf{$l=h$} {
			$r\leftarrow r+1$\\
			$enterPrepare(h,r,B)$ $/* Algorithm\,3*/$\\
		}
	}
	\vspace{3pt}
	\Upon{{\rm reception of} $m=\left<CERT, Cert_{l,v}\right>_{s}$ } {
		\If{$Cert_{l,v}\succ u$}{
			$synchronize(Cert_{l,v},l,v)$ $/* Algorithm\,3*/$\\
		}
		\If{$Cert_{l,v}>PL_{h}^{(u)}$}{
			$PUnlock(h)$, $PLock(B_{l,v},h)$\\
		}
	}
		
	\vspace{5pt}
	\textbf{/*Event on collector(leader)*/} \\
	\Upon{{\rm reception of} $m=\left<VOTE,V_{l,v}\right>_{s}$ }{
		$cert\leftarrow$ vote certificate in $V_{l,v}$ \\
		\If{$cert\succ u$}{
			$handleFutureVote(V_{l,v}, cert)$ $/* Algorithm\,2*/$\\
		} \ElseIf{$cert \nvdash  VL_{h-1}^{(u)}$} {
				$handleVote(V_{l,v})$ $/* Algorithm\,2*/$\\
		}
	} 
	\vspace{3pt}
	\Upon{{\rm timer} $T_{c}$ {\rm for} ${R}_{l,v}$ {\rm expires}} {
		$enterPropose(l,v+1)$ $/* Algorithm\,3*/$\\
	}
\end{algorithm}
\begin{algorithm}[t] 
	\footnotesize
	\setlength{\belowdisplayskip}{3pt}
	\caption{Handle for proposal and vote}
	(local variables are the same as Algorithm 1)\\
	\textbf{/*Procedure on validator*/}\\
	\Procedure{handleProposal$(P_{l,v})$}{
		\If{${R}_{h+1,0}={R}_{l,v}$}{
			$VLock(B, h)$, $B\leftarrow$ block of $P_{l,v}$, $PLock(B, h+1)$\\
		    $h\leftarrow l, r'\leftarrow r, r\leftarrow v$  \\
			\textbf{/*finalize block after two phases of voting*/} \\
			$enterFinalize\left(VL_{l-2}^{(u)}\right)$ $/* Algorithm\,3*/$\\
			$enterPrepare(h,r,B)$ $/* Algorithm\,3*/$\\
		} \ElseIf{${R}_{h,r+1}={R}_{l,v}$}{
		\textbf{/*$P_{l,v}$ cannot unlock validator*/} \\
			$r\leftarrow r+1$, $B\leftarrow$ block of $P_{l,v}$\\
			$enterPrepare(h,r,B)$ $/* Algorithm\,3*/$\\
		} \ElseIf{${R}_{h-1,r'+1}={R}_{l,v}$}{
			$r'\leftarrow r'+1$\\
			send vote for $VL_{h-1}^{(u)}$ to collector $C_{h-1, r'}$
		}
	}
	\vspace{3pt}
	\Procedure{handleFutureProposal$(P_{l,v}, cert)$} {
		$synchronize(cert, l, v)$ $/* Algorithm\,3*/$\\
		$handleProposal(P_{l,v})$ $/* Algorithm\,3*/$
	}
	
	\vspace{5pt}
	\textbf{/*Procedure on collector(leader)*/} \\
	\Procedure{handleVote$(V_{l,v})$} {
		add $V_{l,v}$ to vote set for ${R}_{l,v}$ \\
		\If{{\rm collect} $n-f$ {\rm votes for} ${R}_{l,v}$}{
			$enterPropose(l+1,0)$ $/* Algorithm\,3*/$\\
		} 
	}
	\Procedure{handleFutureVote$(V_{l,v}, cert)$} {
	    \textbf{/*$u$ has fallen behind others*/}\\
		\If{$DUR(h-2,l,v)$ {\rm is valid}}{
			$synchronize(cert, l, v)$ $/* Algorithm\,3*/$\\
			broadcast message $\left<CERT, cert\right>$\\
			$handleVote(V_{l,v})$ $/* Algorithm\,2*/$
		}
	}
\end{algorithm}
\setlength{\textfloatsep}{0.1cm} 
\setlength{\floatsep}{0.1cm}

In addition to steps of \emph{Propose} and \emph{Prepare}, we introduce a new step called \emph{Finalize} in which a validator finalizes block and executes transactions of it, resulting in the transition of local state machine. In general, for each round, every validator experiences threes steps in order: \emph{Propose}, \emph{Finalize} and \emph{Prepare}. In \emph{Propose} step, collector(leader) proposes block, others receive them from collectors. In \emph{Finalize} step, all validators finalize block proposed two height before. After that, in \emph{Prepare} step, every validator sends vote to collector and collector collects these votes to derive a threshold signature. 

Algorithm 1 describes the framework of LinSBFT protocol and Algorithm 2 is the details of processes for proposal and vote used in Algorithm 1. Algorithm 3 presents necessary procedures used in Algorithm 1 and Algorithm 2, including the  transfer functions among three steps and the synchronize procedure. These events and procedures are divided into two kinds, one runs on normal validators (including the collector) and the other runs on collector(leader). For brevity, the verification and check of messages are omitted. As discussed before, each validator participants the protocol for previous height to help others move to larger height. Therefore, each validator record the last round number $r'$ at previous height. In LinSBFT, any process is triggered by some given events such as reception of message and expiry of timer. We discuss the process for reception of \emph{Proposal}, \emph{Vote} and \emph{Cert} message. If timer $T_{c}$ expires, it means that collector cannot receive $n-f$ votes within a timespan, and if proposal does not arrive in time, the timer $T_{p}$ for proposal expires as well. 

At the beginning, we discuss the process of proposal broadcast by leader of each round.  First, when validator $u$ receives a proposal $P_{l,v}$ containing a vote certificate $cert$ succeeding itself, which means $u$ has fallen behind others, it synchronizes data from others at once and enter round ${R}_{l,0}$ (Algorithm 1, line 8-10). Otherwise, if the certificate $cert$ does not conflict with locked block $VL_{h-1}^{(u)}$ of $u$, it handles the proposal to transmit its state machine by function $handleProposal$ (Algorithm 1, line 11-12), which is detailed in Algorithm 2. Second, in Algorithm 2, when receives  proposal containing the certificate for current round ${R}_{h,r}$(Algorithm 2, line 4-9), which is the ordinary case of LinSBFT, $u$ enters round ${R}_{h+1,0}$ and update its local variables, resulting in that $u$ is vote-locked on current block $B$ at height $h$ and propose-locked on block of $P_{l,v}$ at height $h+1$(Algorithm 2, line 4-5). Then, validator moves to \emph{Finalize} step (Algorithm 2, line 8), in which validator finalizes block $VL_{l-2}^{(u)}$ if it is not finalized before(Algorithm 3, line 13-15). In \emph{Prepare} step(Algorithm 2, line 9), $u$ sends its vote to the collector and sets a timer $T_{p}$ for proposal(Algorithm 3, 17-19). If $u$ is the collector responsible for collection of votes from others, it  sets a timer $T_{c}$ for collection(Algorithm 3, line 20-21). Third, if the received proposal is for round ${R}_{h,r+1}$, it means the collector $C_{h,r}$ cannot collect at least $n-f$ votes to derive a threshold signature before the expiry of timer  $T_{c}$, therefore $u$ enter round ${R}_{h,r+1}$ and sends votes for proposed block to collector(Algorithm 2, line 8-11). Last, when receives proposal for previous height, if the proposed block is identical to validator's vote-locked block, it sends vote   for $VL_{h-1}^{(u)}$ along with latest vote certificate  to collector $C_{h-1,r'}$ at height $h-1$(Algorithm 2, line 12-14).  Otherwise, $u$ just ignores the proposal for the safety of protocol.

\begin{algorithm}[t] 
	\footnotesize
	\setlength{\belowdisplayskip}{3pt}
	\caption{Functions for LinSBFT protocol}
	(local variables are the same as Algorithm 1)\\
	\textbf{/*Procedure on validator*/}\\
%
	
	\Procedure{enterFinalize$(blk)$} {
		\If{ $blk$ not finalized}{
			$applyBlock(blk)$\\
		}
	}
	\Procedure{enterPrepare$(l,v,B)$} {
		$vote\leftarrow createVote(B)$ \\
		send vote $vote$ to collector $C_{l,v}$\\
		$setTimer(T_{p},l,v)$ \\
		\If{$u$ \rm{is the collector for} ${R}(l,v)$}{
			$setTimer(T_{c}, l, v)$\\
		}
	}
	\Procedure{synchronize$(cert, l, v)$} {
		\If{$cert\nvdash VL_{h-1}^{(u)}$}{
			synchronize block $B_{h-1,r'}$ and $B_{h,r}$ \\
			$VUnlock(h-1)$, $VLock(B_{h-1,r'},h-1)$, $PUnlock(l)$
		}
		\For{$i$ {\rm \textbf{from}} $h+1$ {\rm\textbf{to} } $l$ }{
			synchronize block $B_{i, v_{i}}$ \\
			$VLock(B_{i-1,v_{i-1}}, i-1)$, $PLock(B_{i,v_{i}},i)$\\
			$r'\leftarrow r, r\leftarrow 0, h\leftarrow h+1$
		}
	}
	
	\vspace{5pt}
	\textbf{/*Procedure on collector(leader)*/} \\
	\Procedure{enterProprose$(l, v)$} {
		\If{$u$ {\rm not have proposed for} ${R}_{l,v}$}{
			\If{$PL_{l}^{(u)} \ne nil$ \rm\textbf{and} ${R}_{h,r+1}={R}_{l,v}$} {
				broadcast \emph{Proposal} containing $PL_{l}^{(u)}$
			} \Else{
				$cert\leftarrow createCert(l,v)$, 	$blk\leftarrow createBlock()$ \\
				$proposal\leftarrow makeProposal(blk, cert)$ \\
				broadcast $proposal$\\
			}	
		}
	}
\end{algorithm}

We then present the process of vote messages sent by validators. A validator just handles a vote message if it is the collector for round of the vote, otherwise the vote message is discarded by it. In Algorithm 1,  if $V_{l,v}$ is a vote for future round (Algorithm 1, line 29-30), $u$ changes to synchronize data from other validators after verifying the duration time(Algorithm 2, line 26-31). Upon receiving vote $V_{l,v}$ for current round, collector $u$ adds $V_{l,v}$ to local vote set.  If $u$ collects more than $n-f$ votes, it enters \emph{Propose} step for round ${R}_{h+1,0}$ and doesn't handle votes for ${R}_{h,r}$ any more(Algorithm 2, line 19-21). In \emph{Propose} step, the collector propose a new block along with a $Cert$ derived based on votes to all validators including itself(Algorithm 3, line 6-11). 

Last, we discuss the process of expiry of timer and \emph{Cert} message. If timer $T_{p}$ expires, validator just enters round ${R}_{l,v+1}$ and sends vote along with latest $Cert$ to the next collector(Algorithm 1, line 14-19). A validator proposes its vote-locked block for previous height when it becomes the collector for previous height, since it has to participate the protocol for previous height(Algorithm 1, line 14-16). If timer $T_{c}$ expires, it means as a collector, $u$ cannot receive votes from super-majority within given timespan(Algorithm 1, line 33-34). Therefore,  $u$ enters \emph{Propose} step for next round and proposes the block $PL_{l}^{(u)}$ if any(Algorithm 3, line 24-25). It notes that $u$ just creates a new block if it doesn't have a propose-lock at height $l$(Algorithm 3, 25-30). Upon receiving \emph{Cert} message with vote certificate $cert$, if $cert\succ u$, validator $u$ synchronizes data as well and update local \emph{Propose-Lock} and \emph{Vote-Lock}(Algorithm 1, line 20-24). 


\subsection{Handling Changes in Participant Set} \label{subsec:setup_ts}

\textbf{Participant set update}. 
Since there is a large setup cost for generating and exchanging keys as described in Section \ref{subsec:preliminaries}, the participant set cannot change too frequently. For instance, exchanging public keys and IP addresses between each pair of nodes already take $O(n^2)$ transmissions. In the literature, such costs are often hidden by assuming the existence of a centralized public-key infrastructure (PKI), which is not practical in a permissionless, public blockchain setting. To amortize the setup costs, LinBFT divides time into epochs of length $O(n)$, and nodes can only join or leave at the beginning of each epoch. Each update to the participant set (i.e., a validator join or leave request) is simply treated as a transaction, which will be included in a new block to be added to the blockchain. The rationale is that since LinSBFT guarantees deterministic safety and liveness, there must be deterministic consensus over the next participant set.
Specifically, at the end of epoch $\mathcal{E}$, the set of join/leave requests contained in finalized blocks determine the changes in the participant set in epoch $\mathcal{E}+1$. In the worst case, these involve $O(n)$ join/leave transactions, e.g., when the entire participant set is replaced. LinSBFT needs to run a DKG protocol to generate new public/private key pairs for each participant in epoch $\mathcal{E}+1$, which are required for creating threshold signatures. Note that LinSBFT does not update keys incrementally, i.e., the key pair of a staying participant from epoch $\mathcal{E}$ is still generated from scratch. This is because updating keys for threshold signature is tricky in general.

\textbf{Proof-of-Stake (PoS)}.
PoS is a common technique to counter Sybil attacks, i.e., one single person or entity pretends to be many participants by registering numerous accounts in the system. In the presence of such attacks, it is no longer appropriate to define BFT’s honest super-majority requirement based on the number of nodes, since multiple nodes can belong to the same entity.
PoS addresses this problem by mandating that each participant deposits to a special account a certain amount of money (call its stake) in the form of cryptocurrency tokens of the blockchain. The stake can only be withdrawn after the participant quits the protocol. Clearly, under PoS, the number of accounts that a single person or entity can register is limited by its financial resources. If a node wants to join consensus in epoch $\mathcal{E}+1$, it proposes a transaction to deposit a certain amount tokens in epoch $\mathcal{E}$. The top $n$ nodes who deposit most tokens are considered as the validators in consensus of epoch $\mathcal{E}+1$. 

PoS can also be implemented with different participants staking different amount of tokens, and having influence proportional to their respective stakes. Instead, in LinSBFT, all validators promise equal influence to consensus. In order to provide an incentive, LinSBFT protocol rewards the validators who participant in consensus of current epoch. There are two sources for this reward $W$: \emph{transaction fees} and \emph{coinbase}, which a continuous supply of new coins without mining in PoW. The incentive $W$ of block $B$ is owned by the leader who propose $B$. In some system (e.g. Ethermint \footnote{https://github.com/cosmos/ethermint}), the reward of each block is shared by all validators, where a malicious node still gains reward according to protocol although it stops working after joining in consensus in epoch $\mathcal{E}$. In LinSBFT, every round is assigned a new leader via VRF which is unpredictable in advance, thereby rewards of all validators are equal in expectation. Furthermore, if hope the reward of each validator is proportional to its deposition, we can adjust the probability that  a validator is elected as the leader/collector. Suppose the amount of token deposited by node $i$ ($0\le i \le n-1$) is a integer $d_i$, and the sum is $D=\sum_{i=0}^{n-1}d_{i}$. Then, node becomes the leader/colletor $C_{l,v}$ when $H(Cert_{l-2,v'}|v)\mod n$ is greater than $\sum_{j=0}^{i-1}d_j$ and no more than $\sum_{j=0}^{i}d_j$.  By this method, an absent validator cannot gain any reword according to incentive mechanism of LinSBFT.

\section{Correctness} \label{subsec:correctness}

\begin{figure*}[t]
	\setlength{\abovecaptionskip}{0.cm}
	\centering
	\subfigure[$TXs=2000$]{\label{subfigure:ordinary_tps_2000}
		\includegraphics[width=1.65in,height=1.3in]{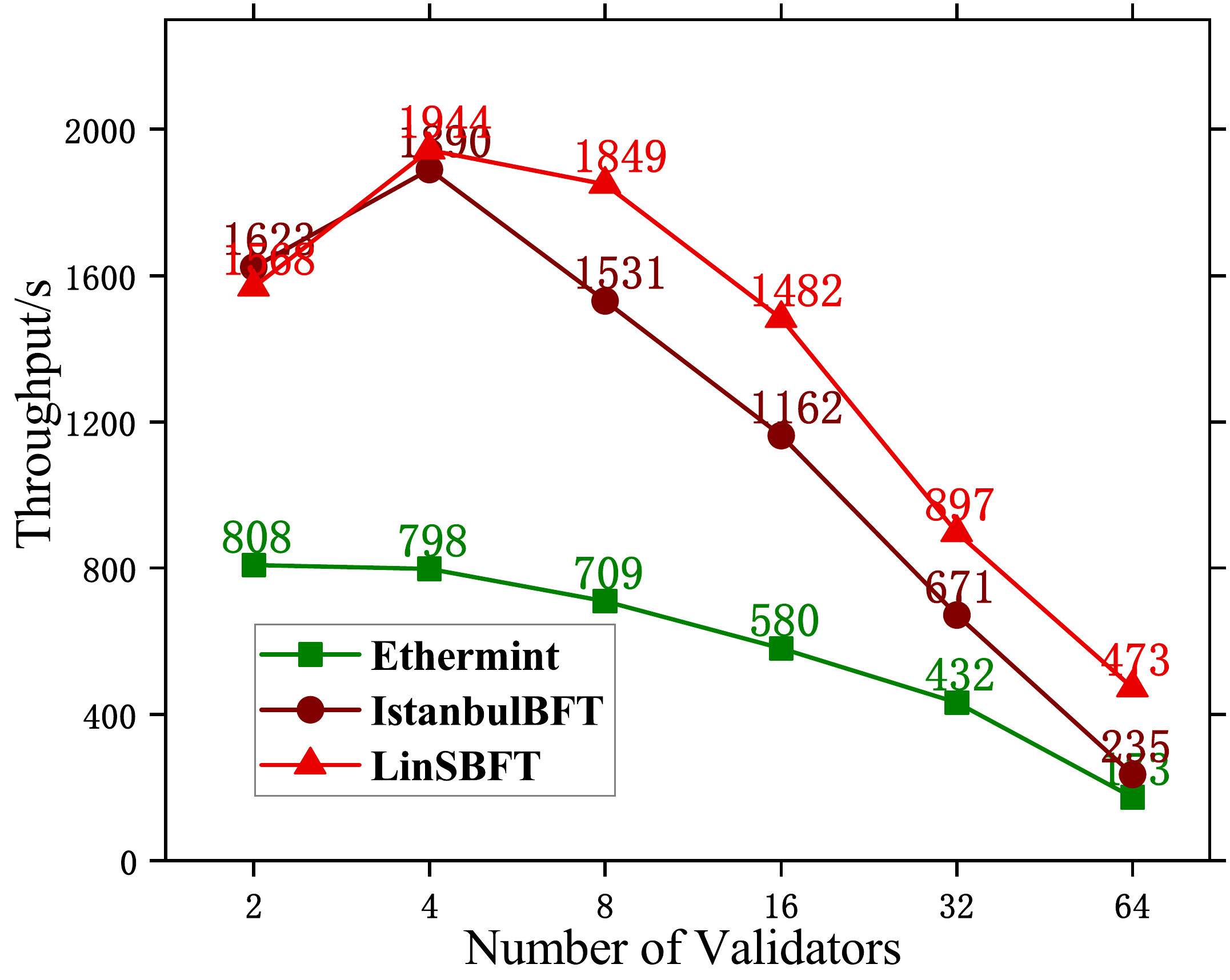}}
	\subfigure[$TXs=4000$]{\label{subfigure:ordinary_tps_4000}
		\includegraphics[width=1.65in,height=1.3in]{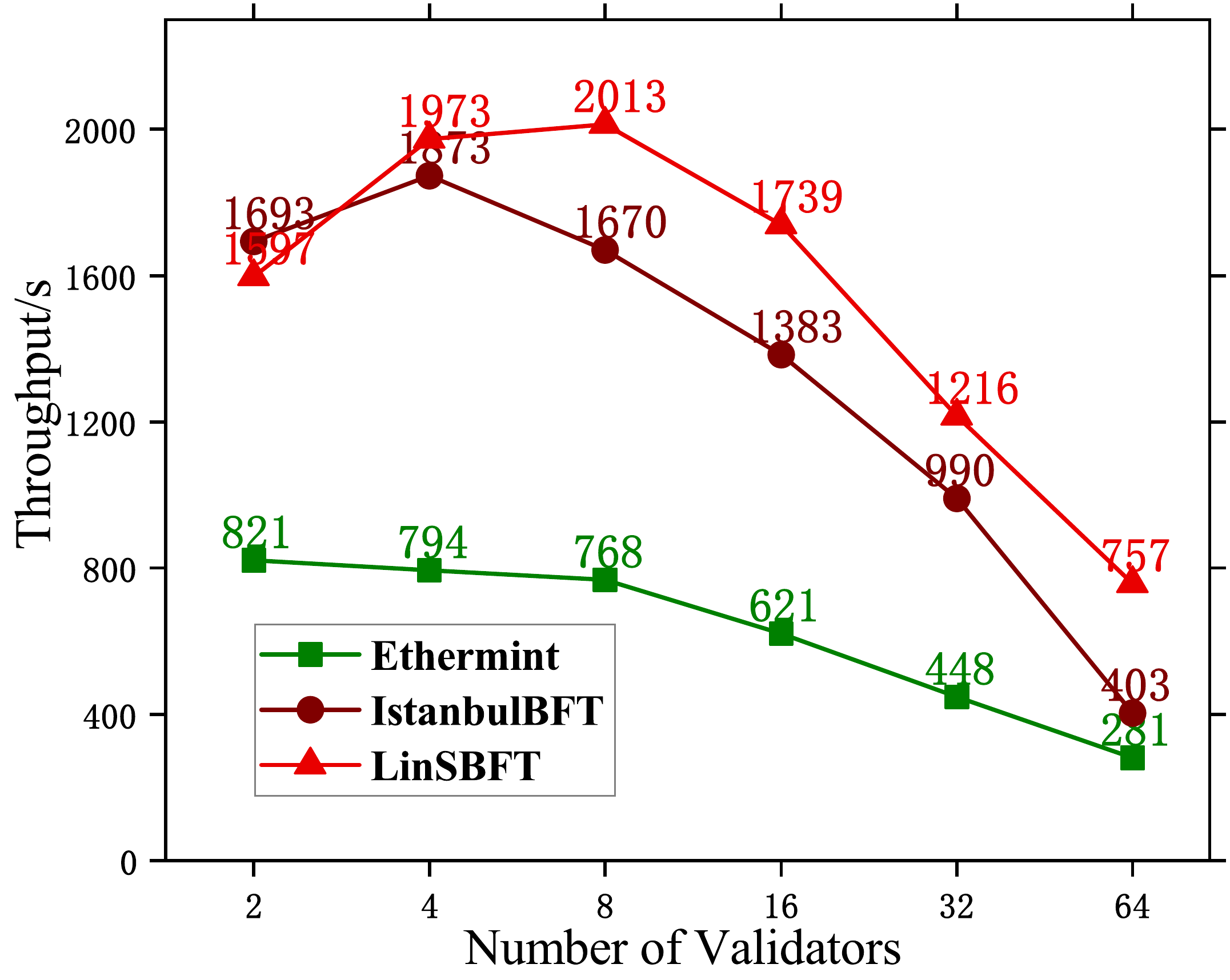}}
	\subfigure[$TXs=8000$]{\label{subfigure:ordinary_tps_8000}
		\includegraphics[width=1.65in,height=1.3in]{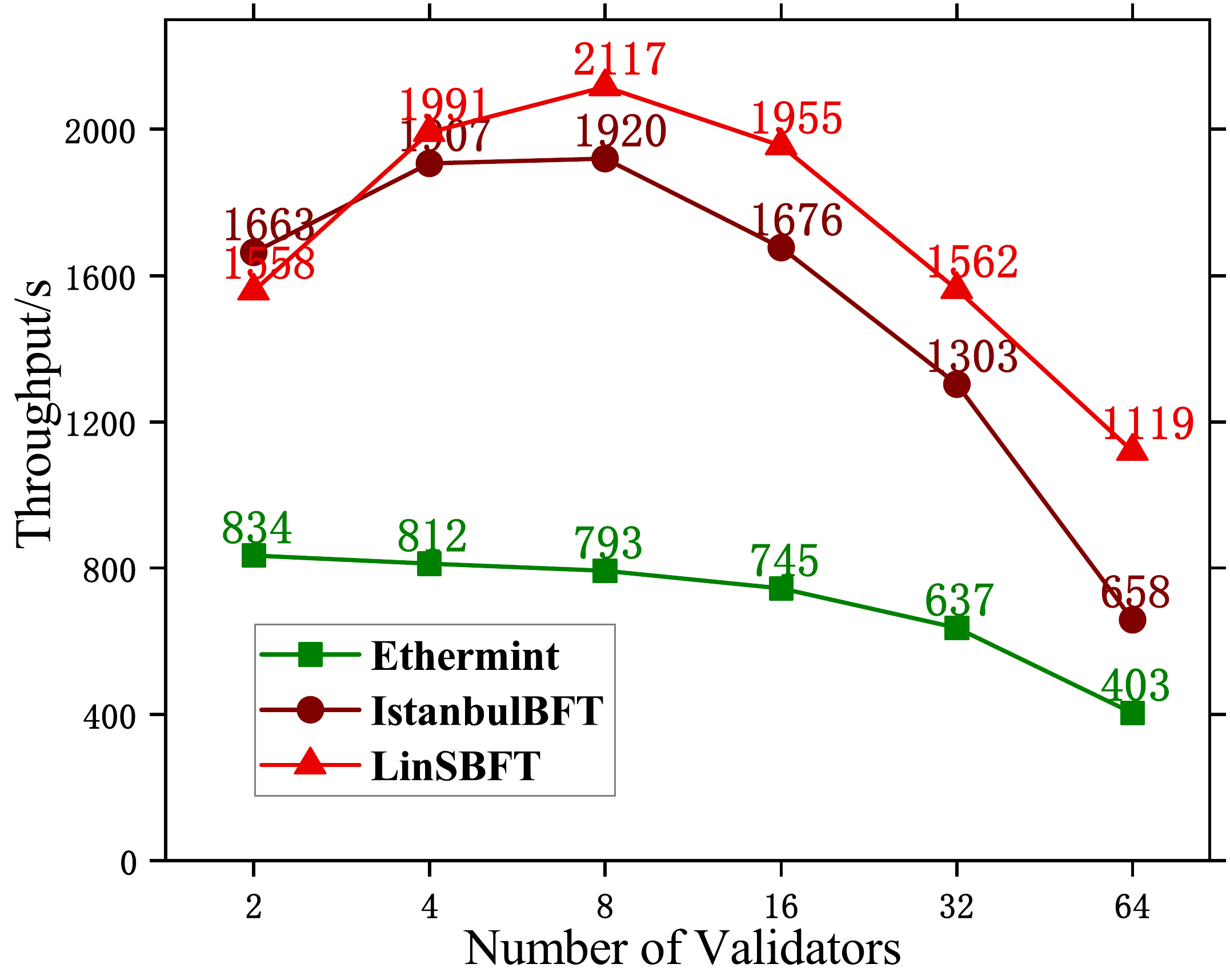}}
	\subfigure[$TXs=4000$, 1Mbps bandwidth]{ \label{subfigure:ordinary_4000_1Mbps}
		\includegraphics[width=1.65in, height=1.3in]  {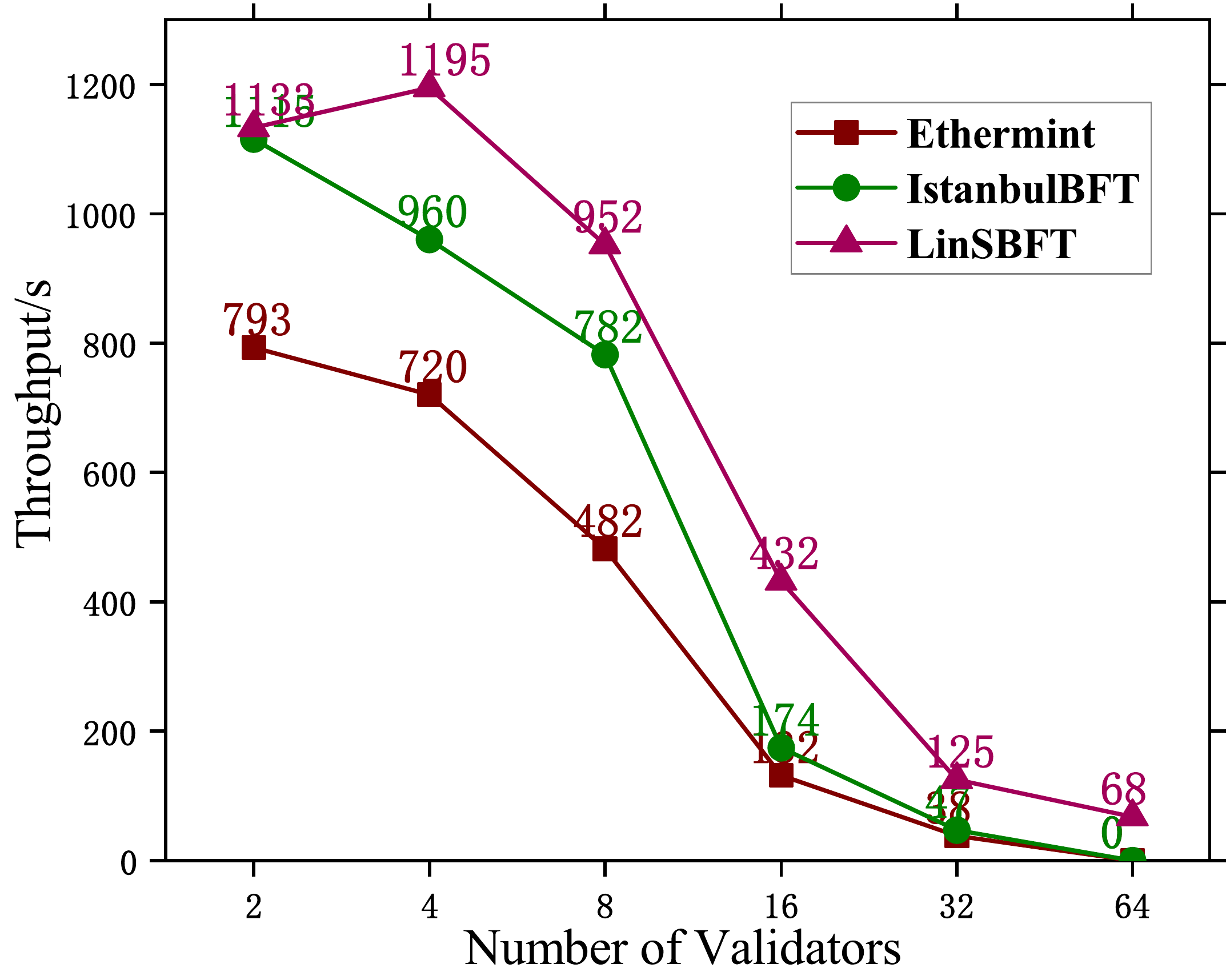}}
	\caption{\label{figure:tps}Throughput in the ordinary case with varying block size and number of validators.}
\end{figure*}

\begin{figure*}[t]
	\setlength{\abovecaptionskip}{0.cm}
	\centering
	\subfigure[$TXs=2000$]{\label{subfigure:ordinary_time_2000}
		\includegraphics[width=1.65in,height=1.3in]{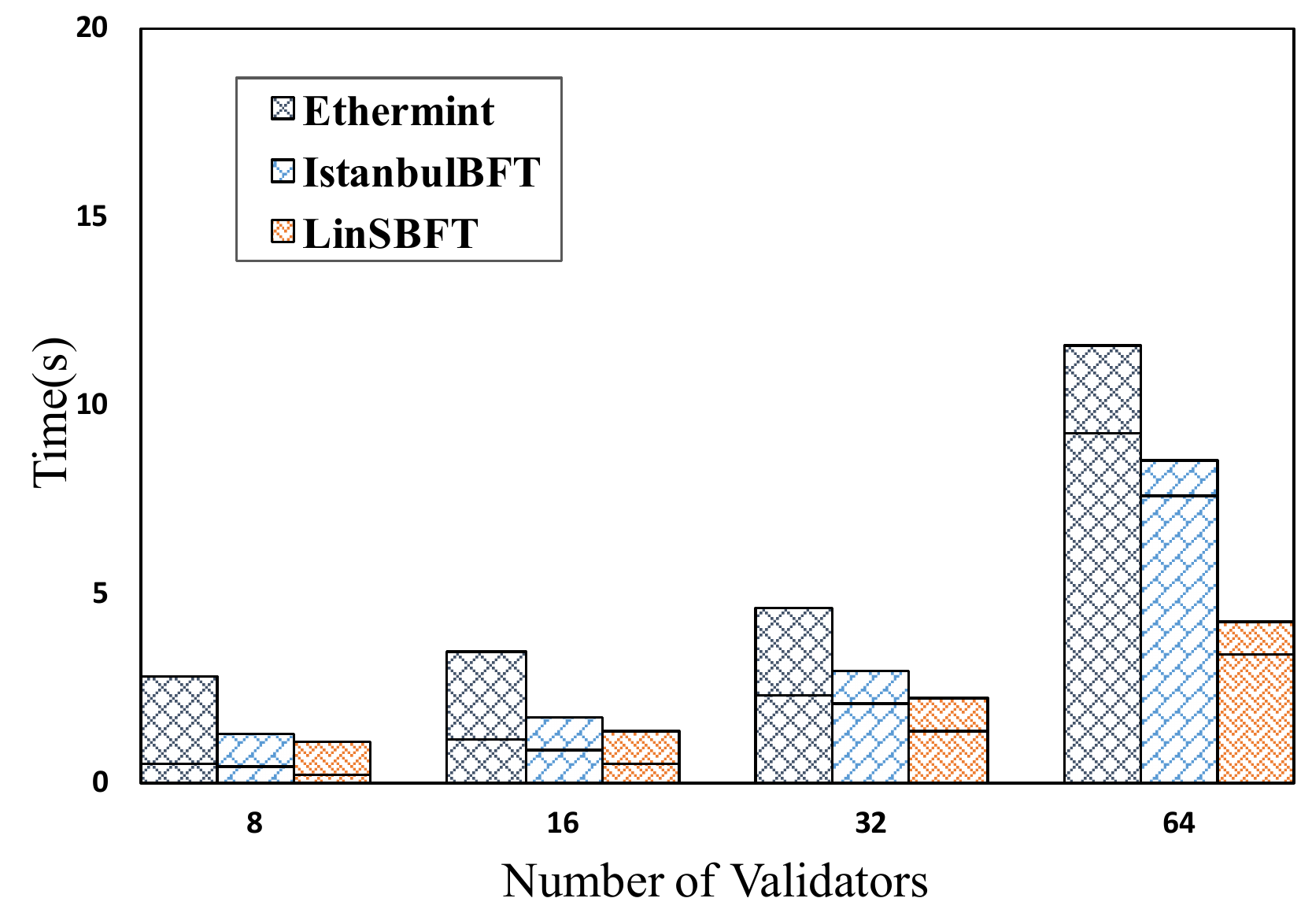}}
	\subfigure[$TXs=4000$]{\label{subfigure:ordinary_time_4000}
		\includegraphics[width=1.65in,height=1.3in]{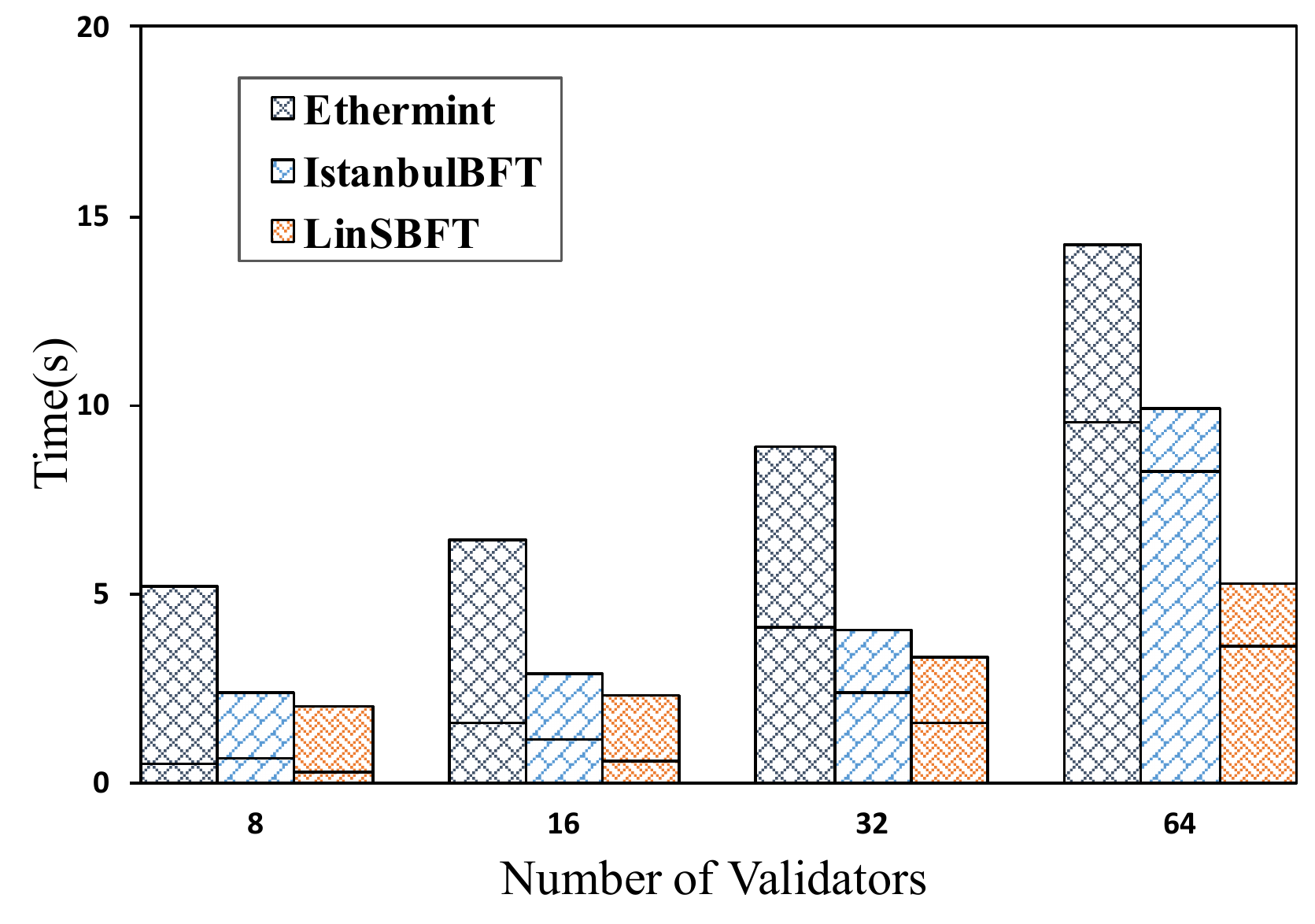}}
	\subfigure[$TXs=8000$]{\label{subfigure:ordinary_time_8000}
		\includegraphics[width=1.65in,height=1.3in]{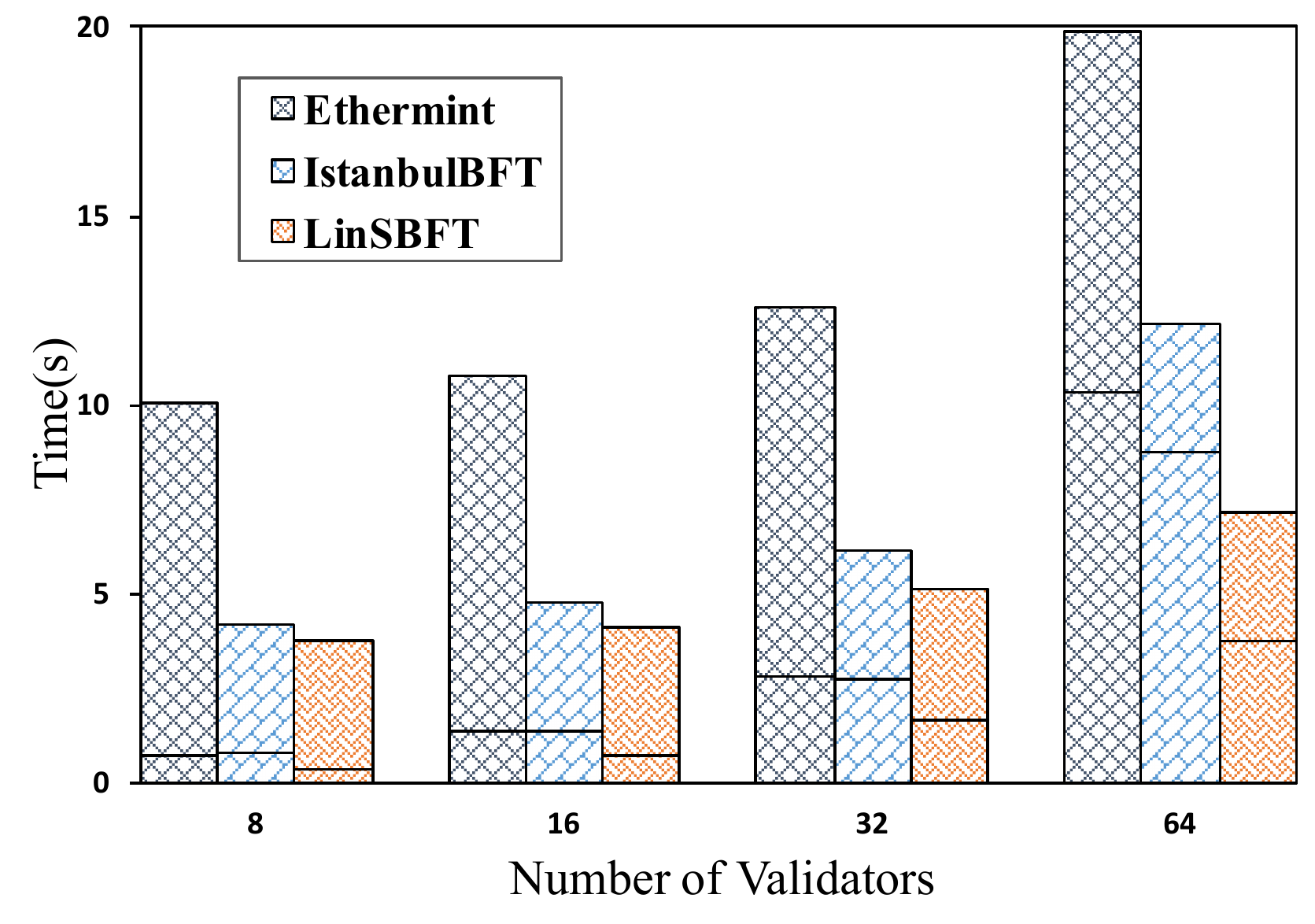}}
	\subfigure[ $TXs=4000$,1Mbps bandwidth]{ \label{subfigure:ordinary_4000_1Mbps_time}
		\includegraphics[width=1.65in, height=1.3in]  {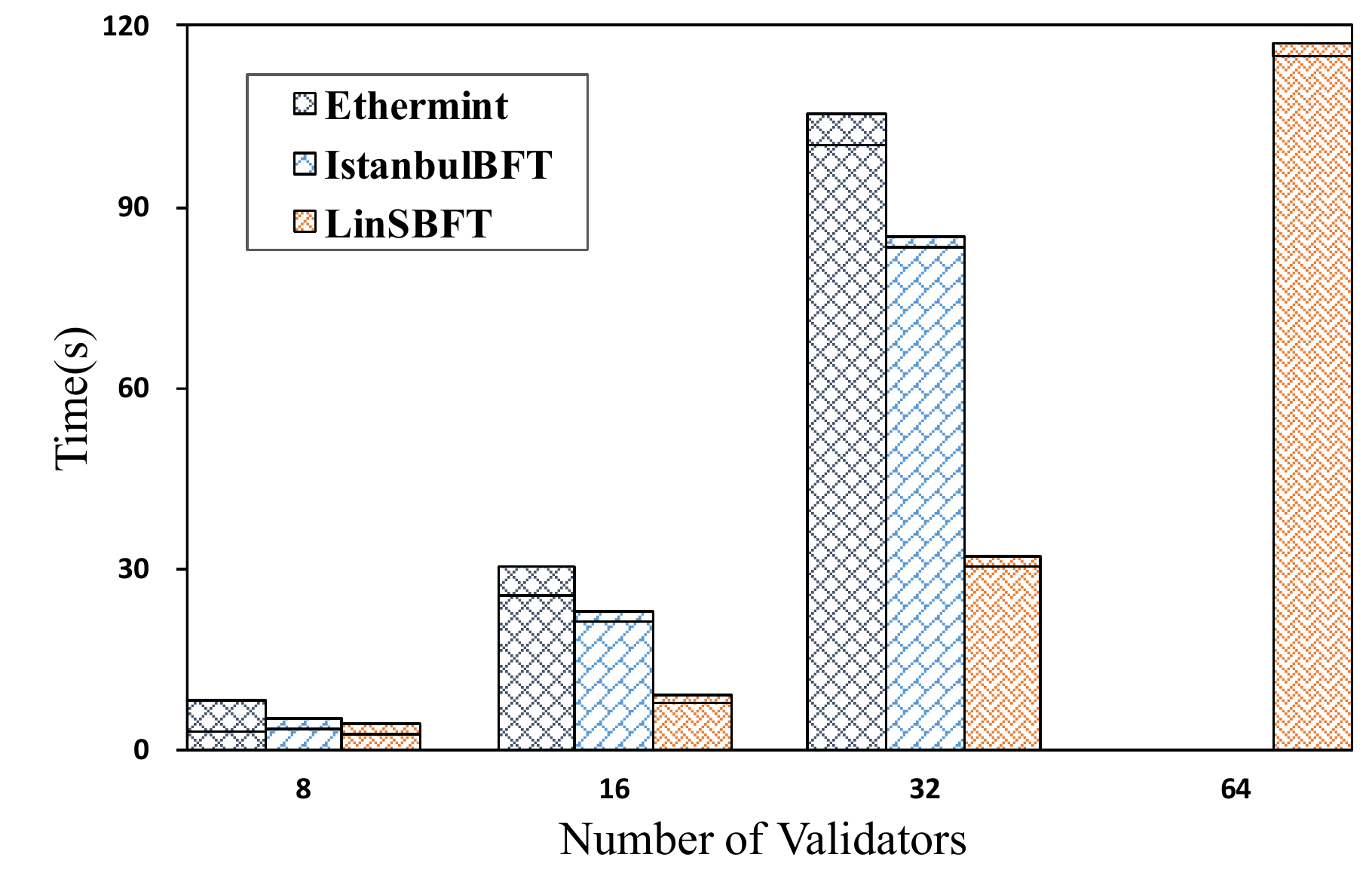}}
	\caption{\label{figure:time_cost} Latency in the ordinary case with varying block size and number of validators.}
\end{figure*}

This section sketches the proofs that LinSBFT provides safety, liveness and linear complexity.

\noindent\textbf{Safety}: Given $n\ge4f+1$, at each height, at most one  block is finalized and added to the blockchain. Claim \ref{claim:safety} establishes the safety of LinSBFT.

\begin{claim}\label{claim:safety}
	Let $B_{l,v}$ be a block finalized by an honest validator $u$. Then, there is no other block $B_{l',v'}$ that can be finalized by any honest validator such that $l'\le l $ and $B_{l,v} \vdash B_{l',v'}$.
\end{claim}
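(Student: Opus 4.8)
The plan is to argue by contradiction, reducing the claim to Lemma~\ref{lemma:parallel} (the parallel-locking property) together with the observation that finalization is tied to the vote-lock at the height being finalized. First I would make precise what finalization means in the protocol: according to the \texttt{handleProposal} procedure, an honest validator finalizes the block it is vote-locked on at height $l$ only upon advancing to height $l+2$ (when it receives a proposal moving it to round $R_{l+2,0}$ and invokes \texttt{enterFinalize} on $VL_{l}^{(u)}$). Hence, if honest $u$ finalizes $B_{l,v}$, then $u$ must have reached some height $m \ge l+2$, and at that moment $B_{l,v}=VL_{l}^{(u)}$, the block $u$ is vote-locked on at height $l$. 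Symmetrically, if some honest $u'$ finalizes a conflicting $B_{l',v'}$ with $l'\le l$, then $u'$ reached height $\ge l'+2$ and $B_{l',v'}=VL_{l'}^{(u')}$.

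Next I would establish that $VL_{l'}^{(u)}$, the block $u$ is vote-locked on at the lower height $l'$, is exactly the ancestor of $B_{l,v}$ at height $l'$. This holds because the blocks an honest validator finalizes form a chain along the \emph{PreHash} links: as noted in Section~\ref{subsec:lock}, accepting a block entails accepting and locking correctly on all of its ancestors, so the sequence $VL_{l'}^{(u)},VL_{l'+1}^{(u)},\dots,VL_{l}^{(u)}=B_{l,v}$ is precisely the ancestor chain of $B_{l,v}$, and $VL_{l'}^{(u)}$ is its ancestor at height $l'$ (degenerating to $B_{l,v}$ itself when $l'=l$). By Lemma~\ref{lemma:unlock}, once a validator is at height $\ge l'+2$ the vote-lock $VL_{l'}^{(\cdot)}$ is frozen and can never change while the validator stays honest; this legitimizes comparing the frozen locks of $u$ and $u'$ even though the two validators need not be synchronized in time.

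The crux is then a single application of Lemma~\ref{lemma:parallel} at height $l'$. Taking the validator at the smaller height as the reference, say $u'$ at height $l'+2$ and $u$ at height $m\ge l'+2$, the lemma forces both to be vote-locked on the same block at height $l'$, i.e.\ $VL_{l'}^{(u)}=VL_{l'}^{(u')}$. Combining this with the two facts above gives that the ancestor of $B_{l,v}$ at height $l'$ equals $B_{l',v'}$, so $B_{l',v'}$ is an ancestor of $B_{l,v}$. This contradicts the hypothesis $B_{l,v}\vdash B_{l',v'}$ that the two blocks conflict, and the boundary case $l'=l$ is subsumed, since there $VL_{l'}^{(u)}=VL_{l'}^{(u')}$ immediately yields $B_{l,v}=B_{l',v'}$, again contradicting conflict.

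I expect the main obstacle to be the bookkeeping around heights and timing rather than any new structural insight: Lemma~\ref{lemma:parallel} is stated for one validator at a reference height and another at a height at least as large, so I must carefully justify that $u$ genuinely passes through height $l'+2$ with $VL_{l'}^{(u)}$ already equal to the ancestor of $B_{l,v}$, and that Lemma~\ref{lemma:unlock} makes the comparison of the two frozen vote-locks sound despite $u$ and $u'$ being in different rounds. A secondary care point is the changing-honesty model: the argument only uses the states of $u$ and $u'$ at moments when each is itself honest (the instants of finalization and of passing height $l'+2$), which is exactly the regime in which Lemmas~\ref{lemma:unlock} and~\ref{lemma:parallel} apply.
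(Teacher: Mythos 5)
Your proof is correct and follows essentially the same route as the paper's: finalization forces the finalizing validator to reach height at least $l'+2$ while vote-locked on the finalized block, Lemma~\ref{lemma:unlock} freezes those locks, and Lemma~\ref{lemma:parallel} then forces the two validators' vote-locks at height $l'$ to coincide with the ancestor of $B_{l,v}$, contradicting the assumed conflict. Your write-up is somewhat more explicit about the ancestor-chain bookkeeping and folds the case $u'=u$ into the same argument (the paper dispatches that case separately via Lemma~\ref{lemma:unlock}), but the decomposition and key lemmas are identical.
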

\begin{proof}
	By assumption, $u$ at least runs at height $l+2$ due to the finalization of $B_{l,v}$. Then any locked block $VL_{h}^{(u)}(h\le l)$ never be unlocked any more according to Lemma \ref{lemma:unlock}. Hence, $B_{l',v'}$ will never be finalized by $u$. For other honest validator,	by the way of contradiction, assume the block $B_{l',v'}$ is finalized by honest validator $u'$.  According to  the finalization of $B_{l',v'}$,  $u'$ must be locked on $B_{l',v'}$ at height $l'$.  The ancestor of $B_{l,v}$ at height $l'$ is $VL_{l'}^{(u)}$, then it must be that $VL_{l'}^{(u)} \vdash B_{l',v'}$ since $B_{l,v} \vdash B_{l',v'}$. But this is impossible according Lemma \ref{lemma:parallel}.
\end{proof}
Note that when a malicious validator becomes honest, it will drop all invalid blocks finalized within the period of fault. Therefore, there is single chain of blocks in the view of any validator who is honest at any round. 

\noindent\textbf{Liveness}: As discussed in Section \ref{subsec:viewchange}, to provide liveness, \emph{view change} is triggered and the protocol enters a new round when the timer $T_{p}$ expires. It is important to maximize the period of time when at least $n-f$ honest validators are in the same round, and to ensure that this period of time increases exponentially until a valid proposal is received. 

Like in PBFT, the \emph{Timeout}  $TO_{p}(v+1)$ for round ${R}_{l,v+1}$ doubles if the timer expires for round ${R}_{l,v}$ to ensure as many validators as  possible  enter same round. If a validator falls behind others, it synchronizes data from others to enter the latest round upon receiving a $Cert$ succeeding itself. The unlocking mechanism also promises the liveness that prevents a validator from being locked forever. In addition, for every round, the collector is elected by VRF. Therefore, the probability that collector is malicious for more than $I$ consecutive rounds becomes negligible (i.e., smaller than a given $\rho$) such that ($I > \log_p \rho$).

A realistic assumption made in in-production systems, e.g. Google Spanner\cite{DBLP:journals/tocs/Spanner}, is that validators have access to a globally synchronized clock with a known bounded skew. We implement the periodic synchronization based on synchronized clock to help the delayed validators catch up. For every time period $T=O(n)$, every validator broadcasts a \emph{State} message for synchronization. In Equation (\ref{equation:state}), ${R}_{l,v}$ is the round in which a validator runs at, and $Cert$ is the latest commit certificate owned by the validator. Therefore, each validator broadcasts a \emph{State} message after time $xT$($x=1,2,...$).
\begin{equation}\label{equation:state}
S=\left<Cert, l, v\right>
\end{equation}
First, whenever a validator receives a \emph{State} message with $Cert$ succeeding itself, it synchronizes data from others as before. Once a validator receives $n-f$ same $Cert$ for a block at the current height, it accepts $Cert$ and move to next height. 
Second, upon receiving $n-f$ \emph{State} messages with larger round number than itself at current height,  validator jumps to the round with smallest round number. 

The manner of periodic synchronization guarantees that the delayed validator knows the latest state of consensus and takes its initiative to synchronize data from others to enter larger height. The Claim \ref{claim:liveness} guarantees the liveness of LinSBFT, and the proof is presented in Appendix \ref{append:liveness}.
\begin{claim}\label{claim:liveness}
	In a partially synchronous network, LinSBFT reaches consensus for any block height within finite time.
\end{claim}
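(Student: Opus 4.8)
The plan is to give a standard partial-synchrony liveness argument, adapted to the VRF-based collector selection and the locking mechanism of LinSBFT. Fix an arbitrary height $l$ that has not yet been decided; I will show that, after the Global Stabilization Time (GST), the protocol produces a certificate at height $l$ within finite time, after which finalization follows from the two-height pipelining of Section \ref{subsec:overview}. First I would argue that all honest validators eventually occupy a common round at height $l$ for the whole duration of its timeout. Three facts drive this: after GST every message between honest validators arrives within $\Delta$; the periodic \emph{State} broadcasts of Equation (\ref{equation:state}), combined with the rule that a validator jumps to the smallest advertised round once it sees $n-f$ \emph{State} messages with larger round numbers, force the honest validators to converge to a common round; and the timeout-doubling rule of Equation (\ref{equation:timeout}), which gives $TO_p(v+1)=2\,TO_p(v)$, guarantees that after finitely many view changes the per-round window exceeds the constant multiple of $\Delta$ needed to complete a full \emph{Propose}--\emph{Vote}--certificate exchange. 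Hence there is a round $R_{l,v^\star}$ in which at least $n-f$ honest validators are simultaneously present for the entire window.

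Next I would use the random collector selection of Section \ref{subsec:viewchange} to secure an honest collector. The probability that the collectors of $I$ consecutive rounds are all malicious is at most $(1/4)^I$, negligible for $I>-\log_4\rho$; since the timeout only grows, the synchronization window of the previous step persists, so except with negligible probability the protocol reaches, within a constant number of rounds beyond $v^\star$, a round whose collector is honest while all honest validators remain co-located.

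The crux is then to show that an honest collector in such a round necessarily drives a decision. The honest collector proposes its propose-locked block $PL_l^{(u)}$ (or a fresh block if unlocked) carrying the highest certificate it assembled from the incoming votes. I would argue, via Lemma \ref{lemma:parallel} and the vote-/propose-unlock rules of Section \ref{subsec:lock}, that every honest validator can legally vote for this proposal: any honest validator locked on a strictly smaller block unlocks upon seeing the higher certificate in the proposal, while Lemma \ref{lemma:parallel} rules out any honest validator being locked on a conflicting, incomparable block. Since at least $n-f$ honest validators then vote, the collector derives the $(n,n-f)$ threshold signature, forms the certificate, and proceeds; delayed validators are meanwhile carried forward by the synchronization procedure of Section \ref{subsec:future_vote}, so no honest validator is stranded below height $l$.

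The main obstacle is this third step, namely excluding a locking-induced deadlock under VRF leader rotation. Because a different, unpredictable collector is chosen each round, one must verify that a single honest proposal is simultaneously votable by the entire honest quorum despite their heterogeneous locks; this is precisely where the separation of \emph{Propose-Lock} from \emph{Vote-Lock} and the ``unlock only on a strictly larger certificate'' rule matter, and where Lemma \ref{lemma:parallel} (together with Lemma \ref{lemma:unlock}) is needed to exclude incomparable locks among honest validators. The delicate point is to confirm that the honest collector always holds, or can reconstruct from the votes' attached certificates, a certificate no lower than every honest validator's lock, so that the proposal uniformly unlocks and attracts the full honest quorum.
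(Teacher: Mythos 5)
Your first two steps (round synchronization via the \emph{State} messages of Equation (\ref{equation:state}) plus the timeout doubling of Equation (\ref{equation:timeout}), then an honest collector within a constant number of rounds by VRF selection) match the paper's argument in Appendix \ref{append:liveness} (Lemmas \ref{lemma:height}--\ref{lemma:move}). The genuine gap is in your third step, and it is precisely the point you flag as ``delicate'' and then leave unresolved. In LinSBFT the node that \emph{proposes} the block voted on in round ${R}_{l,v}$ and the node that \emph{aggregates} the votes of that round are different validators: upon timer expiry it is the collector $C_{l,v-1}$ of the previous round that issues the proposal for round $v$ (Algorithm 1/Algorithm 3, \emph{enterPropose}$(l,v+1)$ on $T_c$ expiry), while the VRF-selected $C_{l,v}$ is the one who receives the votes and derives the threshold signature. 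Your argument assumes a single honest ``collector'' does both -- proposes the lock-compatible block carrying the highest certificate \emph{and} then forms the $(n,n-f)$ threshold signature from the resulting votes. One honest node is not enough: the proposer of the universally votable block may be honest while the aggregator of that same round is malicious and simply withholds the certificate (or conversely, an honest aggregator may never receive a proposal that all honest validators can vote for, because the malicious proposer of its round equivocated or stayed silent).

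The paper closes exactly this hole in Lemma \ref{lemma:sync} by requiring \emph{two successive} honest collectors $C_{l-1,r'}$ and $C_{l-1,r'+1}$: votes carry the sender's latest certificate (Equation (\ref{subequation:vote})), so the first honest collector learns from the incoming votes the highest certificate -- the one that, by the unlock rules of Section \ref{subsec:lock} and Lemma \ref{lemma:parallel}, vote-unlocks and propose-unlocks every honest validator -- and, if it cannot itself assemble $n-f$ votes, re-proposes the block matching that certificate; the \emph{second} honest collector then aggregates the unanimous honest votes into a valid certificate. Since consecutive collectors are independent VRF draws, two honest ones in a row occur with probability at least $(3/4)^2$ per attempt, so this still happens within an expected constant number of rounds and the claim stands; your proof, however, does not establish the hand-off between these two roles, which is the crux of liveness under this pipelined proposer/collector structure. (A secondary, non-fatal difference: the paper argues height-inductively -- all honest validators reach within one height of the maximum, then all reach it, then at least one advances -- rather than fixing a single height $l$ as you do, which is what lets it handle validators stranded at heterogeneous heights by faulty leaders.)
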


\noindent\textbf{Linear complexity}: LinSBFT terminates at each block height after amortized-$O(n)$ transmissions with the tricks of LVC, threshold signature and VRF, unless with negligible probability.  Claim \ref{claim:linear} states the linear complexity of LinSBFT.  Note that the messages sent by malicious validators are not counted and we only consider the messages sent during the network is synchronized. In an asynchrony network, any agreement cannot be reached, therefore it makes no sense to count these messages. 

\begin{claim}\label{claim:linear}
	In a partially synchronous network, unless with negligible probability, LinSBFT terminates after amortized-$O(n)$ transmissions at each block height.
\end{claim}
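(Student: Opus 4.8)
The plan is to bound the total number of transmissions incurred at a single block height by splitting them into three categories --- the ordinary-case messages, the view-change messages, and the catch-up (synchronization) messages --- together with the per-epoch setup cost that the word \emph{amortized} refers to, and to show that each category contributes only $O(n)$, with all the probabilistic slack collected into a single negligible failure probability via a union bound.

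First I would account for the ordinary case. At each height the chosen leader performs a single \emph{Propose} broadcast, costing $O(n)$ transmissions, and every validator replies with one \emph{Vote} directed at the single collector, for another $O(n)$. The crucial point, already supplied by Section \ref{subsec:preliminaries}, is that threshold signatures compress the $n-f$ votes into a constant-size certificate $Cert_{l,v}$; hence no message carries an $\varOmega(n)$-size payload and the pipelined \emph{Prepare}/\emph{Commit} phases do not hide an extra factor of $n$. Thus an ordinary height costs $O(n)$ deterministically.

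Next I would bound the view-change cost. By the random collector selection of Section \ref{subsec:viewchange}, the event that a round has a malicious collector is, for the static and delayed adversary, essentially an independent coin of bias below $1/4$ per round, because the VRF output in Equation (\ref{equation:collector}) is unpredictable until $Cert_{l-2,v'}$ is revealed; therefore the probability of a run of $I$ consecutive faulty collectors is below $(1/4)^I$, which drops below any fixed $\rho$ once $I > -\log_4 \rho = O(1)$. Each individual view change, handled by Linear View Change, has every validator send its \emph{highest} certificate (again of constant size thanks to threshold signatures) to the next collector, costing $O(n)$; so, conditioned on the high-probability event of at most $I=O(1)$ view changes, the view-change contribution is $O(I)\cdot O(n)=O(n)$. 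For the catch-up traffic I would invoke Lemma \ref{lemma:futurevote} directly: unless with negligible probability only a constant number of validators rebroadcast a certificate to help delayed peers, so synchronization also adds $O(n)$, the $DUR$ test of Equation (\ref{equation:duration}) being what rules out the $O(nf)=O(n^2)$ flooding attack. Finally, the $O(n^2)$ cost of exchanging keys and addresses plus the $O(n\,\mathrm{polylog}\,n)$ DKG is incurred once per epoch of length $O(n)$, which amortizes to $O(n)$ per block and explains the word \emph{amortized} in the statement. Summing the three per-block categories and the amortized setup, and taking a union bound over the finitely many negligible-probability failure events, yields the amortized $O(n)$ bound claimed.

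The main obstacle I expect is the view-change count: the clean geometric bound $(1/4)^I$ requires that successive collectors be drawn with genuinely fresh randomness that the adversary cannot anticipate, so the delicate step is to justify that the VRF seed $Cert_{l-2,v'}$ becomes known to the adversary only after the corruption-delay window has elapsed (the assumption stated in Section \ref{subsec:viewchange}), making the per-round ``collector is faulty'' events behave like independent trials rather than something the adversary can correlate into a long bad run. A secondary subtlety is making the amortization honest: a single unlucky height can cost more than $O(n)$, so I would phrase the per-height bound as a high-probability statement and argue that the heavy tail is suppressed geometrically, and that the once-per-epoch $O(n^2)$ setup genuinely divides among the $O(n)$ blocks of the epoch rather than being charged to one block.
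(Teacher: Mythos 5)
Your proposal follows essentially the same route as the paper's own proof: it decomposes the cost into the ordinary case (constant-size threshold-signature certificates giving $O(n)$ per height), view changes (a constant number of rounds with high probability via the VRF geometric bound of Section \ref{subsec:viewchange}), and catch-up traffic (capped by Lemma \ref{lemma:futurevote}), which is exactly the paper's decomposition. The only difference is that you additionally charge the per-epoch DKG and key-exchange setup against the $O(n)$ blocks of an epoch to justify the word \emph{amortized} --- a point the paper makes in Section \ref{subsec:setup_ts} but silently omits from its proof of this claim --- so your version is, if anything, slightly more complete.
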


\begin{proof}
	In the ordinary case, the size of threshold signature is constant and the consensus is achieved with single round.  For malicious collector, it may take no more than constant of  rounds to reach agreement as discussed in Section \ref{subsec:viewchange}. As discussed in Section \ref{subsec:future_vote}, faulty validators can lead honest validators to broadcast proposals by sending future votes to them.  According to Lemma \ref{lemma:futurevote}, the number of validators  broadcasting messages is constant. The probability that the next collector is malicious becomes negligible after a constant number of collector changes according to the VRF. Therefore, in total, the transmissions for the consensus of each block height are $O(n)$.
\end{proof}

\section{Experimental Evaluation} \label{sec:evaulation}

\begin{figure*}[t]
	\centering
	\setlength{\abovecaptionskip}{0.cm}
	\subfigure[$TXs=2000, \lambda=10s$, crash fault]{\label{subfigure:fault_2000_10s}
		\includegraphics[width=1.65in,height=1.3in]{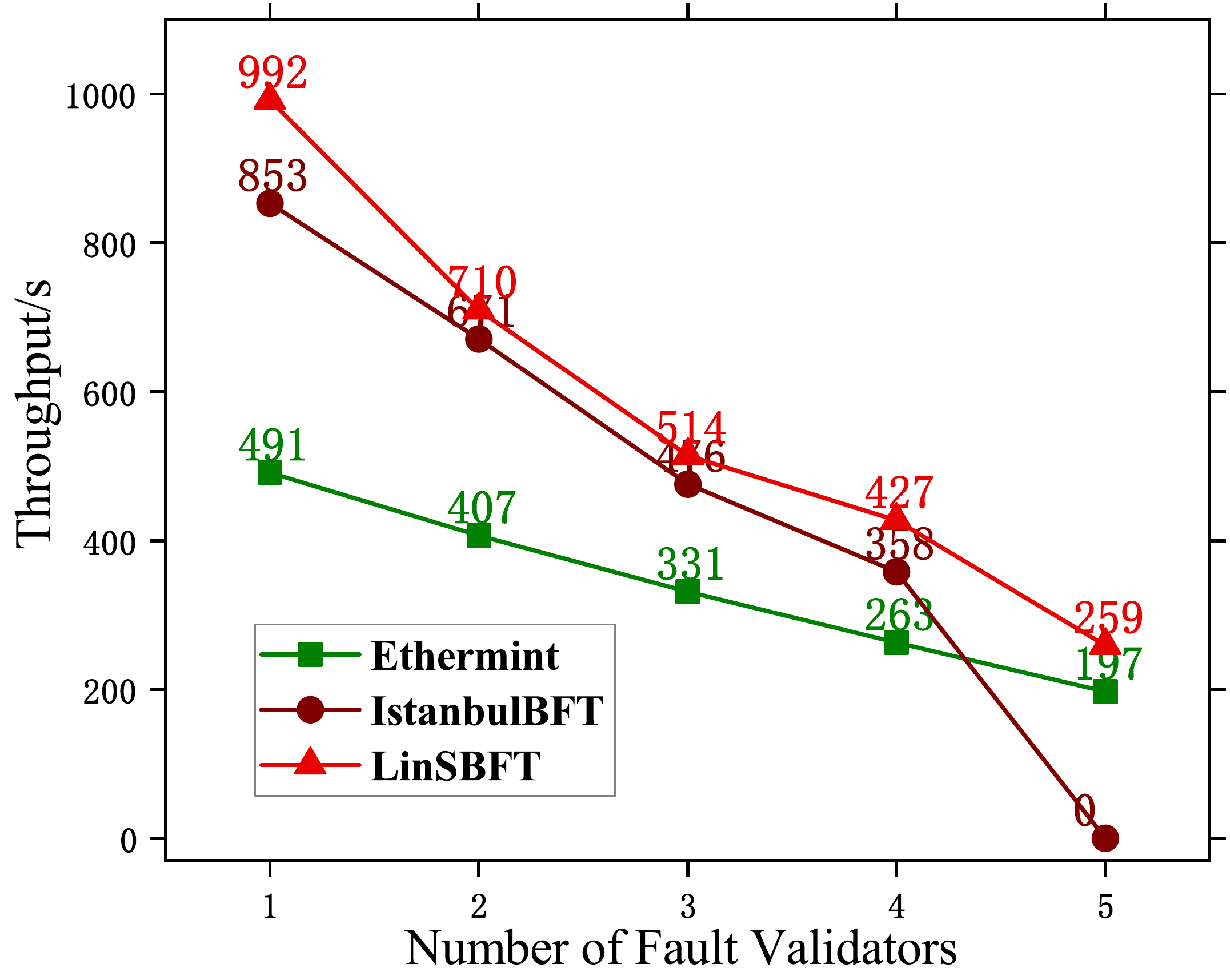}}
	\subfigure[$TXs=2000, \lambda=3s$, crash fault]{\label{subfigure:fault_2000_3s}
		\includegraphics[width=1.65in,height=1.3in]{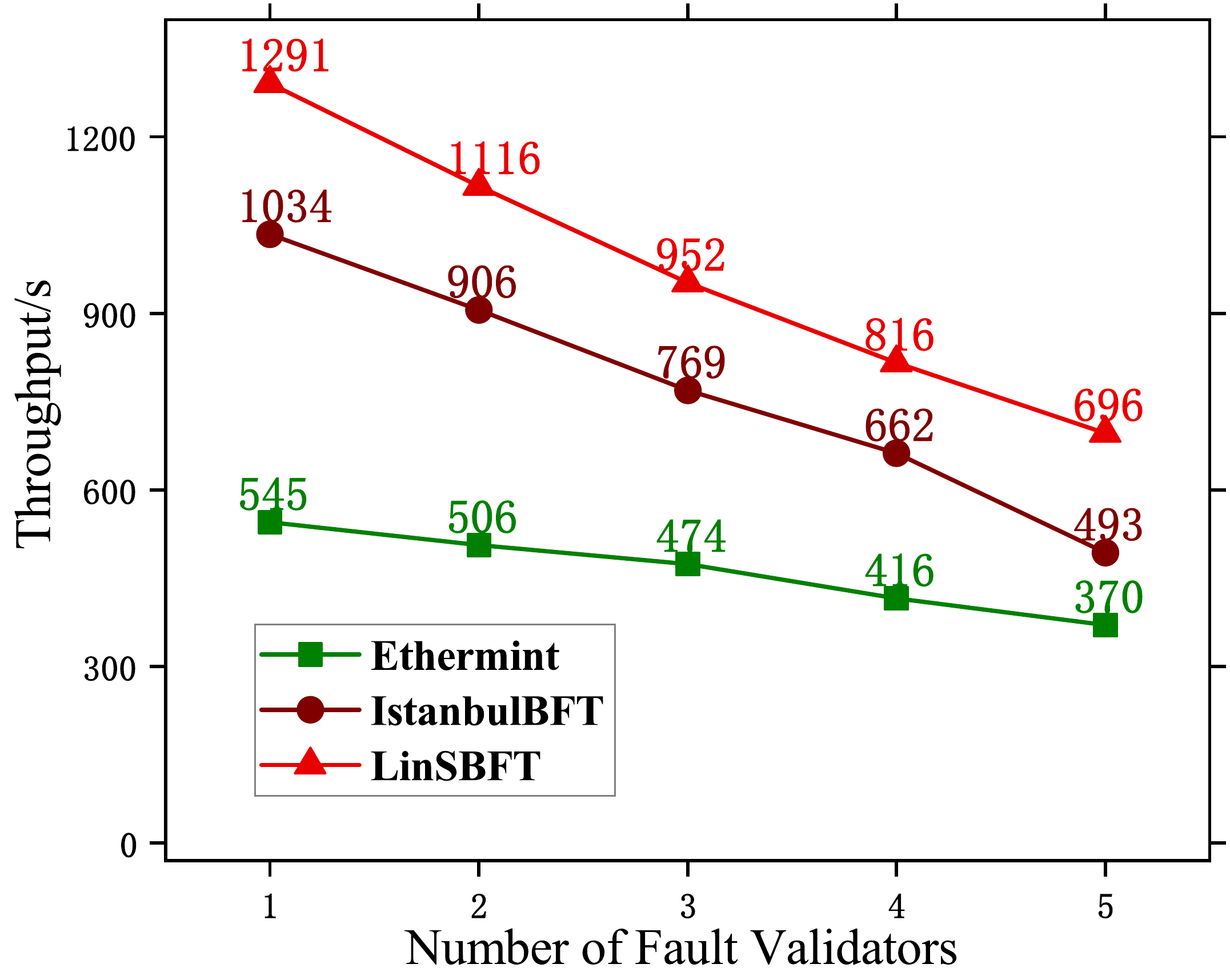}}
	\subfigure[$TXs=4000, \lambda=3s$, crash fault]{ \label{subfigure:fault_4000_3s}
		\includegraphics[width=1.65in,height=1.3in]{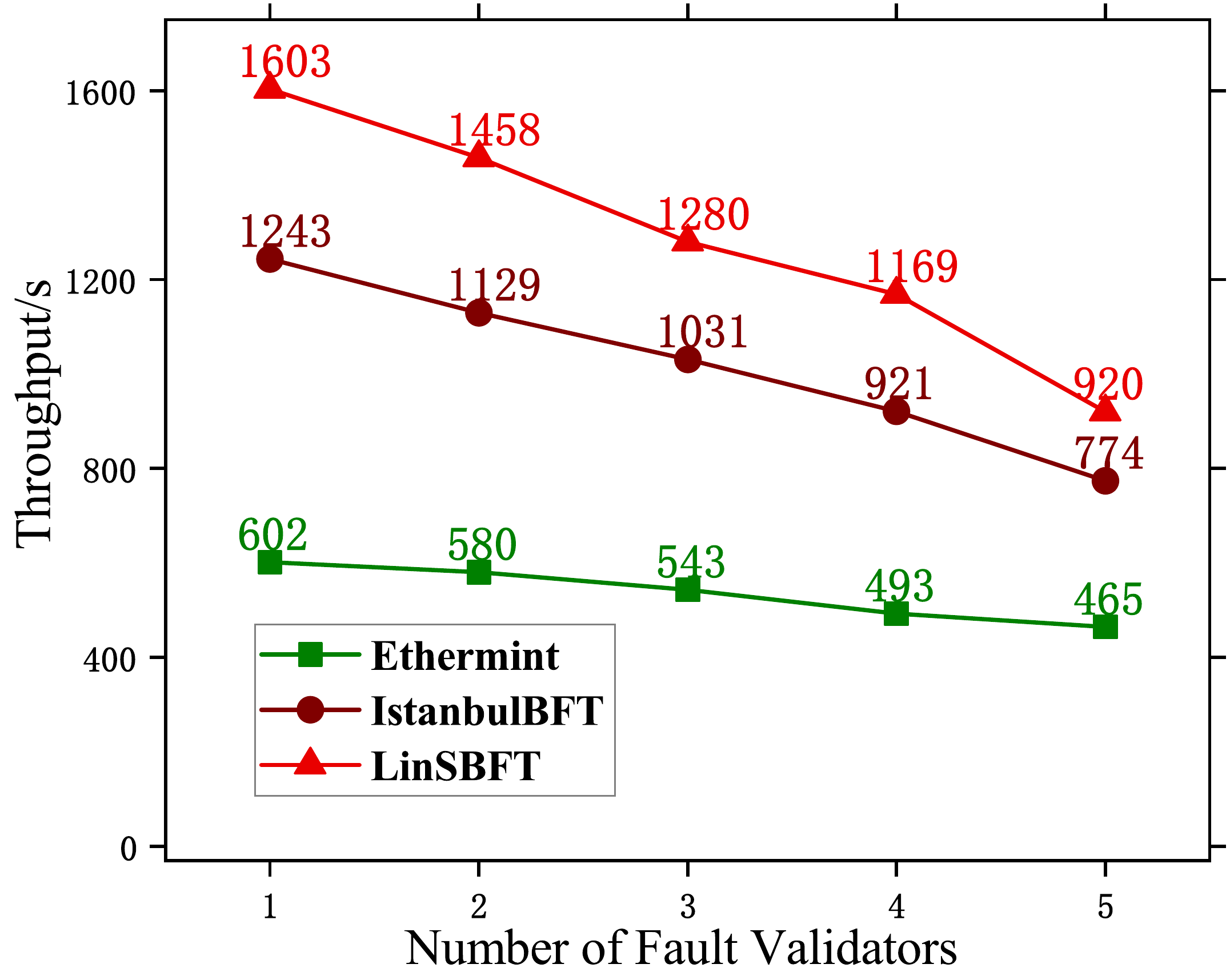}}
	\subfigure[$TXs=2000, \lambda=10s$, crash fault]{\label{subfigure:fault_2000_10s_time}
		\includegraphics[width=1.65in,height=1.3in]{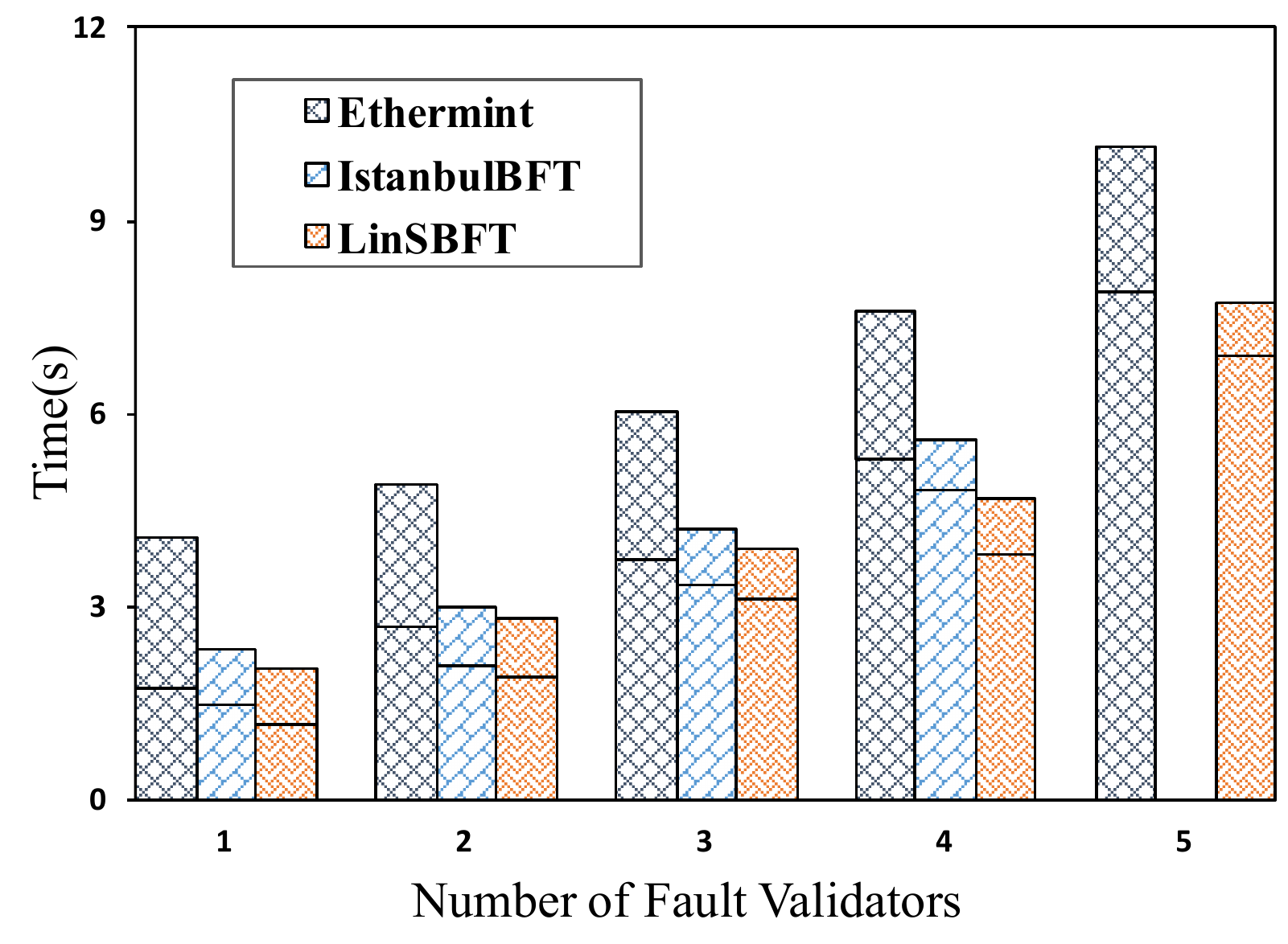}}
	\subfigure[$TXs=2000, \lambda=3s$, crash fault]{\label{subfigure:fault_2000_3s_time}
		\includegraphics[width=1.65in,height=1.3in]{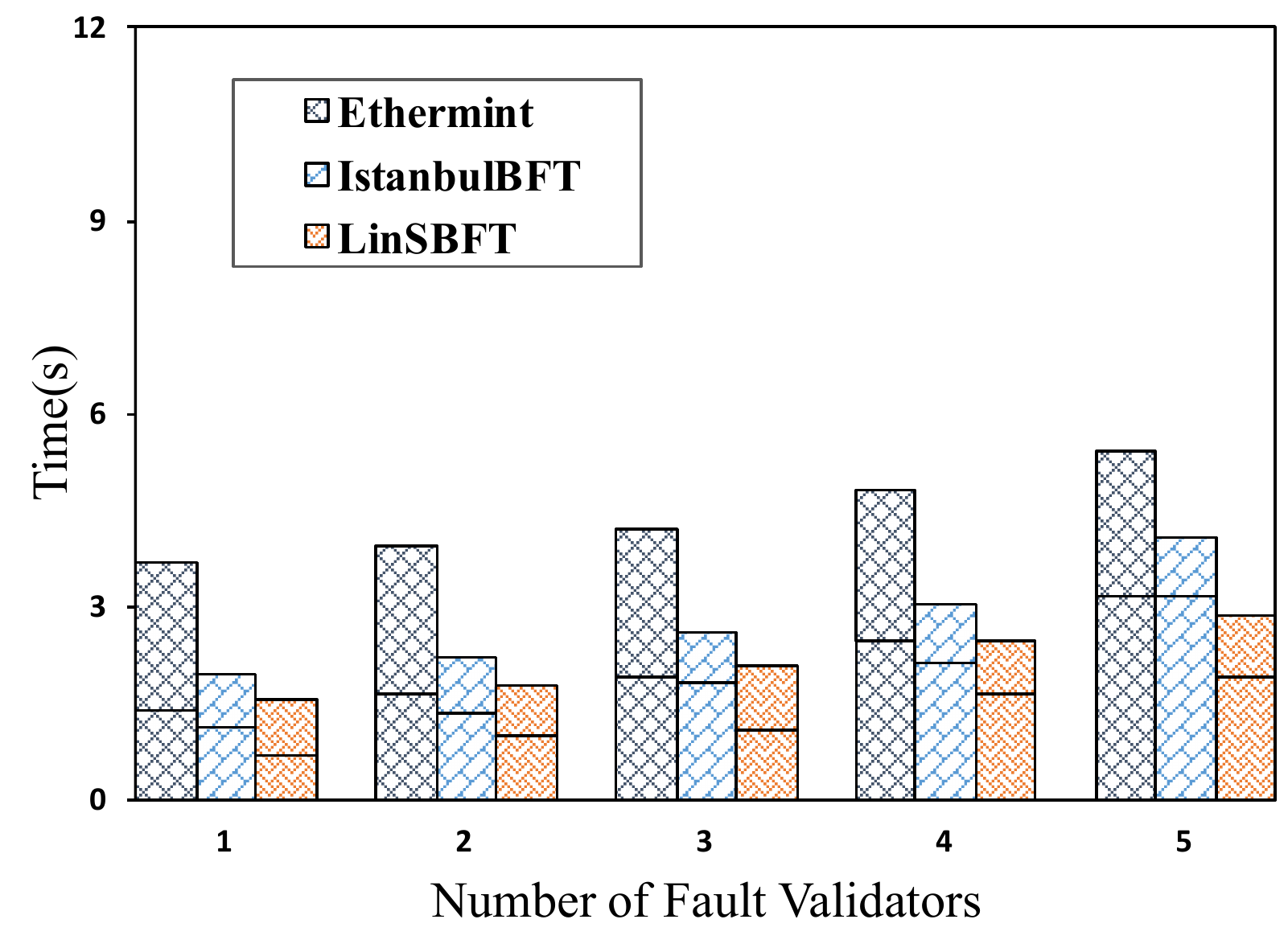}}
	\subfigure[$TXs=2000, \lambda=10s$, byzantine fault]{\label{subfigure:fault1_2000_10s}
		\includegraphics[width=1.65in,height=1.3in]{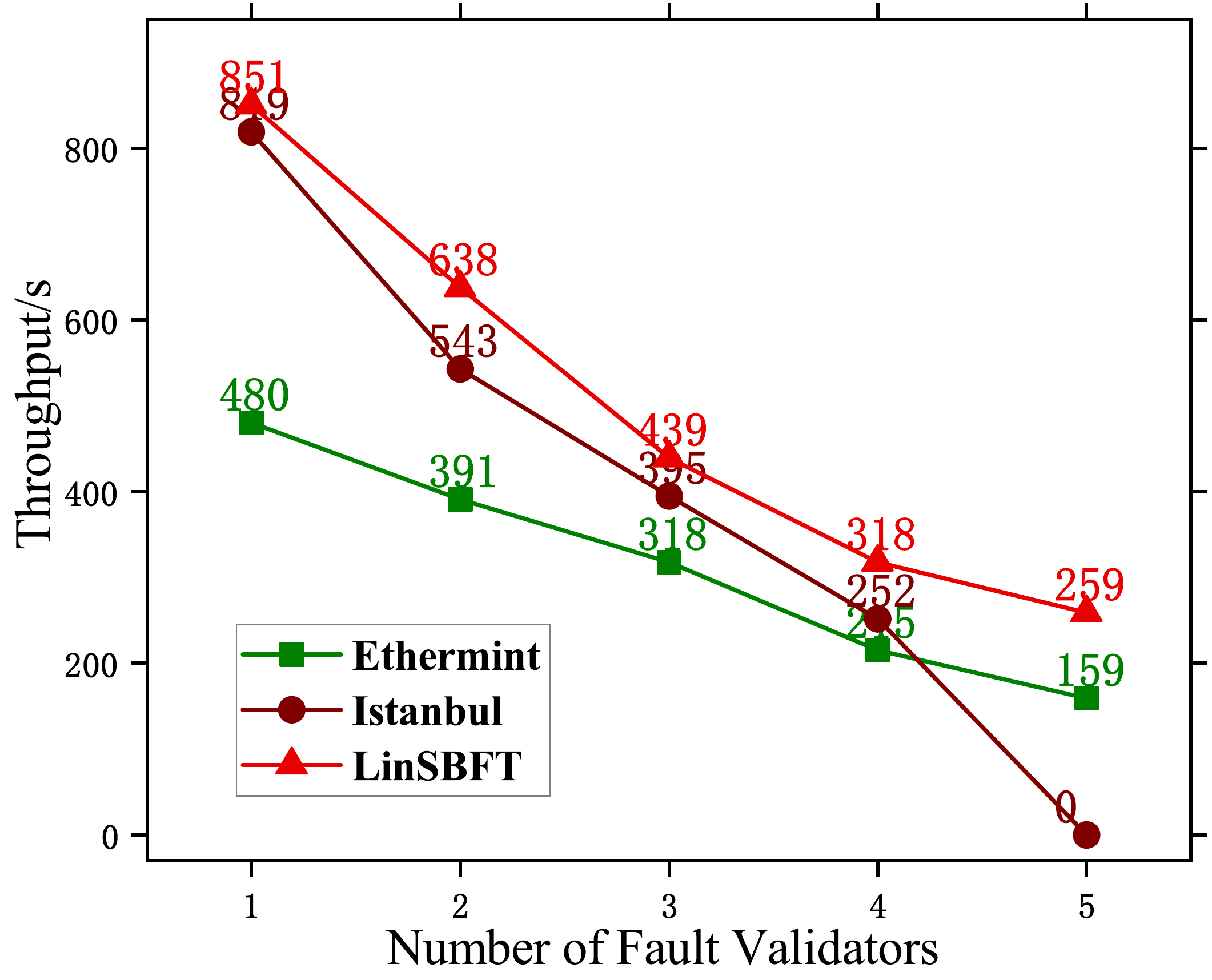}}
	\subfigure[$TXs=2000, \lambda=3s$, byzantine fault]{\label{subfigure:fault1_2000_3s}
		\includegraphics[width=1.65in,height=1.3in]{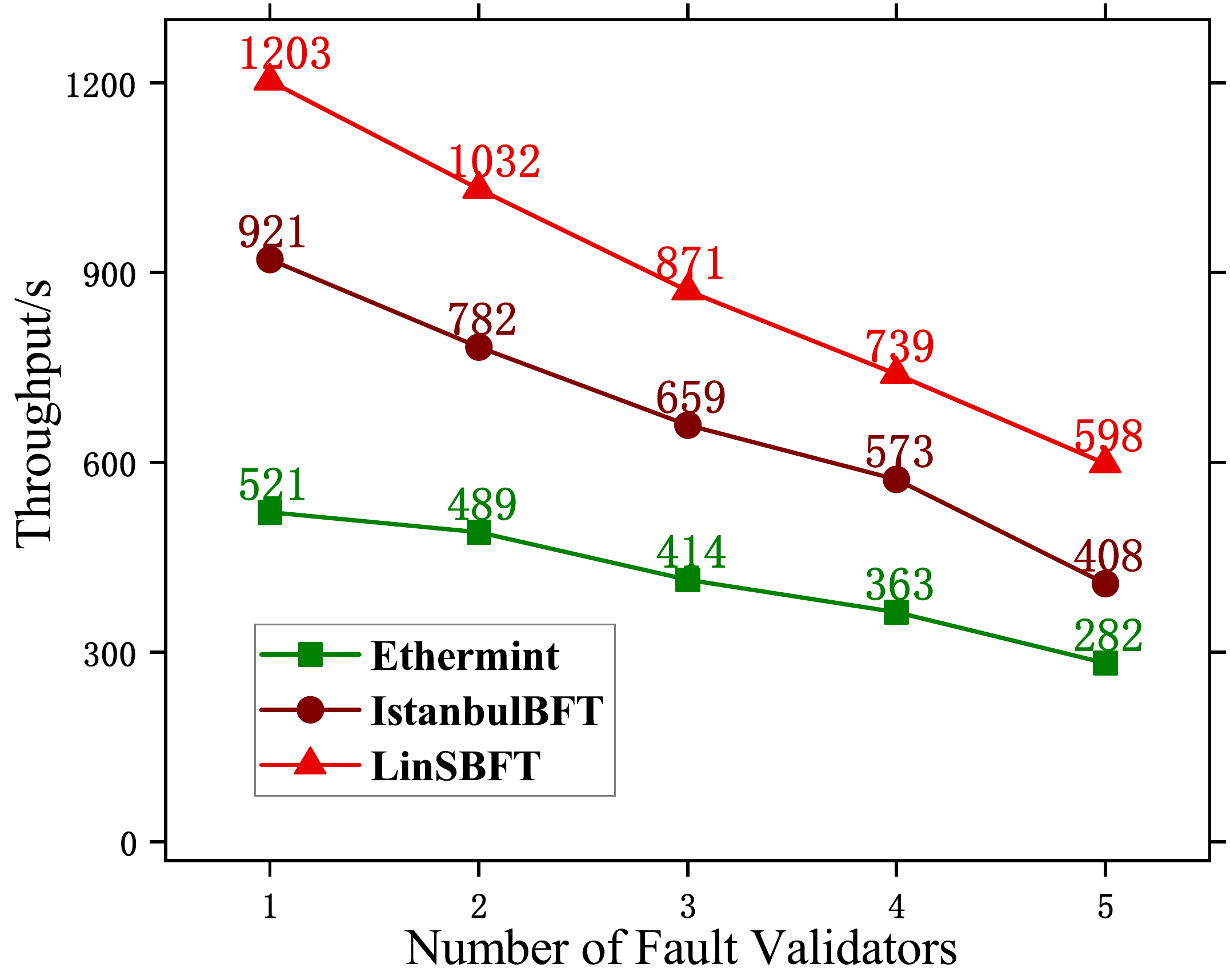}}
	\subfigure[$TXs=4000, \lambda=3s$, byzantine fault]{\label{subfigure:fault1_4000_3s}
		\includegraphics[width=1.65in,height=1.3in]{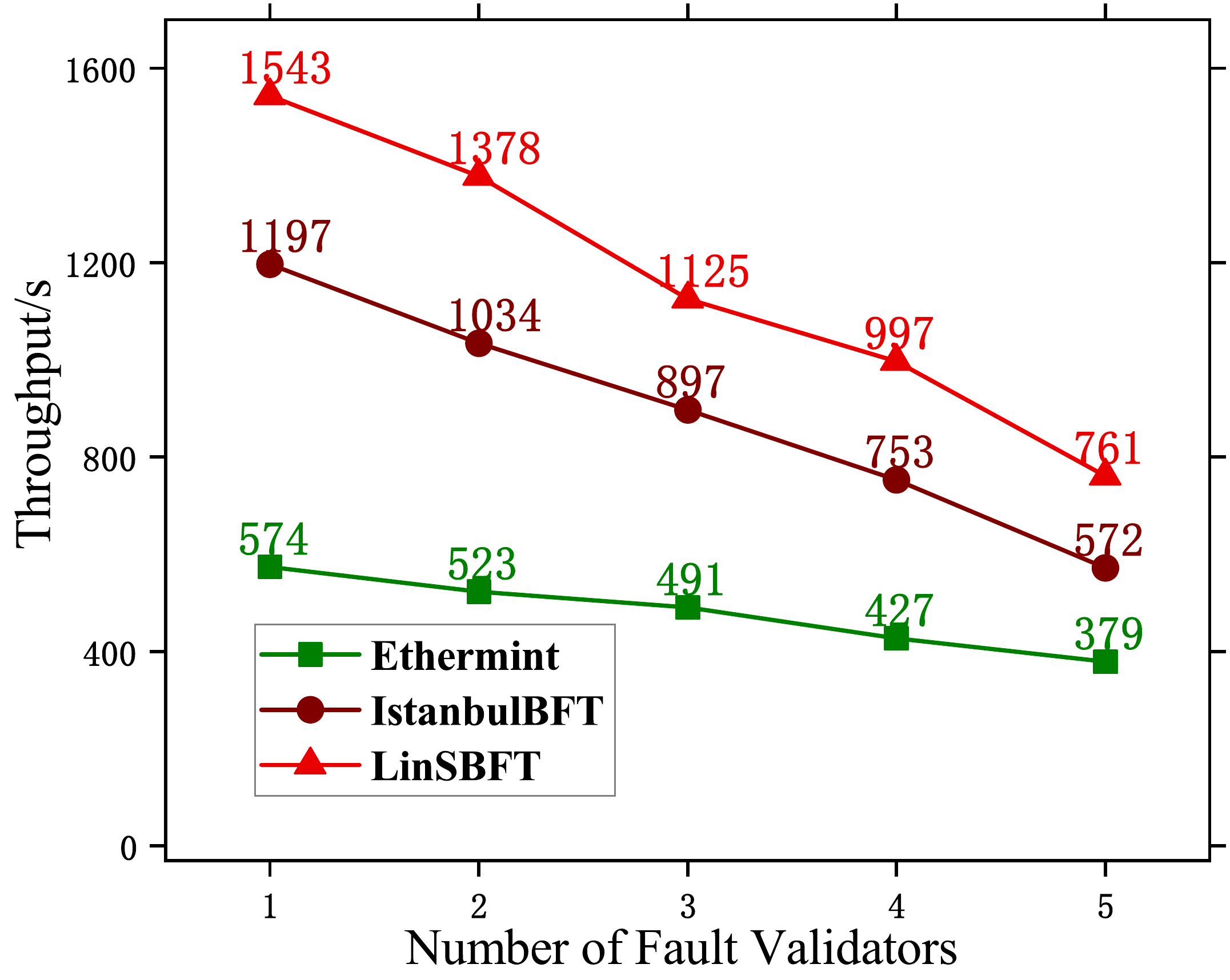}}
	\caption{\label{figure:fault} Throughput and latency with faulty validators.}
\end{figure*}

We implemented LinSBFT in the PChain platform\footnote{https://www.pchain.org/cn}, with open APIs to testers and developers. PChain is backward compatible with all Ethereum Virtual Machine (EVM) instructions. Hence, smart contract transactions for Ethereum can be executed in PChain without modification. Digital signature and VRF are implemented using the BN256 elliptic curve\cite{bn256}, and the hash function is Keccak-256 as in Ethereum. Each validator maintains a TCP connection with its peers, and all validators are reachable with each other via the P2P network. The PChain TestNet is deployed in a cluster of virtual machine instances on Amazon EC2, scattered in several geographic regions including North America, Europe, and Asia. In our experiments, each node is equipped with 16 CPU cores running at 2.10GHz, 96GB RAM, 3TB disk space and up to 1 Gbps network throughput (throttled in our experiments). We compare our system to IstanbulBFT\footnote{https://github.com/ethereum/EIPs/issues/650} and Ethermint, which are also in-production BFT protocols for Ethereum-compatible systems. We do not compare to research prototypes that have not been tested in production, since implementation details can affect performance significantly, as we show below.

We vary the number of validators from $2$ to $64$. Each machine continuously generates transactions with multiple threads, and these transactions are gossiped to all validators. We employ a dataset collected from the PChain TestNet in our experiments. To ensure that the consensus overhead dominates overall latency rather than CPU time consumed by the EVM, we limit the experiments to token transfer smart contract transactions.


We evaluate the LinSBFT on two metrics: throughput and transaction latency in terms of total wallclock time. In order to conduct a comprehensive evaluation, we also simulated validators with no fault, fail-stop faults and byzantine faults, respectively. In the no-fault setting, we test throughput and wallclock time for all systems, with number of transactions ($TXs$) in a block varying from 2000 to 8000. We also deployed all systems in a limited-bandwidth setting to show the advantages of LinSBFT's lower overall transmission volume. In the setting with faulty validators, we test the throughput and response time of all BFT systems with different timeout parameter $\lambda$; a timeout triggers the view change subprotocol necessary for ensuring liveness. Regarding the Byzantine fault setting, it is infeasible to simulate all possible attack strategies to test the safety of LinSBFT (guaranteed by Claim \ref{claim:safety}). Our experiments consider the following strategies: remaining silent (called \emph{crash fault}) and sending different messages to a subset of validators (\emph{Byzantine fault}), and demonstrate LinSBFT's robustness in the presence of such malicious validators.

\subsection{Performance with No Fault}

\textbf{Throughput}. 
Figs.\ref{subfigure:ordinary_tps_2000}-\ref{subfigure:ordinary_tps_8000} illustrate the throughput for the three BFT systems in the case where there is no faulty validator. To avoid view change, we set the timeout parameter $\lambda$ to a large value for all systems. Compared to IstanbulBFT and Ethermint, LinSBFT obtains consistently higher throughput in terms of transactions per second (TPS). The reason is that LinSBFT needs only a single phase of voting to achieve consensus. Besides, although the network condition is favorable (1Gbps bandwidth), the use of threshold signature in LinSBFT still improves performance, as it avoids broadcasting and collecting $O(n)$ voting messages at each node, as is done in IstanbulBFT and Ethermint. The performance gain of LinSBFT is more pronounced as the number of validators increases.

As expected, the throughput for all systems increase with the number of transactions per block. It is worth noting that when the size of participant set is small $(<3)$, each validator proposes the number of transactions is less than the size of block because verification of signatures consumes most of CPU time. The performance of Ethermint is significantly lower than the other two systems due to its protocol implementation details, which is discussed in the following latency analysis.

\textbf{Latency}. Figs. \ref{subfigure:ordinary_time_2000}-\ref{subfigure:ordinary_time_8000} show the latency for all three systems. In Figure \ref{figure:time_cost}, the top of each bar represents execution time for transactions and the bottom represents the time of reaching consensus per block. Figs. \ref{subfigure:ordinary_time_2000}-\ref{subfigure:ordinary_time_8000} illustrate that Ethermint has the highest time overhead for transaction execution, while IstanbulBFT and LinSBFT incur similar transaction execution time. The reason is that when finalizing a block, Ethermint needs to send an extra \emph{deliver\_tx} RPC request for each transaction to Ethereum for execution.  In addition, Ethermint needs to recheck all transactions in Tendermint's \emph{mempool} by sending \emph{check\_tx} RPC requests to Ethereum after the finalization of a block to delete invalid transactions (e.g., double spending ones). For the sake of fairness, we turn off the recheck function of Ethermint.  We have verified that the execution overhead of block dominates time cost in all systems, when the number of validators is small.  As the number of participants increases, the consensus time overhead increases because more validators lead to more time to broadcast messages, especially in IstanbulBFT and Ethermint. Under favorable network conditions, in general the bottleneck lies not in network traffic of consensus, but in transaction execution, as shown in Figure \ref{subfigure:ordinary_time_8000}. Nevertheless, LinSBFT only consumes half of the time of IstanbulBFT, thanks to its single-phase voting design and linear-communication achieved via threshold signatures. Ethermint takes more time for consensus than IstanbulBFT and LinSBFT, since the former's underlying Tendermint software partitions a block into parts with size of 64KB and broadcasts each parts concurrently. Similar to throughput of three systems, with the number of transactions in each block increasing, the time cost of three systems increases correspondingly.

\textbf{Limited bandwidth experiments}. To determine the effectiveness of LinSBFT's reduction in communication overhead, we throttle the bandwidth for each validator to 1Mbps and 8Mbps with a fixed block size of $TXs=4000$. Figs. \ref{subfigure:ordinary_4000_1Mbps} and \ref{subfigure:ordinary_4000_1Mbps_time} show the TPS and total time with 1Mbps bandwidth. The throughput of all systems decreases quickly with increasing number of validators, which indicate that communication cost has become the bottleneck. Note that when the number of validators reaches $64$ in Figure \ref{subfigure:ordinary_4000_1Mbps}, Ethermint and IstanbulBFT can no longer obtain consensus, despite their theoretical liveness guarantees. 

In terms of throughput, the performance gap between LinSBFT and  IstanbulBFT/Ethermint increases with the number of validators, due to their linear and quadratic transmission volume, respectively.  With $8$Mbps bandwidth,  LinSBFT achieves $497$ and $173$ TPS  with $32$ and $64$ participants  while Istanbul only has $132$ and $28$ TPS respectively. Results about 8Mbps are omitted due to space limitations. Results on total wallclock time, shown in Figure \ref{subfigure:ordinary_4000_1Mbps_time}, lead to similar conclusions, with LinSBFT consistently and significantly outperforming its competitors.

\subsection{Performance with Faulty Nodes}
\textbf{Crash faults}. We first evaluate the throughput and latency of the three systems with crash faults. The number of validators is fixed  to $16$, meaning that a BFT protocol can tolerant up to $5$ faulty validators. Figs. \ref{subfigure:fault_2000_10s}-\ref{subfigure:fault1_4000_3s} show the results with the number of faulty validators varying from $1$ to $5$. In Figs. \ref{subfigure:fault_2000_3s} and \ref{subfigure:fault_4000_3s}, the block size is $2000$ and $4000$ transactions, respectively. In Figs. \ref{subfigure:fault_2000_10s} and \ref{subfigure:fault_2000_3s}, timeout parameter $\lambda$ is set to $10s$ and $3s$, respectively. Observe that with increasing number of faulty validators,  throughput generally decreases since crashed validators lead to timeouts and expensive view changes. More faulty validators lead to more frequent view changes, and thus, larger costs. With a high $\lambda$, LinSBFT and IstanbulBFT demonstrate similar performance. When $\lambda$ becomes lower as shown in Figs. \ref{subfigure:fault_2000_3s} and \ref{subfigure:fault_4000_3s},  LinSBFT has significant advantage over IstanbulBFT and Ethermint. Similar to the ordinary case, a larger block size results in higher performance with fault validators. Notes that validators in IstanbulBFT can no longer achieve consensus in the setting with $5$ fault validators and $\lambda=10$s.

Figs. \ref{subfigure:fault_2000_10s_time}-\ref{subfigure:fault_2000_3s_time} show that total running time of all systems increases with the number of faulty validators. We measure the average time cost for transaction execution and consensus per block. As we expected, a larger $\lambda$ results in higher wall clock time for consensus. Notably, when $\lambda$ is relatively small, LinSBFT achieves 1.5x performance boost compared to IstanbulBFT, in terms of TPS.

\textbf{Byzantine faults}. Finally, we evaluate the performance of all systems in the presence of Byzantine faults. Figs. \ref{subfigure:fault1_2000_10s}-\ref{subfigure:fault1_4000_3s} illustrate that throughput of all systems decreases with increasing number of faulty validators, as expected. Compared to the results with crash faults, the performance of all three systems has declined to varying degrees due to the locking mechanism of systems. 
Specifically, we assume that when a malicious validator becomes a leader (collector), it only sends messages to half of the honest validators, who may be locked on the proposal until a locked validator becomes the leader. Compared to Ethermint and IstanbulBFT, the performance degradation of LinSBFT is the lowest, since the \emph{Propose-Lock} makes a successful collector propose the locked block that the locked validators may vote for. However, with consecutive faulty collectors, LinSBFT incurs increased overhead due to locking, and requires more rounds to change to an honest collector. Due to space limitations, results on wallclock time are omitted, which lead to similar conclusions as those for crash fault. In practice, we expect the proportion of faulty validators to be low, since such behavior can be disincentivized via the block reward mechanism, which is outside the scope of this paper.

\section{Related work}\label{sec:rw}

BFT protocols have been extensively studied in the traditional distributed systems setting. An early influential work is the DLS protocol \cite{DBLP:journals/jacm/DworkLS88}, which achieves safety and liveness, at the expense of prohibitive $O(n^4)$ communication cost. Castro and Liskov propose PBFT \cite{DBLP:conf/osdi/PBFT}, which incurs $O(n^2)$ transmissions in the ordinary case. As pointed out in \cite{DBLP:journals/corr/abs-1803-05069}, this is essentially an optimistic run, and the protocol falls back to DLS when the optimistic run fails. Later work, e.g., Zyzzyva \cite{DBLP:journals/tocs/Zyzzyva}, further improves the efficiency of the optimistic run.

In traditional BFT researches \cite{DBLP:conf/osdi/PBFT, DBLP:journals/tocs/Zyzzyva, liu2018scalable} discussed above, it is commonly assumed that there is a fixed cluster of verifier nodes. Meanwhile, in many protocols, the same node stays as the leader unless a view change occurs. Further, an honest node is assumed to stay honest, regardless of the number of transaction batches it verifies. These assumptions are unrealistic in a public blockchain setting. Tendermint \cite{buchman2016tendermint}, based on PBFT, runs consensus for each block with a rotating leader scheme, which is more suitable for blockchains. However, Tendermint still incurs $O(n^3)$ worst-case communication volume, and its adaptation Ethermint has been shown to perform poorly in our experiments.

Recently, Casper\cite{DBLP:journals/Casper} amortizes the cost of its BFT protocol by running consensus once for multiple (100 in \cite{DBLP:journals/Casper}) block heights. This design, however, gives much power to the block proposer. Hence, Casper involves a PoW mechanism for leader selection, which runs the risk of forks and 51\% attacks. Hot-Stuff \cite{DBLP:journals/corr/abs-1803-05069} improves the worst case communication complexity to $O(n^2)$, using a combination of linear view change and threshold signatures. SBFT \cite{DBLP:journals/corr/SBFT} reduces the communication complexity of the ordinary case using threshold signatures and collectors. Omniledger \cite{DBLP:conf/sp/Kokoris-KogiasJ18} achieves $O(\log n)$ time in the best case, with the help of the CoSi protocol \cite{GOOGLE:syta2016keeping}. Several protocols distinguish malicious nodes who actively attack the protocol with falsified messages from ones that may fail-top (e.g., in \cite{DBLP:journals/corr/SBFT}) or those that may go offline \cite{DBLP:conf/asiacrypt/PassS17}, and obtain stronger robustness in a setting where only a small fraction of nodes are actively malicious. None of these protocols, however, have been deployed in production for a large-scale blockchain network.

Another promising trend is BFT protocols with probabilistic guarantees on safety and liveness. Dfinity \cite{GOOGLE:Dfinity} uses a random sample set of nodes to verify a block. Since its safety guarantee is probabilistic, the sample set needs to be sufficiently large to obtain a low probability of failure. Hence, it still needs a scalable, deterministic BFT sub-protocol for the sample set. Algorand \cite{DBLP:conf/sosp/Alogrand} addresses the situation where the adversary is adaptive, who can instantly corrupt any node at will. As pointed out by Chan et al. \cite{DBLP:journals/corr/CEBA}, Algorand replies on a public key infrastructure, which may not exist in a public blockchain. LinSBFT does not consider an adaptive adversary since (\romannumeral1) for a fast protocol, compromising validators adaptively within a round is rather difficult, and (\romannumeral2) the only part of the protocol that is vulnerable to an adaptive dversary is random leader selection, for which the adversary can break the probabilistic guarantee on $O(1)$ leader rotations by corrupting a considerable portion of the nodes; this is difficult, however, for a larger $n$.

Finally, consensus protocols based on Directed Acyclic Graphs
(DAG)\cite{dagcoin,byteball,hashgraph} seek consensus on individual transactions rather than blocks, and confirm them concurrently by expanding a hash-linked graph of transactions. Such approaches, however, generally do not guarantee bounded latency, as a new transaction can wait indefinitely until another chooses to confirm it.
\section{Conclusions} \label{sec:conclusion}
The paper proposes LinSBFT, an in-production BFT protocol that achieves amortized $O(n)$ communication cost, requires only a single phase of voting in the ordinary case, satisfies deterministic guarantees on safety and liveness, and is suitable for  a public, permissionless blockchain setting with a dynamic validator set and potentially changing honesty. Experiments with real data demonstrate the advantages of LinSBFT in terms of throughput and latency, under various assumptions of node faults. Regarding future work, an interesting direction is to investigate the combination of LinSBFT with a randomized BFT protocol, as well as other scaling options such as sharding \cite{DBLP:conf/sp/Kokoris-KogiasJ18}.
{
\small
\bibliographystyle{abbrv}
\bibliography{reference}

\begin{thebibliography}{10}

\bibitem{DBLP:journals/corr/abs-1803-05069}
I.~Abraham, G.~Gueta, and D.~Malkhi.
\newblock Hot-stuff the linear, optimal-resilience, one-message {BFT} devil.
\newblock {\em CoRR}, abs/1803.05069, 2018.

\bibitem{hashgraph}
L.~Baird.
\newblock The swirlds hashgraph consensus algorithm: Fair, fast, byzantine
  fault tolerance.
\newblock {\em Swirlds, Inc. Technical Report SWIRLDS-TR-2016}, 1, 2016.

\bibitem{GOOGLE:SurveySignAggr}
D.~Boneh, C.~Gentry, B.~Lynn, H.~Shacham, et~al.
\newblock A survey of two signature aggregation techniques.
\newblock {\em RSA cryptobytes}, 6(2):1--10, 2003.

\bibitem{buchman2016tendermint}
E.~Buchman.
\newblock {\em Tendermint: Byzantine fault tolerance in the age of
  blockchains}.
\newblock PhD thesis, 2016.

\bibitem{DBLP:journals/Casper}
V.~Buterin and V.~Griffith.
\newblock Casper the friendly finality gadget.
\newblock {\em CoRR}, abs/1710.09437, 2017.

\bibitem{DBLP:conf/eurocrypt/CannyS04}
J.~F. Canny and S.~Sorkin.
\newblock Practical large-scale distributed key generation.
\newblock In {\em EUROCRYPT}, 2004.

\bibitem{DBLP:conf/osdi/PBFT}
M.~Castro and B.~Liskov.
\newblock Practical byzantine fault tolerance.
\newblock In {\em OSDI}, 1999.

\bibitem{DBLP:journals/corr/CEBA}
T.~H. Chan, R.~Pass, and E.~Shi.
\newblock Communication-efficient byzantine agreement without erasures.
\newblock {\em CoRR}, abs/1805.03391, 2018.

\bibitem{byteball}
A.~Churyumov.
\newblock Byteball: A decentralized system for storage and transfer of value.
\newblock {\em URL https://byteball. org/Byteball. pdf}, 2016.

\bibitem{DBLP:journals/tocs/Spanner}
J.~C. Corbett, J.~Dean, M.~Epstein, and et~al.
\newblock Spanner: Google's globally distributed database.
\newblock {\em {ACM} Trans. Comput. Syst.}, 31(3):8:1--8:22, 2013.

\bibitem{DBLP:conf/sigmod/BlockBench}
T.~T.~A. Dinh, J.~Wang, G.~Chen, R.~Liu, B.~C. Ooi, and K.~Tan.
\newblock {BLOCKBENCH:} {A} framework for analyzing private blockchains.
\newblock In {\em SIGMOD}, 2017.

\bibitem{DBLP:journals/jacm/DworkLS88}
C.~Dwork, N.~A. Lynch, and L.~J. Stockmeyer.
\newblock Consensus in the presence of partial synchrony.
\newblock {\em J. {ACM}}, 35(2):288--323, 1988.

\bibitem{DBLP:conf/eurocrypt/GennaroJKR99}
R.~Gennaro, S.~Jarecki, H.~Krawczyk, and T.~Rabin.
\newblock Secure distributed key generation for discrete-log based
  cryptosystems.
\newblock In {\em EUROCRYPT}, 1999.

\bibitem{DBLP:conf/sosp/Alogrand}
Y.~Gilad, R.~Hemo, S.~Micali, G.~Vlachos, and N.~Zeldovich.
\newblock Algorand: Scaling byzantine agreements for cryptocurrencies.
\newblock In {\em SOSP}, 2017.

\bibitem{DBLP:journals/corr/SBFT}
G.~Golan{-}Gueta, I.~Abraham, S.~Grossman, D.~Malkhi, B.~Pinkas, M.~K. Reiter,
  D.~Seredinschi, O.~Tamir, and A.~Tomescu.
\newblock {SBFT:} a scalable decentralized trust infrastructure for
  blockchains.
\newblock {\em CoRR}, abs/1804.01626, 2018.

\bibitem{GOOGLE:EOS}
I.~Grigg.
\newblock Eos - an introduction.
\newblock 2017.

\bibitem{GOOGLE:Dfinity}
T.~Hanke, M.~Movahedi, and D.~Williams.
\newblock Dfinity technology overview series, consensus system.
\newblock {\em arXiv:1805.04548}, 2018.

\bibitem{DBLP:conf/sp/Kokoris-KogiasJ18}
E.~Kokoris{-}Kogias, P.~Jovanovic, L.~Gasser, N.~Gailly, E.~Syta, and B.~Ford.
\newblock Omniledger: {A} secure, scale-out, decentralized ledger via sharding.
\newblock In {\em IEEE S$\&$P}, 2018.

\bibitem{DBLP:journals/tocs/Zyzzyva}
R.~Kotla, L.~Alvisi, M.~Dahlin, A.~Clement, and E.~L. Wong.
\newblock Zyzzyva: Speculative byzantine fault tolerance.
\newblock {\em {ACM} Trans. Comput. Syst.}, 27(4):7:1--7:39, 2009.

\bibitem{dagcoin}
S.~D. Lerner.
\newblock Dagcoin: a cryptocurrency without blocks, 2015.

\bibitem{liu2018scalable}
J.~Liu, W.~Li, G.~O. Karame, and N.~Asokan.
\newblock Scalable byzantine consensus via hardware-assisted secret sharing.
\newblock {\em IEEE Transactions on Computers}, 68(1):139--151, 2018.

\bibitem{bn256}
M.~Naehrig, R.~Niederhagen, and P.~Schwabe.
\newblock New software speed records for cryptographic pairings.
\newblock In {\em LATINCRYPT}, 2010.

\bibitem{GOOGLE:nakamoto2008bitcoin}
S.~Nakamoto.
\newblock Bitcoin: A peer-to-peer electronic cash system.
\newblock 2008.

\bibitem{GOOGLE:MaloneBitcoin}
K.~J. O'Dwyer and D.~Malone.
\newblock Bitcoin mining and its energy footprint.
\newblock 2014.

\bibitem{DBLP:conf/asiacrypt/PassS17}
R.~Pass and E.~Shi.
\newblock The sleepy model of consensus.
\newblock In {\em ASIACRYPT}, 2017.

\bibitem{GOOGEL:SHA3}
P.~Pritzker and P.~D. Gallagher.
\newblock Sha-3 standard: permutation-based hash and extendable-output
  functions.
\newblock {\em NIST}, pages 1--35, 2014.

\bibitem{DBLP:conf/edcc/BFT-smart}
J.~Sousa and A.~N. Bessani.
\newblock From byzantine consensus to {BFT} state machine replication: {A}
  latency-optimal transformation.
\newblock In {\em EDCC}, 2012.

\bibitem{GOOGLE:syta2016keeping}
E.~Syta, I.~Tamas, D.~Visher, D.~I. Wolinsky, P.~Jovanovic, L.~Gasser,
  N.~Gailly, I.~Khoffi, and B.~Ford.
\newblock Keeping authorities ``honest or bust" with decentralized witness
  cosigning.
\newblock In {\em IEEE S\&P}, 2016.

\end{thebibliography}
}

\begin{appendices}
\section{}
\subsection{Future Proposals and Votes} \label{append:future}

\noindent\noindent \textbf{Lemma} 4. \emph{Unless with the negligible probability, the number of validators who broadcast proposal is constant.}

\begin{proof}
	First, we proof that, in expectation,  an honest validator takes $4/3$ view changes to a round with honest collector. Assume it takes $I$ view changes for a validator to enter a round whose collector is honest. The collector is malicious for each round with a  probability $p=f/n$, and probability that $I$ equals $i$ is $Pr{\left[I=i\right]}=p^{i-1}\left(1-p\right)$. The expectation of $I$ is calculated as follow:
	
	{\footnotesize
	\begin{flalign*}
	E(I)&= \sum\limits_{i=1}^{\infty}{iPr{\left[I=i\right]}} =\left(1-p\right)\sum\limits_{i=1}^{\infty}{ip^{i-1}}\\
	&=\left(1-p\right)\left(\sum\limits_{i=1}^{\infty}{ix^{i-1}}\right)\bigg|_{x=p}\\
	&=\left(1-p\right)\left(\sum\limits_{i=1}^{\infty}{\left(x^{i}\right)'}\right)\bigg|_{x=p}=\left(1-p\right)\left(\sum\limits_{i=1}^{\infty}{x^{i}}\right)'\bigg|_{x=p}\\
	&=\left(1-p\right)\left({\frac{x}{1-x}}\right)'\bigg|_{x=p}=\left(1-p\right)\left({1-x}\right)^{-2}\bigg|_{x=p}\\
	&=\left({1-p}\right)^{-1}<\left(1-1/4\right)^{-1}=\frac{4}{3}
	\end{flalign*}
	}
	Second, the probability that the next collector is malicious becomes negligible after a constant number($I>-\log_{4}\rho$) through random leader selection. In practice, number $I$ floats around $E(I)$ in the vast majority of cases. Hence, the consensus for each height is terminated after constant number of rounds. 
	
	Last, if malicious validators want more honest validators to broadcast messages, they have to send future votes to honest collector. However, the calculated collectors may be fault with a probability $p$ as well. It makes no sense to send future vote to a fault validator. In order to win more time, malicious validator just remain silent. During the time for view change, valid future votes are produced. If there are $I$ consecutive malicious collectors, at most $I$ valid votes are produce and expectation of $I$ is not greater than $4/3$. When an honest validator becomes the collector runs at largest height, it propose a block with its largest $Cert$ that can help delayed validator catch up. In addition, the honest validator sends its vote with largest $Cert$ to corresponding collector. An honest collector for future round broadcasts a $Cert$ message to notify others upon receiving a vote certificate succeeding itself. Hence,  with the negligible probability, the number of validators who broadcast proposal is constant. 
\end{proof}

\subsection{Correctness for Liveness} \label{append:liveness}
In this part, we presents the proof of Claim \ref{claim:liveness} described in Section \ref{subsec:correctness} via three lemmas.
\begin{lemma}\label{lemma:height}
	Let $xT$ be a time after GST. It supposes that $l$ is the maximum height number of all honest validators. Then by $xT+\Delta$, all honest validators at least move to height $l-1$. 
\end{lemma}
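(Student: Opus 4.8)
The plan is to pin down a single honest validator sitting at the maximum height and show that its periodic \emph{State} broadcast, once delivered after GST, pulls every lagging honest validator up to height $l-1$. First I would observe that, since $l$ is by hypothesis the largest height reached by any honest validator, there is at least one honest validator $u^\ast$ running at height $l$. Because $u^\ast$ entered height $l$ only by accepting a proposal $P_{l,v}$ that carried a valid vote certificate for height $l-1$, and because it has not advanced to height $l+1$, its latest commit certificate is exactly $Cert_{l-1,v'}$, i.e. a certificate attesting that $n-f$ validators voted for the block at height $l-1$. This identification of $u^\ast$'s certificate is what fixes the target height of the lemma at $l-1$ rather than $l$: the block proposed at height $l$ may have been withheld by a faulty leader, but its predecessor's certificate is already in $u^\ast$'s hands.

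Next I would invoke the clock-synchronized broadcast. At time $xT$ the validator $u^\ast$ sends $S=\langle Cert_{l-1,v'},l,v\rangle$ as specified in Equation~(\ref{equation:state}); since $xT$ is after GST, this message reaches every honest validator within $\Delta$, hence by $xT+\Delta$. It then remains to show that each honest validator $u$ currently running at a height $h\le l-2$ treats $Cert_{l-1,v'}$ as a certificate that succeeds it, so that it synchronizes up to height $l-1$. Here I would appeal to Definition~\ref{def:succeed} together with Lemma~\ref{lemma:parallel}: the latter forces $u$ and $u^\ast$ to share identical vote-locks at every height up to $h-2$, so $u^\ast$'s certified chain and $u$'s locked chain already agree at height $h-2$. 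Depending on whether they also coincide at height $h-1$, either condition~(\romannumeral1) (no conflict with $VL_{h-1}^{(u)}$, and $l-1>h$) or condition~(\romannumeral3) (a conflict resolved by a larger block sharing the same parent) of the definition holds, yielding $Cert_{l-1,v'}\succ u$ in both cases; the synchronization routine of Algorithm~3 is then triggered and $u$ moves to height $l-1$.

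Finally I would assemble the cases: validators already at height $l-1$ or $l$ trivially meet the bound, and every validator below $l-1$ is dragged up to height $l-1$ by the argument above, all within one message delay after $xT$. I expect the main obstacle to be the verification that $Cert_{l-1,v'}$ succeeds \emph{every} laggard, since this is not a single condition but a case split in Definition~\ref{def:succeed} that must be reconciled with the two-level safety margin of Lemma~\ref{lemma:parallel}—in particular ruling out any configuration in which $u$'s lock at height $h-1$ is strictly larger than the certified chain, which would block both conditions. A secondary subtlety is the timing: the synchronization routine must fetch the missing block bodies from peers, so I would need to argue either that receipt of the certificate is what constitutes ``moving to height $l-1$'' (with block retrieval proceeding in the background) or that the claimed window should be read up to a constant factor of $\Delta$. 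The remaining steps—existence of $u^\ast$, identification of its latest certificate, and post-GST delivery—are routine.
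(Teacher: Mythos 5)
Your proposal follows essentially the same route as the paper's own (very brief) proof: honest validators at the top height $l$ broadcast \emph{State} messages carrying their vote certificates for height $l-1$ at time $xT$, these messages are delivered to every honest validator within $\Delta$ after GST, and any laggard at height $l'\le l-2$ finds the certificate succeeding itself and synchronizes upward, giving the bound of height $l-1$. Your write-up is in fact more careful than the paper's three-sentence argument—you explicitly verify the succession condition of Definition~\ref{def:succeed} via Lemma~\ref{lemma:parallel} and flag the block-retrieval timing subtlety, both of which the paper silently assumes—so it is correct and matches the intended proof.
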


\begin{proof}
	The honest validators running at height $l$ broadcast \emph{State} messages with largest vote certificates for height $l-1$. The validators staying height $l'$($l'\le l-2$) synchronize data and move to height $l$ upon receiving these \emph{State} messages with vote certificates succeeding themselves. Therefore, all honest validators run at height not less than $l-1$.
\end{proof}

\begin{lemma} \label{lemma:sync}
	Let $xT$ be a time after GST. It supposes that $l$ is the largest height number of all honest validators. Then all honest validators must move to height $l$ eventually. 
\end{lemma}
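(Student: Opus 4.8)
The plan is to continue from Lemma~\ref{lemma:height}, which already places every honest validator at height at least $l-1$ by $xT+\Delta$; what remains is to drive the validators stuck at exactly $l-1$ up to height $l$. The structural fact I would exploit is that the protocol requires a validator running at height $l$ to keep voting and proposing for the previous height $l-1$. Hence, once Lemma~\ref{lemma:height} applies, all honest validators --- whether currently at $l-1$ or at $l$ --- are active at height $l-1$, so at least $n-f$ honest participants drive the round(s) there.

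I would then show that these participants complete consensus at $l-1$, producing a proposal $P_{l,0}$ that carries everyone to height $l$. Since $l$ is the maximum height, some honest validator $w$ already holds a valid certificate $Cert_{l-1,v'}$ for a block $B$ at height $l-1$; broadcasting it in the periodic \emph{State} message makes every honest validator at $l-1$ vote-lock onto $B$. After GST the network is synchronous, and by the random-collector analysis of Section~\ref{subsec:viewchange} an honest, $B$-propose-locked collector for height $l-1$ is selected after a constant number of view changes; it proposes $B$, gathers $n-f$ votes, forms $Cert_{l-1}$, and issues $P_{l,0}$, which all honest validators at $l-1$ receive within $\Delta$ and use to advance to height $l$.

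The step I expect to be the main obstacle is establishing \emph{consistency} of the locks at height $l-1$: I must rule out honest participants splitting their votes across conflicting blocks and stalling, a possibility that cannot be dismissed syntactically because honesty may change between rounds. I would discharge it with the locking lemmas --- Lemma~\ref{lemma:parallel} forces agreement on the common ancestor at height $l-2$, so $B$ conflicts with no honest lock there; Lemma~\ref{lemma:unlock} shows that no certificate for a block conflicting with $B$ can ever be assembled once $w$ holds $Cert_{l-1,v'}$; and Lemma~\ref{lemma:lock} then pins every honest validator's admissible vote at $l-1$ to $B$. The timing details (that the decisive honest collector arises after GST, and that the doubling timeouts eventually let $n-f$ honest validators share a round) are routine.
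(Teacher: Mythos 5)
Your high-level decomposition matches the paper's (reduce to height $l-1$ via Lemma~\ref{lemma:height}, then argue that consensus at $l-1$ completes and carries everyone to $l$), but the mechanism you use to complete consensus at $l-1$ has a genuine gap. You assume that broadcasting $w$'s certificate $Cert_{l-1,v'}$ for $B$ in a \emph{State} message vote-locks every honest validator on $B$, so that a single honest collector proposing $B$ then gathers $n-f$ votes. This fails because $Cert_{l-1,v'}$ need not be the \emph{highest} certificate at height $l-1$: after $w$ moved up, validators left at $l-1$ can legitimately have produced a certificate for, or voted for, a comparable block $B''>B$ in a later round, and by the vote-lock/vote-unlock rules of Section~\ref{subsec:lock} such validators neither lock on $B$ nor can be unlocked by a certificate for the \emph{smaller} block $B$ (unlocking requires seeing a certificate for a block larger than the one locked on). So the votes can still split and your single collector can stall. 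Your attempt to rule this out via Lemma~\ref{lemma:unlock} is a misapplication: that lemma only makes locks permanent at height $l-2$ relative to a validator running at height $l$ (i.e., two heights below the maximum); it says nothing against multiple comparable certificates coexisting at height $l-1$, which is exactly the situation the locking machinery is designed to tolerate.

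The paper closes this hole differently, and needs \emph{two} consecutive honest collectors rather than one. Every vote message carries its sender's highest certificate (Equation~(\ref{subequation:vote})), so after the doubling timeouts bring all honest validators into a common round, the first honest collector $C_{l-1,r'}$ learns the globally highest certificate among honest validators; if it cannot assemble $n-f$ votes (because of exactly the split above), it re-proposes the block matching that highest certificate, and \emph{this} proposal unlocks and re-locks every honest validator, since its certificate dominates all of their locks. Only then can the next collector $C_{l-1,r'+1}$, if honest, collect $n-f$ aligned votes and form the certificate that moves everyone to height $l$; the VRF analysis of Section~\ref{subsec:viewchange} guarantees two successive honest collectors within a constant number of rounds except with negligible probability. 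Your proof is missing this realignment round, which is the crux of the paper's argument.
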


\begin{proof}
	If all honest validators run at height $l$, the lemma is held. Otherwise, there must be valiators stay at height less than $l$. According to Lemma \ref{lemma:height}, all honest validators move to at least height $l-1$ by $xT+\Delta$. The validators running at height $l-1$ will move to higher height after synchronization upon receiving certificate succeeding itself. In addition, all validators participant in consensus for height $l-1$ even some of them run at height $l$. Due to the doubling of \emph{Timeout}, all validators enter same round after a round number $r$ at height $l-1$ forever. With negligible probability, there must be honest collectors for two successive rounds, say $r', r'+1$ such that $r'>r$. In round ${R}_{l-1,r'}$, the certificate $Cert$ owned by honest validator for a round at height $l-1$ that can vote-unlocks and propose-unlocks all other honest validators must be sent to collector $C_{l-1,r'}$. If $C_{l-1,r'}$ can not derive a vote certificate, it proposes the block matched $Cert$ that will be accepted by all validators. Hence, collector $C_{l-1,r'+1}$ can produce a valid vote certificate which let all honest validators move to height $l$. If we set multiple collectors, as long as one of $k$ collectors for round ${R}_{l-1,r'+1}$ is the honest, validators move to height $l$ and the lemma is held. 
\end{proof}

\begin{lemma} \label{lemma:move}
	It supposes that $l$ is the highest height of all honest validators. At least one honest validator must move to height $l+1$ eventually. 
\end{lemma}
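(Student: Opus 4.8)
The plan is to bootstrap from Lemma~\ref{lemma:sync}: since $l$ is the highest height attained by any honest validator, that lemma already forces \emph{all} honest validators to converge to height $l$. Once this has happened, the configuration at height $l$ is structurally identical to the one at height $l-1$ analysed in Lemma~\ref{lemma:sync}, so I would replay the same two-consecutive-honest-collector argument one level up, this time to assemble a certificate $Cert_{l,\cdot}$ that carries its producing collector forward into height $l+1$.

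Concretely, I would proceed as follows. First, invoke Lemma~\ref{lemma:sync} to fix a post-GST time by which every honest validator runs at height $l$; because $n\ge 4f+1$ there are at least $n-f\ge 3f+1$ such validators, all now participating in the consensus for height $l$. Second, appeal to the doubling of $TO_{p}$: after finitely many view changes the timeout exceeds the message delay $\Delta$, so from some round $r$ onward all honest validators occupy the same round at height $l$ and stay synchronised. Third, use VRF-based collector selection to locate two consecutive rounds $r',r'+1$ (with $r'>r$) whose collectors $C_{l,r'}$ and $C_{l,r'+1}$ are both honest; as each collector is malicious with probability at most $f/n<1/4$, two consecutive honest collectors arise after a constant expected number of rounds, hence almost surely given the infinitely many rounds available.

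Fourth, I would show that round $r'$ unlocks the honest validators onto a common block. The honest validator holding the globally highest vote certificate forwards it to $C_{l,r'}$; if $C_{l,r'}$ cannot gather $n-f$ votes, it instead proposes the block matching that certificate, whereupon the \emph{Vote-Unlock} and \emph{Propose-Unlock} rules---together with the comparability supplied by Lemmas~\ref{lemma:lock} and~\ref{lemma:parallel}---let every honest validator unlock and re-lock onto this block. Consequently, in round $r'+1$ the honest collector $C_{l,r'+1}$ receives at least $n-f$ consistent votes for a single height-$l$ block, derives a valid $Cert_{l,r'+1}$, and, since $C_{l,r'+1}=L_{l+1,0}$, enters the \emph{Propose} step for height $l+1$ and broadcasts $P_{l+1,0}$. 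Receiving its own proposal, this honest collector advances to height $l+1$, which is exactly the claim.

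The hard part will be Step four: arguing that the block matching the globally highest certificate conflicts with no honest validator's lock, so that nobody is forced to abstain and a quorum of $n-f$ votes is genuinely attainable in round $r'+1$. This hinges on combining the changing-honesty bound (at most $f$ deviating nodes out of $n\ge 4f+1$) with the unlocking rules to prove that the highest certified block dominates every conflicting lock; the delicate point is that round $r'$ may legitimately certify nothing, its only role being to disseminate the dominating block so that safety-preserving unlocking precedes the genuinely progress-making round $r'+1$.
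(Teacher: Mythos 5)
Your proof is correct, but it takes a genuinely different route from the paper's. The paper's own proof of this lemma is a short two-case argument: (i) if more than $n-f$ validators already run at height $l$ and a vote certificate for some height-$l$ block has been derived, then the moment anyone broadcasts it every honest validator moves to height $l+1$; (ii) otherwise it invokes Lemma \ref{lemma:sync} and timeout doubling exactly as you do, but then asserts that a \emph{single} honest collector after round ${R}_{l,r}$ suffices to advance to height $l+1$, adding the remark that even a malicious collector who partitions the honest validators (sending its proposal to only some of them) still pushes at least one honest validator to height $l+1$. You instead replay the two-consecutive-honest-collector machinery from the paper's proof of Lemma \ref{lemma:sync} one level up: round $r'$ serves only to disseminate the block matching the globally highest certificate and unlock everyone, round $r'+1$ assembles the $n-f$ quorum, and since $C_{l,r'+1}=L_{l+1,0}$ that honest collector itself enters height $l+1$. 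What your version buys is robustness exactly at the point you flag as the hard part: if a height-$l$ certificate was derived but only partially disseminated, some honest validators are vote-locked at height $l$, and the paper's single-honest-collector step is then not obviously sound, since a fresh proposal cannot collect votes from validators locked on a different block; your dissemination-then-quorum structure (or, in that scenario, the locked validator's certificate directly triggering synchronization at the honest collector) closes this gap. What the paper's version buys is brevity and a weaker demand on the VRF — one honest collector instead of two consecutive ones, a constant-factor difference in expected rounds — at the cost of leaving the interaction with the locking mechanism implicit.
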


\begin{proof}
	If there are more than $n-f$ validators running at height $l$, and a vote certificate $Cert$ is derived for round at height $l$. When $Cert$ is broadcast by anyone,  all honest validators move to height $l+1$ at once. Otherwise, according to Lemma \ref{lemma:sync}, all honest validators must move to height $l$. We suppose the $r$ is the maximum round number of all honest validators when they all move to height $l$. Then all honest validator will enter  same round  after round $r$ forever due to the doubling of timeout. After round ${R}_{l,r}$, as long as the collector is honest, there must be one honest validator that moves to height $l+1$. Besides, there is at least one honest validator moving to height $l+1$ if the malicious collector partitions honest validators deliberately by sending message to part of honest validators. Hence, the lemma is held.
\end{proof}

According to the proofs for above three lemmas, there must be honest validator moving to larger height forever, therefore the consensus for each block height must  terminate within finite time in a partially synchronous network. 
\end{appendices}

\end{document}